\documentclass[conference]{IEEEtran}
\IEEEoverridecommandlockouts
\usepackage{bm}
\usepackage{cite}
\usepackage{amsmath,amssymb,amsfonts}
\usepackage{algorithmic}
\usepackage{graphicx}
\usepackage{textcomp}
\usepackage{xcolor}
\usepackage[colorinlistoftodos]{todonotes}
\usepackage{url}

\newcommand*\mcchain{\mathrel{-\mkern-3mu{\circ}\mkern-3mu-}}

\usepackage{titlesec}

\titleformat{\paragraph}
  {\normalfont\normalsize\bfseries}
  {\theparagraph}
  {1em}
  {}

\titlespacing*{\paragraph}
  {0pt}
  {3.25ex plus 1ex minus .2ex}
  {1.5ex plus .2ex}

\allowdisplaybreaks

\def\BibTeX{{\rm B\kern-.05em{\sc i\kern-.025em b}\kern-.08em
    T\kern-.1667em\lower.7ex\hbox{E}\kern-.125emX}}
\usepackage{amsmath,amssymb,amsthm}
\usepackage{tikz,enumerate}
\usepackage{cite,mathtools}
\theoremstyle{plain}

\newtheorem{theorem}{Theorem} 
 
\newtheorem{corollary}{Corollary}

\newtheorem{lemma}{Lemma}

\theoremstyle{remark}

\newtheorem{remark}{Remark}

\theoremstyle{definition}
\newtheorem{definition}{Definition}

\begin{document}
\title{The Capacity Region for Classes of Sum-Broadcast Channels}
\author{%
 \IEEEauthorblockN{Amin Gohari, Yi Liu and Chandra Nair}
\IEEEauthorblockA{Department of Information Engineering\\
The Chinese University of Hong Kong\\
Sha Tin, N.T., Hong Kong\\
Email: \{agohari, ly023, chandra\}@ie.cuhk.edu.hk}}
\onecolumn

\pagestyle{plain}      
\pagenumbering{arabic} 

\maketitle

\begin{abstract}
We compute the capacity region of a sum of broadcast channels whose components are degraded, less-noisy, more-capable, deterministic, or semi-deterministic. 
We achieve this by showing that an auxiliary-receiver outer bound, previously introduced by some of the authors, matches Marton's inner bound. This result generalizes a previously known result for the sum of two reversely degraded broadcast channels due to El Gamal (1980). Moreover, we define a class of \textit{primary} broadcast channels and show an analogous result for the sum of primary broadcast channels.  
\end{abstract}
\begin{IEEEkeywords}
Shannon theory, outer bound, broadcast channel, auxiliary receiver
\end{IEEEkeywords}

\section{Introduction}
As depicted in Figure~\ref{fig:1}, a two-receiver discrete-memoryless broadcast channel $(T_{YZ|X},\mathcal{X},\mathcal{Y},\mathcal{Z})$ is a mathematical model in which the sender aims to communicate a private message $M_1$ to receiver $Y$, a private message $M_2$ to receiver $Z$, and a common message $M_0$ to both receivers (see \cite[Chapters 5, 8]{gk12} for detailed definitions). This model was first proposed by Cover \cite{cov72}. However, a computable characterization of the capacity region remains a fundamental open problem in the field of network information theory. Let $\mathcal{C}(T)$ denote the capacity region for the broadcast channel $T_{YZ|X}$.

\tikzstyle{box}=[rectangle, draw, text centered]
\tikzstyle{line} = [draw, -latex']

\begin{figure}[htb]
    \centering
    \begin{tikzpicture}[scale=1.2, every node/.style={scale=1}]
\node at (-0.2,-0.6) {$M_0, M_1,M_2$}; 
\draw [->,thick] (-1.2,-0.8) -- (0.8,-0.8);
\draw (0.8,-0.2) rectangle +(1.8,-1.2); \node at (1.65,-0.75) {Transmitter};
\draw [->,thick] (2.6,-0.8)-- (3.6,-0.8); \node at (3.1, -0.6) {$ X^n$};
\draw (3.6, -2.2) rectangle +(1.8,2.8); \node at (4.5,-0.8) {$T_{YZ|X}^{\otimes n}$};
\draw [->,thick] (5.4, 0) --(6.4, 0); \node at (6.0,0.3) {$Y^n$};
\draw [->,thick] (5.4, -1.6) --(6.4, -1.6); \node at (6.0,-1.3) {$Z^n$};
\draw (6.4,-0.6) rectangle +(1.8,1.2); \node at (7.3,0) {Receiver Y};
\draw (6.4,-2.2) rectangle +(1.8,1.2); \node at (7.3,-1.6) {Receiver Z};
\draw [->,thick] (8.2, 0) --(10.2,0); \node at (9.2, 0.3) {$\hat{ M}_0,\hat{ M}_1$};
\draw [->,thick] (8.2, -1.6) --(10.2,-1.6); \node at (9.2, -1.3) {$\hat{ M}_0,\hat{ M}_2$};
\end{tikzpicture}
    \caption{Two-receiver broadcast communication system.}
    \label{fig:1}
\end{figure}

\begin{theorem}
[Marton's inner bound, \cite{mar79}]
\label{thm:martonib}
The union of non-negative rate triples $(R_0, R_1, R_2)$ satisfying the constraints
\begin{subequations}
\begin{align}
    R_0 & \leq \min \{I(W;Y),I(W;Z) \},\label{eqnM1}\\ 
    R_0+R_1 & \leq I(U,W;Y),\label{eqnM2}\\
    R_0+R_2 & \leq I(V,W;Z),\label{eqnM3}\\
    R_0+R_1 + R_2 & \leq \min \{I(W;Y),I(W;Z) \} + I(U;Y|W)+ I(V;Z|W) - I(U;V|W),\label{eqnM4}
\end{align}
\end{subequations}
for any triple of random variables $(U,V,W)$ such that $(U,V,W)\mcchain  X\mcchain (Y,Z)$ is achievable for a broadcast channel $T(y,z|x)$. We denote this region as $\mathcal{M}(T)$.
\end{theorem}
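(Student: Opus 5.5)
We would prove achievability of every rate triple satisfying \eqref{eqnM1}--\eqref{eqnM4} by the standard random-coding argument: superposition coding for the common message combined with Marton's binning. First, split the private messages as $M_1=(M_{10},M_{11})$ and $M_2=(M_{20},M_{22})$ with rates $R_1=R_{10}+R_{11}$, $R_2=R_{20}+R_{22}$. Fix $p(u,v,w)$ and a deterministic map $x(u,v,w)$; we may assume this form by absorbing the randomness of $X$ given $(U,V,W)$ into the auxiliaries (or use a channel prefix, which does not change the region's mutual information terms). Generate $2^{n(R_0+R_{10}+R_{20})}$ i.i.d.\ sequences $w^n \sim \prod p_W$, indexed by the combined message $(m_0,m_{10},m_{20})$. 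For each $w^n$, generate $2^{n(R_{11}+\tilde R_1)}$ sequences $u^n \sim \prod p_{U|W}$ in bins indexed by $m_{11}$, and $2^{n(R_{22}+\tilde R_2)}$ sequences $v^n\sim \prod p_{V|W}$ in bins indexed by $m_{22}$.

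The encoder looks in the bins $(m_{11},m_{22})$ for a jointly typical pair $(u^n,v^n)$ given $w^n$. By the mutual covering lemma this succeeds with high probability when $\tilde R_1+\tilde R_2> I(U;V|W)$. It then sends $x^n=x(u^n,v^n,w^n)$. Receiver $Y$ decodes $(w^n,u^n)$ by joint typicality. Error analysis via the packing lemma gives
\[
R_{11}+\tilde R_1 < I(U;Y|W)
\]
and
\[
R_0+R_{10}+R_{20}+R_{11}+\tilde R_1 < I(U,W;Y).
\]
Symmetrically, receiver $Z$ requires
\[
R_{22}+\tilde R_2 < I(V;Z|W)
\]
and
\[
R_0+R_{10}+R_{20}+R_{22}+\tilde R_2 < I(V,W;Z).
\]

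The final step is Fourier--Motzkin elimination of $\tilde R_1,\tilde R_2,R_{10},R_{11},R_{20},R_{22}$, which we expect to be the main obstacle, or at least the bookkeeping-heavy part. Eliminating the binning rates yields the sum constraint with $-I(U;V|W)$. Eliminating the splits then produces constraints such as
\[
R_0+R_1+R_2\le I(W;Y)+I(U;Y|W)+I(V;Z|W)-I(U;V|W),
\]
together with its $Z$-counterpart. These two combine into the minimum in \eqref{eqnM4}. The remaining inequalities reduce to \eqref{eqnM1}--\eqref{eqnM3}.

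Some extra inequalities also appear, for example
\[
R_0+R_1+R_2 \le I(U,W;Y)+I(V,W;Z)-I(U;V|W).
\]
We would argue these are either implied by the others or removable. Removal works by noting that if such a constraint is active, a different choice of auxiliaries yields the same rates; for instance, one can set $U'=(U,W)$ or appeal to the standard result that the region with these extra constraints has the same union over $p(u,v,w)$. Making this redundancy argument precise is the subtle point. We would handle it by case analysis on which of $I(W;Y)$ and $I(W;Z)$ is smaller, following El Gamal--Kim Chapter~8.
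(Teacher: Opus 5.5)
The paper gives no proof of this statement; it is quoted from Marton~\cite{mar79}. Your superposition-plus-binning scheme with rate splitting is the standard proof, and your coding and error analysis are fine. Your description of the last step is wrong, however, and this puts the difficulty in the wrong place. Eliminating $\tilde R_1,\tilde R_2,R_{11},R_{22}$ from your constraints gives (after closure) exactly the following five inequalities:
\begin{align*}
R_0+R_1&\le I(U,W;Y), \qquad R_0+R_2\le I(V,W;Z),\\
R_0+R_1+R_2&\le I(U,W;Y)+I(V;Z|W)-I(U;V|W),\\
R_0+R_1+R_2&\le I(U;Y|W)+I(V,W;Z)-I(U;V|W),\\
2R_0+R_1+R_2&\le I(U,W;Y)+I(V,W;Z)-I(U;V|W).
\end{align*}
These hold subject to the feasibility condition $I(U;Y|W)+I(V;Z|W)\ge I(U;V|W)$. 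Constraint \eqref{eqnM1} does not come out of the elimination, and it cannot: with $W$ constant, your scheme still carries $R_0>0$ inside the $u^n$ and $v^n$ codewords. Also, the genuinely extra inequality is the one with $2R_0$ on the left. The version you wrote, with $R_0+R_1+R_2$ on the left, is trivially implied because $I(V;Z|W)\le I(V,W;Z)$.

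Since the theorem only claims achievability, you need the region defined by \eqref{eqnM1}--\eqref{eqnM4} to sit inside the region above. This holds for each fixed distribution. The two sum-rate bounds are \eqref{eqnM4} with the minimum resolved either way. Adding \eqref{eqnM1} to \eqref{eqnM4} gives
\begin{align*}
2R_0+R_1+R_2&\le 2\min\{I(W;Y),I(W;Z)\}+I(U;Y|W)+I(V;Z|W)-I(U;V|W)\\
&\le I(U,W;Y)+I(V,W;Z)-I(U;V|W).
\end{align*}
So the change of auxiliaries and the case analysis you anticipate are unnecessary. They would matter only for the reverse inclusion (that imposing \eqref{eqnM1} costs nothing in the union), which the statement does not assert.

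One case needs a remark: when $I(U;Y|W)+I(V;Z|W)<I(U;V|W)$, the binning constraints are infeasible. There, \eqref{eqnM4} gives $R_0+R_1+R_2<\min\{I(W;Y),I(W;Z)\}$. So the triple is achieved with $U,V$ constant, by sending all messages on the $w^n$ codebook.
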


Marton's inner bound \cite{mar79} is the best-known achievable region for two-receiver broadcast channels. Several studies have shown that Marton’s inner bound coincides with the capacity region for specific classes of broadcast channels; see  \cite{bergmans73,gallager74,ak75,abbas79,GelPin80,nair10,ggny14,nke16,wss06,gn14}. In several of these cases, Marton's inner bound also coincides with the following outer bound:

\begin{theorem}[UVW-Outer Bound, \cite{nai11arx}]
\label{thm:uvwob}
 If a rate triple $(R_0,R_1,R_2)$ is achievable for a broadcast channel $T(y,z|x)$, then it must satisfy the inequalities
    \begin{subequations}
        \begin{align}
        R_0 & \leq \min \{I(W;Y),I(W;Z) \},\label{uv1}\\ 
        R_0+R_1 & \leq \min \{I(W;Y),I(W;Z) \}+I(U;Y|W),\label{uv2}\\
        R_0+R_2 & \leq \min \{I(W;Y),I(W;Z) \}+I(V;Z|W),\label{uv3}\\
        R_0+R_1 + R_2 & \leq \min \{I(W;Y),I(W;Z) \} + I(U;Y|W)+ I(X;Z|U, W),\label{uv4}\\
        R_0+R_1 + R_2 & \leq \min \{I(W;Y),I(W;Z) \} + I(V;Z|W)+ I(X;Y|V, W),\label{uv5}
    \end{align}
    \end{subequations}
    for some pmf $p(u,v,w,x)$. Further, it suffices to consider $(U,V,W)$ satisfying $|\mathcal{W}| \leq |\mathcal{X}|+5, |\mathcal{U}|\leq |\mathcal{X}|+1, |\mathcal{V}|\leq |\mathcal{X}| + 1$. We denote this region as $\mathcal{O}_{UVW}(T)$.
    \end{theorem}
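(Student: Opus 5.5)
We prove the UVW outer bound by the standard converse method: Fano's inequality, the Csisz\'ar sum identity, and a time-sharing variable. Take an achievable code with vanishing error. Fano's inequality gives $n(R_0+R_1) \le I(M_0,M_1;Y^n)+n\epsilon_n$, $n(R_0+R_2)\le I(M_0,M_2;Z^n)+n\epsilon_n$, and $nR_0 \le \min\{I(M_0;Y^n),I(M_0;Z^n)\}+n\epsilon_n$. For each $i$ we define
\[ W_i=(M_0,Y^{i-1},Z_{i+1}^n),\quad U_i=M_1,\quad V_i=M_2. \]
We then take $Q$ uniform on $\{1,\dots,n\}$ and set $W=(W_Q,Q)$, $U=U_Q$, $V=V_Q$, $X=X_Q$, $Y=Y_Q$, $Z=Z_Q$. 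The Markov chain $(U,V,W)\mcchain X\mcchain (Y,Z)$ holds because the channel is memoryless and used without feedback.

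\textbf{Common-message bound.} Bound $I(M_0;Y^n)\le\sum_i I(M_0,Y^{i-1};Y_i)\le \sum_i I(W_i;Y_i)$. For the $Z$ side, expand $I(M_0;Z^n)$ backwards and add the terms $Y^{i-1}$, which can only increase the mutual information. Since $Q$ is independent of the channel, we pass to $I(W;Y)$ and $I(W;Z)$.

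\textbf{Bounds \eqref{uv2}--\eqref{uv3}.} Split $I(M_0,M_1;Y^n)=I(M_0;Y^n)+I(M_1;Y^n|M_0)$. Apply the chain rule to obtain $\sum_i I(M_1;Y_i|M_0,Y^{i-1})$. Insert $Z_{i+1}^n$ and use the Csisz\'ar sum identity, which telescopes against the difference $I(M_0;Z^n)-I(M_0;Y^n)$. This lets the same sum also be charged against $\min\{\cdot,\cdot\}$, leaving $\sum_i I(M_1;Y_i|W_i)$. The bound for receiver $Z$ is symmetric, with terms $I(M_2;Z_i|M_0,Y^{i-1},Z_{i+1}^n)$. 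This common-message bookkeeping with the minimum is the delicate step. The trick is to write
\[ I(M_0,M_1;Y^n)=\sum_i\big[I(W_i;Y_i)+I(M_1;Y_i|W_i)\big]+\Delta, \]
where $\Delta$ collects the Csisz\'ar sum terms. Then we check that $\Delta\le 0$ after combining with the analogous identity on the $Z$ side, or alternatively that the $Z$-side common term can be swapped in.

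\textbf{Sum-rate bounds \eqref{uv4}--\eqref{uv5}.} Start from
\[ n(R_0+R_1+R_2)\le I(M_0,M_1;Y^n)+I(M_2;Z^n|M_0,M_1)+n\epsilon_n. \]
Bound the second term by $\sum_i I(X_i;Z_i|M_0,M_1,Y^{i-1},Z_{i+1}^n)$: the inserted $Y^{i-1}$ are handled by the Csisz\'ar identity, and $M_2$ is then replaced by $X_i$ using the Markov chain. Combining this with the expansion above gives $\min\{\cdot,\cdot\}+I(U;Y|W)+I(X;Z|U,W)$. Swapping the roles of the two receivers gives \eqref{uv5}.

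\textbf{Main obstacle and cardinalities.} The main obstacle is getting the Csisz\'ar sum terms to cancel exactly when the common message and the minimum in \eqref{uv1}--\eqref{uv5} are present. I would handle this by a careful term-by-term identity in which $Y^{i-1}$ and $Z_{i+1}^n$ are included consistently in $W_i$. The cardinality bounds follow from the Fenchel--Eggleston--Carath\'eodory support lemma. For $W$, we preserve $p(x)$ ($|\mathcal X|-1$ constraints) plus the six functionals $I(W;Y)$, $I(W;Z)$, $I(U;Y|W)$, $I(V;Z|W)$, $I(X;Z|U,W)$, $I(X;Y|V,W)$, which gives $|\mathcal{X}|+5$. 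For $U$ and $V$, the same lemma applied conditionally on $W$ preserves $p(x|w)$ together with two functionals, which gives $|\mathcal X|+1$.
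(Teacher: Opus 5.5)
The paper does not prove this statement. It quotes it from \cite{nai11arx}, and its only connection to it is Remark~\ref{rem:con}, which recovers the bound as the $G=Y$, $K$ constant instance of Theorem~\ref{th:obp}. Your direct converse is the standard route, and several parts of it are sound: the identifications $W_i=(M_0,Y^{i-1},Z_{i+1}^n)$, $U_i=M_1$, $V_i=M_2$; the Markov chain; time sharing with $W=(W_Q,Q)$; and the support-lemma counts $|\mathcal X|+5$ and $|\mathcal X|+1$.

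The gap is the step you flag yourself and leave open: the minimum. Starting from $n(R_0+R_1)\le I(M_0,M_1;Y^n)+n\epsilon_n$, you can only reach the $I(W;Y)$ branch. The inequality $I(M_0,M_1;Y^n)\le\sum_i[I(W_i;Z_i)+I(M_1;Y_i|W_i)]$ is false as an information inequality: if $Z$ is constant, its right side is just $I(M_1;Y^n|M_0)$. So no rearrangement of Csisz\'ar terms (``$\Delta\le 0$ after combining'') can produce the $I(W;Z)$ branch. Decodability of $M_0$ at $Z$ must enter through Fano. Your starting point for \eqref{uv4} has the same defect.

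The missing step is to apply Fano message by message, using independence of the messages:
\begin{itemize}
\item $nR_0\le\min\{I(M_0;Y^n),I(M_0;Z^n)\}+n\epsilon_n$;
\item $nR_1=H(M_1|M_0)\le I(M_1;Y^n|M_0)+n\epsilon_n$;
\item $nR_2\le I(M_2;Z^n|M_0,M_1)+n\epsilon_n$.
\end{itemize}
Then
\begin{align*}
I(M_0;Y^n)&\le\sum_i\big[I(W_i;Y_i)-I(Z_{i+1}^n;Y_i|M_0,Y^{i-1})\big],\\
I(M_0;Z^n)&\le\sum_i\big[I(W_i;Z_i)-I(Y^{i-1};Z_i|M_0,Z_{i+1}^n)\big],\\
I(M_1;Y^n|M_0)&=\sum_i\big[I(M_1;Y_i|W_i)+I(Z_{i+1}^n;Y_i|M_0,Y^{i-1})-I(Z_{i+1}^n;Y_i|M_0,M_1,Y^{i-1})\big].
\end{align*}
The positive cross term in the third line cancels the negative term in the first line directly. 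It cancels the negative term in the second line via the Csisz\'ar identity conditioned on $M_0$. This is the precise sense in which your ``same sum is charged against the minimum.''

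For \eqref{uv2}, simply drop the last (negative) term. For \eqref{uv4}, keep it. Write
$I(M_2;Z^n|M_0,M_1)\le\sum_i[I(Y^{i-1};Z_i|M_0,M_1,Z_{i+1}^n)+I(X_i;Z_i|M_1,W_i)]$.
The first sum here is cancelled by that retained term via the Csisz\'ar identity conditioned on $(M_0,M_1)$.

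So you need two Csisz\'ar identities with different conditioning. The sum-rate bound must start from $\min\{I(M_0;Y^n),I(M_0;Z^n)\}+I(M_1;Y^n|M_0)+I(M_2;Z^n|M_0,M_1)$, not from $I(M_0,M_1;Y^n)+I(M_2;Z^n|M_0,M_1)$. Bounds \eqref{uv3} and \eqref{uv5} follow by the symmetric argument with the same $W_i$, expanding $Z^n$ from the end.
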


    \begin{remark}\label{RmkKM}
Removing either the sum-rate constraint \eqref{uv4} or \eqref{uv5} yields the two Körner-Marton outer bounds. \hfill $\diamondsuit$
    \end{remark}
    
    \begin{remark}
The UVW outer bound, like Marton's inner bound,  matches the capacity region for certain classes of broadcast channels, \cite{bergmans73,gallager74,ak75,abbas79,GelPin80,nair10,nke16,wss06,gn14}. In all these cases, $\mathcal{O}_{UVW}(T) = \mathcal{C}(T) = \mathcal{M}(T)$. \hfill $\diamondsuit$
\end{remark}

In \cite{g80}, El Gamal studied product and sum broadcast channels and characterized the capacity region when the component broadcast channels are reversely degraded. For the product and sum broadcast channels considered in \cite{g80}, $\mathcal{O}_{UVW}(T) = \mathcal{C}(T) = \mathcal{M}(T)$. In \cite{ggny14}, a class of product broadcast channels was identified for which Marton's inner bound (Theorem \ref{thm:martonib}) equals the capacity region. Notably, the UVW outer bound yields a strictly larger region, with $\mathcal{O}_{UVW}(T) \supsetneq \mathcal{C}(T) = \mathcal{M}(T)$. The authors demonstrated that Marton's inner bound is tight for both reversely more-capable product broadcast channels and reversely semi-deterministic product broadcast channels. There, the authors derived a new outer bound that relied on the product nature of the underlying broadcast channel.

In Section~\ref{sec:example_reversely-semi}, we provide a sum-broadcast channel with semi‑deterministic components for which $\mathcal{O}_{UVW}(T) \supsetneq \mathcal{M}(T)$. Therefore, the capacity results established in this paper cannot be derived by the standard approach of showing that the $UVW$ outer bound and Marton's inner bound coincide. Moreover, this sum-broadcast channel does not admit a product decomposition, so the outer bound in \cite{ggny14} does not apply.

Recently, the auxiliary receiver approach of \cite{gn22} has proved effective in obtaining sharp capacity results for various problems in network information theory, such as the relay channel and the interference channel (see, e.g., \cite{el2022strengthened,10619605,gn22,wen2024new,chen2025differential}). In this paper, we employ the auxiliary receiver outer bound of \cite{gn22} to establish the capacity region for several non-trivial classes of sum-broadcast channels, including examples where $\mathcal{O}_{UVW}(T) \supsetneq \mathcal{M}(T)$. In particular, we prove that Marton's inner bound is tight for sum broadcast channels with more-capable or semi-deterministic components. 

\section{Preliminaries}
A discrete memoryless broadcast channel (DM-BC) with input alphabet \( \mathcal{X} \) and output alphabets \( \mathcal{Y} \) and \( \mathcal{Z} \) is characterized by the transition probability \( T(y,z | x) \).
 The DM-BC $T(y,z|x)$ is called \emph{semi-deterministic} \cite{mar79} if one of the outputs is a deterministic function of the input; that is, either \(Y = f(X)\) or \(Z = f(X)\) for some deterministic function $f$.  
A DM-BC is said to be \emph{less-noisy} \cite{km77} with $Y$ less-noisy than $Z$ (denoted by \(Y \overset{\mathrm{L.N.}}{\succeq} Z\)) if for every auxiliary random variable $U$ satisfying
$U \mcchain X \mcchain (Y,Z),$
we have
$I(U;Y) \ge I(U;Z)$. 
A DM-BC is said to be \emph{more-capable} \cite{km77} with $Y$ more-capable than $Z$ (denoted by \(Y \overset{\mathrm{M.C.}}{\succeq} Z\)) if for every input distribution $p(x)$, $I(X;Y) \ge I(X;Z)$.

\begin{remark}
Throughout this paper, $\overline\alpha$ denotes $1-\alpha$. All logarithms are in base two. The binary entropy function is defined as
\[H_2(q)=-q\log_2(q)-\overline q\log_2(\overline q), \qquad\forall q\in[0,1].\]
The following equation is useful in our arguments: for any $\lambda>0,\alpha,\beta\geq 0$
\begin{align}
   \max_{q\in [0,1]} \lambda H_2(q) + q \alpha + \overline{q} \beta =  \max_{q\in [0,1]} \lambda \Big(q \log_2 \frac{2^{{\alpha}/{\lambda}}}{q} + \overline{q}  \log_2 \frac{2^{{\beta}/{\lambda}}}{\overline{q}}\Big) = \lambda \log_2\left( 2^{{\alpha}/{\lambda}} +2^{{\beta}/{\lambda}}\right).\label{rem:opt}
\end{align} 
\hfill $\diamondsuit$
\end{remark}

\subsection{Weighted Sum-Rate of the Inner and Outer Bounds}

The (convex) capacity region of a broadcast channel $T_{YZ|X}$ is completely characterized by supporting hyperplanes of the form $\lambda_0 R_0 + \lambda_1 R_1 + \lambda_2 R_2$ for $\lambda_0 \geq \max(\lambda_1, \lambda_2) \geq 0$. This is because if $(R_0,R_1,R_2)$ is achievable, so are $(0, R_0+R_1,R_2)$ and $(0,R_1,R_0+R_2)$. If $\lambda_0<\max(\lambda_1,\lambda_2)$, the entire common rate $R_0$ can be moved to either $R_1$ or $R_2$; we may therefore increase $\lambda_0$ to $\max(\lambda_1,\lambda_2)$ without affecting the maximum weighted sum‑rate.

Take some $\lambda_0 \geq \lambda_1 \geq \lambda_2$. 
By multiplying \eqref{eqnM4} by $\lambda_2$, \eqref{eqnM2} by $(\lambda_1-\lambda_2)$ and \eqref{eqnM1} by $(\lambda_0-\lambda_1)$,  we derive the following inequality for any triple $(R_0,R_1,R_2)$ within Marton’s inner bound:
\begin{align}
   \lambda_0 R_0 + \lambda_1 R_1 + \lambda_2 R_2\leq &
   \max_{p(w,u,v,x)} (\lambda_0-\lambda_1+\lambda_2)\min(I(W;Y),I(W;Z))\label{eqnMm1}\\
    &\qquad\qquad+(\lambda_1-\lambda_2)I(U,W;Y)+\lambda_2(I(U;Y|W)+ I(V;Z|W)-I(U;V|W)).\nonumber
\end{align}
Similarly, for any $\lambda_0\geq \lambda_2\geq \lambda_1$, by multiplying \eqref{eqnM4} by $\lambda_1$, \eqref{eqnM3} by $(\lambda_2-\lambda_1)$ and \eqref{eqnM1} by $(\lambda_0-\lambda_2)$, we obtain
\begin{align}
   \lambda_0 R_0 + \lambda_1 R_1 + \lambda_2 R_2\leq &
   \max_{p(w,u,v,x)} (\lambda_0-\lambda_2+\lambda_1)\min(I(W;Y),I(W;Z))\label{eqnMm2}\\
    &\qquad\qquad+(\lambda_2-\lambda_1)I(V,W;Z)+\lambda_1(I(U;Y|W)+ I(V;Z|W)-I(U;V|W)).\nonumber
\end{align}
\begin{remark}
    In the definitions below, the subscript $1$ corresponds to the case $\lambda_0 \geq \lambda_1 \geq \lambda_2 \geq 0$, while the subscript $2$ corresponds to the case $\lambda_0 \geq \lambda_2 \geq \lambda_1 \geq 0$. \hfill$\diamondsuit$
\end{remark}

The following characterization is known \cite[Eq.(2)]{anantharam2018evaluation}:
\begin{lemma}\label{MartonCharacterization} For any broadcast channel $T(y,z|x)$, we have
    \begin{align*}
        & \max_{(R_0,R_1,R_2)\in \mathcal{M}(T)}\lambda_0 R_0+\lambda_1 R_1+\lambda_2 R_2=\min_{\alpha\in[0,1]} SR_{IB,1}^{\lambda_0,\lambda_1,\lambda_2,\alpha}(T),\quad \forall\lambda_0 \geq \lambda_1 \geq \lambda_2\geq 0,\\
        & \max_{(R_0,R_1,R_2)\in \mathcal{M}(T)}\lambda_0 R_0+\lambda_1 R_1+\lambda_2 R_2=\min_{\alpha\in[0,1]} SR_{IB,2}^{\lambda_0,\lambda_1,\lambda_2,\alpha}(T),\quad \forall\lambda_0 \geq \lambda_2 \geq \lambda_1\geq 0,
    \end{align*}
    where
    \begin{subequations}
        \begin{align}
        \nonumber SR_{IB,1}^{\lambda_0,\lambda_1,\lambda_2,\alpha}(T) :=& \max_{p(w,u,v,x)}\alpha (\lambda_0-\lambda_1+\lambda_2)I(W;Y)+\overline\alpha (\lambda_0-\lambda_1+\lambda_2)I(W;Z)\\
    &+(\lambda_1-\lambda_2)I(U,W;Y)+\lambda_2(I(U;Y|W)+ I(V;Z|W)-I(U;V|W)),\label{defSR_IB,1}\\
        \nonumber SR_{IB,2}^{\lambda_0,\lambda_1,\lambda_2,\alpha}(T) :=& \max_{p(w,u,v,x)} \alpha (\lambda_0-\lambda_2+\lambda_1)I(W;Y)+\overline\alpha (\lambda_0-\lambda_2+\lambda_1)I(W;Z)\\
    &+(\lambda_2-\lambda_1)I(V,W;Z)+\lambda_1(I(U;Y|W)+ I(V;Z|W)-I(U;V|W)).\label{defSR_IB,2}
    \end{align}
    \end{subequations}
\end{lemma}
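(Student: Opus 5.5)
We prove the first identity; the second follows by swapping the roles of $(Y,U,\lambda_1)$ and $(Z,V,\lambda_2)$. The plan is a minimax argument on the bound \eqref{eqnMm1}. We already know that, for $\lambda_0\ge\lambda_1\ge\lambda_2\ge 0$,
\[
\max_{\mathcal{M}(T)}\sum_i\lambda_iR_i\le \max_{p(w,u,v,x)}F(p),
\]
where $F(p)$ is the objective in \eqref{eqnMm1}, with the term $(\lambda_0-\lambda_1+\lambda_2)\min(I(W;Y),I(W;Z))$. Write $\min(a,b)=\min_{\alpha\in[0,1]}\bigl(\alpha a+\overline\alpha b\bigr)$. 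Then $F(p)=\min_\alpha G(p,\alpha)$, where $G(p,\alpha)$ is the objective inside \eqref{defSR_IB,1}. The claim therefore amounts to two facts: (i) the Marton weighted sum rate actually equals $\max_p F(p)$, and (ii) $\max_p\min_\alpha G=\min_\alpha\max_p G$.

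For (i), given $p(w,u,v,x)$ we exhibit a point of $\mathcal{M}(T)$ achieving $F(p)$. Set $R_2=I(U;Y|W)+I(V;Z|W)-I(U;V|W)-I(U;Y|W)$. More carefully, let $m=\min(I(W;Y),I(W;Z))$ and take
\[
R_0=m,\qquad R_1=I(U;Y|W),\qquad R_2=I(V;Z|W)-I(U;V|W).
\]
If $R_2<0$, replace $V$ by a constant; this only changes the $\lambda_2$ term favorably, or else we redistribute. We then check \eqref{eqnM1}–\eqref{eqnM4}. The value is $\lambda_0m+\lambda_1I(U;Y|W)+\lambda_2(\cdots)$. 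Comparing with $F(p)$, the difference is $(\lambda_1-\lambda_2)\bigl(I(W;Y)-m\bigr)\ge 0$, which goes the wrong way.

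Instead we use the standard observation that the corner point of the region defined by \eqref{eqnM1}–\eqref{eqnM4} attains the weighted combination exactly. Since $\lambda_0\ge\lambda_1\ge\lambda_2$, the LP over the polytope for fixed $p$ attains the value given by the chain multipliers $(\lambda_0-\lambda_1,\lambda_1-\lambda_2,\lambda_2)$. The delicate point is nonnegativity of the rates; by taking the union over $p$ (including choices that shrink $U,V$), the maximum of the LP over the union equals $\max_pF(p)$. I would cite or replicate the argument of \cite{anantharam2018evaluation} for this LP/corner-point step.

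For (ii), $\min_\alpha\max_pG\ge\max_p\min_\alpha G$ is trivial, so the work is the reverse inequality. $G$ is linear in $\alpha$, and $\alpha$ ranges over a compact convex set. In $p$, $G$ is not concave, so we concavify. Introduce a time-sharing variable $Q$ and absorb it into $W$: replacing $W$ by $(Q,W)$ shows that the set of achievable vectors
\[
\bigl(I(W;Y),\,I(W;Z),\,\text{rest}\bigr)
\]
is such that $\max_p G(p,\alpha)$ coincides with the maximum over its convex hull. Concretely, mixtures $\sum_q\pi_q(a_q,b_q,c_q)$ are dominated componentwise by values of some joint distribution with $W'=(Q,W)$: $I(QW;Y)\ge\sum\pi_qI(W;Y|Q=q)$, and the other terms condition on $W'$ exactly. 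Then Sion/von Neumann minimax applies to the bilinear function $\alpha a+\overline\alpha b+c$ over (convex hull)$\times[0,1]$, and the convex hull maximum of $\min_\alpha$ is achieved by an actual distribution via the domination above.

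The main obstacle is this domination step. The term $(\lambda_1-\lambda_2)I(U,W;Y)$ and the $I(W;\cdot)$ terms increase when $Q$ is revealed inside $W$, but with nonnegative coefficients, so the direction is right. Cardinality bounds ensure compactness, so all the max/min are attained.
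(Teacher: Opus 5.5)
Your step (ii) is sound. Absorbing the time-sharing variable into $W$ gives the componentwise domination you describe, since all coefficients are nonnegative when $\lambda_0\ge\lambda_1\ge\lambda_2$. The minimax theorem then yields $\max_pF(p)=\min_{\alpha}SR_{IB,1}^{\lambda_0,\lambda_1,\lambda_2,\alpha}(T)$.

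The gap is step (i), and it is not a technicality: the negative-$R_2$ corner you flag cannot be repaired by passing to the union over $p$. Write $A=I(U,W;Y)$ and $m=\min\{I(W;Y),I(W;Z)\}$, and let $S$ be the right-hand side of \eqref{eqnM4}, so that $F(p)=(\lambda_0-\lambda_1)m+(\lambda_1-\lambda_2)A+\lambda_2S$. Every point of the polytope at $p$ satisfies
\[
\lambda_0R_0+\lambda_1R_1+\lambda_2R_2=(\lambda_0-\lambda_1)R_0+\lambda_1(R_0+R_1+R_2)-(\lambda_1-\lambda_2)R_2\le(\lambda_0-\lambda_1)m+\lambda_1S=F(p)-(\lambda_1-\lambda_2)(A-S).
\]
So the polytope falls strictly short of $F(p)$ whenever $S<A$ and $\lambda_1>\lambda_2$, and nothing forces a maximizer of $F$ to have $S\ge A$.

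Here is a concrete failure. Let $Y=X\in\{0,1\}$, let $Z$ be the output of a binary erasure channel with erasure probability $\tfrac12$, and take $(\lambda_0,\lambda_1,\lambda_2)=(2,1,0)$. Use $I(W;Z)=\tfrac12I(W;X)$ and $I(V;Z|W)-I(U;V|W)\le I(V;Z|U,W)\le H(X|U,W)$. Then every point of $\mathcal{M}(T)$, for every choice of $(U,V,W)$, obeys
\[
2R_0+R_1\le R_0+(R_0+R_1+R_2)\le I(W;X)+I(U;X|W)+H(X|U,W)=H(X)\le 1.
\]
Yet $W=U=X$ uniform gives $F(p)=\tfrac32$, and indeed $SR_{IB,1}^{2,1,0,\alpha}(T)\ge\alpha+\tfrac{\overline\alpha}{2}+1\ge\tfrac32$ for every $\alpha$. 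So (i) is false, and the displayed identity itself fails for this channel, which is both degraded and semi-deterministic. What your argument actually establishes, like \eqref{eqnMm1} combined with max--min, is only the inequality $\max_{\mathcal{M}(T)}\sum_i\lambda_iR_i\le\min_\alpha SR_{IB,1}^{\lambda_0,\lambda_1,\lambda_2,\alpha}(T)$.

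The paper gives no proof of this lemma beyond citing \cite{anantharam2018evaluation}, so nothing there closes the gap either. What the corner-point reasoning does yield is the corrected identity $\max_{\mathcal{M}(T)}\sum_i\lambda_iR_i=\max_p\big[(\lambda_0-\lambda_1)m+(\lambda_1-\lambda_2)\min\{A,S\}+\lambda_2S\big]$.

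The upper bound is $\min\{F(p),(\lambda_0-\lambda_1)m+\lambda_1S\}$, which equals the bracket. It is attained at $(m,A-m,S-A)$ when $S\ge A$, and at $(m,S-m,0)$ otherwise, after replacing $V$ by a constant if $S<m$. The inner $\min\{A,S\}$ is exactly what the definition of $SR_{IB,1}$ drops. Any $\min\max$ reformulation of the corrected expression would require an additional multiplier for that minimum, not just $\alpha$.
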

By an argument analogous to those in \eqref{eqnMm1} and \eqref{eqnMm2}, 
together with the inequality $\max \min \leq \min \max$, 
we obtain an upper bound on the weighted sum-rate of the UVW outer bound.
\begin{align}
        & \max_{(R_0,R_1,R_2)\in \mathcal{O}_{UVW}(T)}\lambda_0 R_0+\lambda_1 R_1+\lambda_2 R_2\leq \min_{\alpha\in[0,1]} SR_{OB,1p}^{\lambda_0,\lambda_1,\lambda_2,\alpha}(T),\quad \forall\lambda_0 \geq \lambda_1 \geq \lambda_2\geq 0,\label{eqnAddl1}\\
        & \max_{(R_0,R_1,R_2)\in \mathcal{O}_{UVW}(T)}\lambda_0 R_0+\lambda_1 R_1+\lambda_2 R_2\leq \min_{\alpha\in[0,1]} SR_{OB,2p}^{\lambda_0,\lambda_1,\lambda_2,\alpha}(T),\quad \forall\lambda_0 \geq \lambda_2 \geq \lambda_1\geq 0,\label{eqnAddl2}
    \end{align}
where
\begin{subequations}
    \begin{align}SR_{OB,1p}^{\lambda_0,\lambda_1,\lambda_2,\alpha}(T):=&\max_{p(w,u,v,x)}\alpha (\lambda_0-\lambda_1+\lambda_2)I(W;Y)+\overline\alpha (\lambda_0-\lambda_1+\lambda_2)I(W;Z)+(\lambda_1-\lambda_2)I(U,W;Y)\nonumber\\
    &+\lambda_2\min\{I(U;Y|W)+I(X;Z|U,W),~I(V;Z|W)+I(X;Y|V,W)\},\label{defSR_OB1p}\\
    SR_{OB,2p}^{\lambda_0,\lambda_1,\lambda_2,\alpha}(T):=&\max_{p(w,u,v,x)}\alpha (\lambda_0-\lambda_2+\lambda_1)I(W;Y)+\overline\alpha (\lambda_0-\lambda_2+\lambda_1)I(W;Z)+(\lambda_2-\lambda_1)I(V,W;Z)\nonumber\\
    &+\lambda_1\min\{I(U;Y|W)+I(X;Z|U,W),~I(V;Z|W)+I(X;Y|V,W)\}.\label{defSR_OB2p}
    \end{align}
\end{subequations}
We also define the following expressions which correspond to the weighted sum-rate of the Körner-Marton outer bound (see Remark \ref{RmkKM}):
\begin{align*}
SR_{OB,1m}^{\lambda_0,\lambda_1,\lambda_2,\alpha}(T) :=&\max_{p(w,u,x)}\alpha (\lambda_0-\lambda_1+\lambda_2)I(W;Y)+\overline\alpha (\lambda_0-\lambda_1+\lambda_2)I(W;Z)+(\lambda_1-\lambda_2)I(U,W;Y)\\
    &+\lambda_2(I(U;Y|W)+ I(X;Z|U,W)),\\
        SR_{OB,1n}^{\lambda_0,\lambda_1,\lambda_2,\alpha}(T) :=&\max_{p(w,u,v,x)}\alpha (\lambda_0-\lambda_1+\lambda_2)I(W;Y)+\overline\alpha (\lambda_0-\lambda_1+\lambda_2)I(W;Z)+(\lambda_1-\lambda_2)I(U,W;Y)\\
    &+\lambda_2(I(V;Z|W)+ I(X;Y|V,W)), \\
    \stackrel{(a)}{=} & \max_{p(w,v,x)}\alpha (\lambda_0-\lambda_1+\lambda_2)I(W;Y)+\overline\alpha (\lambda_0-\lambda_1+\lambda_2)I(W;Z)+(\lambda_1-\lambda_2)I(X;Y)\\
    &+\lambda_2(I(V;Z|W)+ I(X;Y|V,W)), \\
SR_{OB,2m}^{\lambda_0,\lambda_1,\lambda_2,\alpha}(T) :=&\max_{p(w,v,x)}\alpha (\lambda_0-\lambda_2+\lambda_1)I(W;Y)+\overline\alpha (\lambda_0-\lambda_2+\lambda_1)I(W;Z)+(\lambda_2-\lambda_1)I(V,W;Z)\\
    &+\lambda_1(I(V;Z|W)+ I(X;Y|V,W)),\\
        SR_{OB,2n}^{\lambda_0,\lambda_1,\lambda_2,\alpha}(T) :=&\max_{p(w,u,v,x)}\alpha (\lambda_0-\lambda_2+\lambda_1)I(W;Y)+\overline\alpha (\lambda_0-\lambda_2+\lambda_1)I(W;Z)+(\lambda_2-\lambda_1)I(V,W;Z)\\
    &+\lambda_1(I(U;Y|W)+ I(X;Z|U,W)),\\
    \stackrel{(b)}{=}&\max_{p(w,u,x)}\alpha (\lambda_0-\lambda_2+\lambda_1)I(W;Y)+\overline\alpha (\lambda_0-\lambda_2+\lambda_1)I(W;Z)+(\lambda_2-\lambda_1)I(X;Z)\\
    &+\lambda_1(I(U;Y|W)+ I(X;Z|U,W)).
\end{align*}
Here, $(a)$ follows from the data-processing inequality $I(U,W;Y) \leq I(X;Y)$, so that setting $U=X$ is optimal; an analogous argument applies for $(b)$.

\begin{lemma}\label{lem:srobcompare}
For any channel $T$ and any $\alpha\in[0,1]$:
\begin{subequations}
\begin{align}
&SR_{IB,1}^{\lambda_0,\lambda_1,\lambda_2,\alpha}(T)\leq SR_{OB,1p}^{\lambda_0,\lambda_1,\lambda_2,\alpha}(T)\leq 
    \min\left\{SR_{OB,1m}^{\lambda_0,\lambda_1,\lambda_2,\alpha}(T),SR_{OB,1n}^{\lambda_0,\lambda_1,\lambda_2,\alpha}(T)\right\},\quad \forall\lambda_0 \geq \lambda_1 \geq \lambda_2\geq 0,\label{eq:con7}\\
    &SR_{IB,2}^{\lambda_0,\lambda_1,\lambda_2,\alpha}(T)\leq SR_{OB,2p}^{\lambda_0,\lambda_1,\lambda_2,\alpha}(T)\leq 
    \min\left\{SR_{OB,2m}^{\lambda_0,\lambda_1,\lambda_2,\alpha}(T),SR_{OB,2n}^{\lambda_0,\lambda_1,\lambda_2,\alpha}(T)\right\},\quad \forall\lambda_0 \geq \lambda_2 \geq \lambda_1\geq 0.\label{eq:con8}
\end{align}
\end{subequations}
\end{lemma}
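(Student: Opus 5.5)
We prove \eqref{eq:con7}; \eqref{eq:con8} follows by exchanging the roles of $(U,Y,\lambda_1)$ and $(V,Z,\lambda_2)$. The plan is to compare objectives for a fixed $p(w,u,v,x)$. Both inequalities compare maxima of functionals in which every term except the last coincides. It therefore suffices to show, for each fixed distribution, that the last term of $SR_{IB,1}$ is at most the last term of $SR_{OB,1p}$, and that the latter is at most each corresponding term of $SR_{OB,1m}$ and $SR_{OB,1n}$. The maxima then inherit the ordering.

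\textbf{First inequality.} We must show, for every $p(w,u,v,x)$,
\begin{align*}
I(U;Y|W)+I(V;Z|W)-I(U;V|W)\le \min\{I(U;Y|W)+I(X;Z|U,W),\ I(V;Z|W)+I(X;Y|V,W)\}.
\end{align*}
For the first branch, we bound
\begin{align*}
I(V;Z|W)-I(U;V|W)\le I(V;Z|W)-I(V;Z|W)+I(V;Z|U,W)=I(V;Z|U,W)\le I(X;Z|U,W).
\end{align*}
Here we use $I(U;V|W)\ge I(U;V|W)-I(U;V|Z,W)\cdot 0$; more carefully, the chain rule gives
\begin{align*}
I(V;Z|W)\le I(U,V;Z|W)=I(U;Z|W)+I(V;Z|U,W).
\end{align*}
This is not directly what we need, so we argue instead as follows. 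We write
\begin{align*}
I(V;Z|W)-I(U;V|W)\le I(V;Z,U|W)-I(U;V|W)=I(V;Z|U,W),
\end{align*}
which holds since $I(V;Z|W)\le I(V;Z,U|W)$. The Markov chain $(U,V,W)\mcchain X\mcchain Z$ then gives $I(V;Z|U,W)\le I(X;Z|U,W)$. The second branch is symmetric: $I(U;Y|W)-I(U;V|W)\le I(U;Y|V,W)\le I(X;Y|V,W)$.

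\textbf{Second inequality.} The bound $SR_{OB,1p}\le SR_{OB,1m}$ is immediate because $\min\{a,b\}\le a$. The objective of $SR_{OB,1m}$ does not involve $V$, so maximizing over $p(w,u,v,x)$ equals maximizing over $p(w,u,x)$. Similarly, $SR_{OB,1p}\le SR_{OB,1n}$ follows from $\min\{a,b\}\le b$ applied to the original definition of $SR_{OB,1n}$ over $p(w,u,v,x)$. Equality $(a)$ is only a rewriting of that definition and is not needed.

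\textbf{Expected obstacle.} The only real point is the first-branch bound. It reduces to the identity $I(V;Z,U|W)=I(U;V|W)+I(V;Z|U,W)$ together with data processing through $X$, so no step should be substantive.
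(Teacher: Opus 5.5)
Your argument is correct, and for the first inequality it takes a different and more careful route than the paper. The paper proves the lemma in one sentence. It gets the second inequality from $\max\min\le\min\max$, which is exactly your observation that $\min\{a,b\}$ is pointwise dominated by both $a$ and $b$. It gets the first inequality by appealing to $\mathcal{M}(T)\subseteq\mathcal{O}_{UVW}(T)$. Taken literally, however, that containment only compares the maximal weighted sum-rates of the two regions. Combined with Lemma~\ref{MartonCharacterization} and \eqref{eqnAddl1}, it yields $\min_{\alpha}SR_{IB,1}^{\lambda_0,\lambda_1,\lambda_2,\alpha}(T)\le\min_{\alpha}SR_{OB,1p}^{\lambda_0,\lambda_1,\lambda_2,\alpha}(T)$, not the comparison at each fixed $\alpha$ that the lemma asserts. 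What delivers the per-$\alpha$ statement is your per-distribution bound $I(V;Z|W)-I(U;V|W)\le I(V;Z|U,W)\le I(X;Z|U,W)$, together with its mirror $I(U;Y|W)-I(U;V|W)\le I(U;Y|V,W)\le I(X;Y|V,W)$. This covers every $\alpha$ and every admissible weight triple, is self-contained, and needs no coding theorem; the only cost is a few lines of chain rule and data processing.

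When you write it up, delete the false start: the expression involving $I(U;V|Z,W)\cdot 0$ and the sentence saying the chain rule ``is not directly what we need.'' Go straight to $I(V;Z|W)\le I(V;U,Z|W)=I(U;V|W)+I(V;Z|U,W)$, followed by data processing through the Markov chain $V\mcchain(U,W,X)\mcchain Z$. For \eqref{eq:con8}, note that your relabeling also sends $\alpha$ to $\overline\alpha$, which is harmless since the claim holds for all $\alpha$. Alternatively, observe that the last terms of $SR_{IB,2}$ and $SR_{OB,2p}$ are the same expressions as before with weight $\lambda_1$, so your pointwise inequality applies verbatim.
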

The lemma is immediate from the fact that $\mathcal{M}(T)\subseteq \mathcal{O}_{UVW}(T)$ and the  max–min inequality ($\max\min \leq \min\max$).

\subsection{Primary Broadcast Channels}
\label{sec:primary}

\begin{definition}
\label{defn:primary}
For a given triple
$\lambda_0 \geq \lambda_1 \geq \lambda_2 \geq 0$, we say that a broadcast channel $T_{YZ|X}$ belongs to the \emph{primary class}
$\mathcal{P}^{\lambda_0,\lambda_1,\lambda_2}$ if, for every $\alpha\in[0,1]$,
\begin{subequations}
\begin{align}
SR_{IB,1}^{\lambda_0,\lambda_1,\lambda_2,\alpha}(T)
\;\geq\;
\min\!\big\{
SR_{OB,1m}^{\lambda_0,\lambda_1,\lambda_2,\alpha}(T),
SR_{OB,1n}^{\lambda_0,\lambda_1,\lambda_2,\alpha}(T)
\big\}.
\label{eq:con1}
\end{align}
Similarly, for a given triple $\lambda_0 \geq \lambda_2 \geq \lambda_1 \geq 0$,
we say that $T_{YZ|X}$ belongs to the primary class
$\mathcal{P}^{\lambda_0,\lambda_1,\lambda_2}$ if, for every $\alpha\in[0,1]$,
\begin{align}
SR_{IB,2}^{\lambda_0,\lambda_1,\lambda_2,\alpha}(T)
\;\geq\;
\min\!\big\{
SR_{OB,2m}^{\lambda_0,\lambda_1,\lambda_2,\alpha}(T),
SR_{OB,2n}^{\lambda_0,\lambda_1,\lambda_2,\alpha}(T)
\big\}.
\label{eq:con2}
\end{align}
\end{subequations}
\end{definition}

\begin{remark}
The ordering of $(\lambda_0,\lambda_1,\lambda_2)$ dictates whether condition \eqref{eq:con1} or \eqref{eq:con2}  applies, and thus identifies the appropriate instance of $\mathcal{P}^{\lambda_0,\lambda_1,\lambda_2}$. \hfill$\diamondsuit$
\end{remark}

\begin{lemma}\label{lem:primaexam}  
    If $T_{YZ|X}$ is either a less-noisy or semi-deterministic broadcast channel, then $T_{YZ|X} \in \mathcal{P}^{\lambda_0,\lambda_1,\lambda_2}$ for all $\lambda_0\geq \max\{\lambda_1,\lambda_2\}\geq 0$. 
\end{lemma}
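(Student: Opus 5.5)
The plan is to verify \eqref{eq:con1} (and symmetrically \eqref{eq:con2}) directly, for every fixed $\alpha\in[0,1]$. Depending on the channel, we pick one of the two Körner--Marton expressions $SR_{OB,1m}$ or $SR_{OB,1n}$. We take an (optimal) distribution attaining it, and from that distribution construct a choice of $(W',U',V',X)$ in the maximization defining $SR_{IB,1}$ whose objective value is at least as large. Since the minimum in \eqref{eq:con1} is at most either term, this suffices. Throughout, write $c:=\lambda_0-\lambda_1+\lambda_2$, and note that $c\geq\lambda_2$ because $\lambda_0\geq\lambda_1$.

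\emph{Semi-deterministic case.} Suppose first $Y=f(X)$, and let $p(w,v,x)$ attain the second form $(a)$ of $SR_{OB,1n}$. In $SR_{IB,1}$ take $W'=W$, $U'=Y$, $V'=V$. Then $I(U',W;Y)=H(Y)=I(X;Y)$, matching the $(\lambda_1-\lambda_2)$ term. The Marton term becomes
\[
H(Y|W)+I(V;Z|W)-I(Y;V|W)=I(V;Z|W)+H(Y|V,W)=I(V;Z|W)+I(X;Y|V,W),
\]
so the two objectives coincide and $SR_{IB,1}\geq SR_{OB,1n}$. If instead $Z=f(X)$, take $p(w,u,x)$ optimal for $SR_{OB,1m}$ and set $W'=W$, $U'=U$, $V'=Z$. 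The $W$ and $(U,W)$ terms are unchanged. The Marton term is $I(U;Y|W)+H(Z|W)-I(U;Z|W)=I(U;Y|W)+I(X;Z|U,W)$, again giving equality.

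\emph{Less-noisy case.} Here we use superposition coding by absorbing the inner auxiliary into $W$. We will use that less-noisiness holds conditionally, i.e.\ $I(A;Y|W)\geq I(A;Z|W)$ whenever $(A,W)\mcchain X\mcchain(Y,Z)$, by applying the definition for each $W=w$.
\begin{itemize}
\item If $Y\overset{\mathrm{L.N.}}{\succeq}Z$, take $p(w,v,x)$ optimal in $(a)$ for $SR_{OB,1n}$ and set $W'=(W,V)$, $U'=X$, $V'=\emptyset$. The $SR_{IB,1}$ objective minus the $SR_{OB,1n}$ objective equals
\[
c\big(\alpha I(V;Y|W)+\overline\alpha I(V;Z|W)\big)-\lambda_2 I(V;Z|W)\geq (c-\lambda_2)I(V;Z|W)\geq 0 .
\]
\item If $Z\overset{\mathrm{L.N.}}{\succeq}Y$, take $p(w,u,x)$ optimal for $SR_{OB,1m}$ and set $W'=(W,U)$, $U'=\emptyset$, $V'=X$. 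Now the difference is
\[
c\big(\alpha I(U;Y|W)+\overline\alpha I(U;Z|W)\big)-\lambda_2 I(U;Y|W)\geq (c-\lambda_2)I(U;Y|W)\geq0 .
\]
\end{itemize}

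The case $\lambda_0\geq\lambda_2\geq\lambda_1$ is handled identically with the roles of $(Y,U,\lambda_1)$ and $(Z,V,\lambda_2)$ swapped, using $SR_{IB,2}$, $SR_{OB,2m}$ and $SR_{OB,2n}$ (the latter via its form $(b)$) and $c=\lambda_0-\lambda_2+\lambda_1\geq\lambda_1$. The main point requiring care is bookkeeping: choosing the correct one of the two Körner--Marton bounds in each sub-case so that the rate-splitting term $(\lambda_1-\lambda_2)I(U,W;Y)$, or $(\lambda_2-\lambda_1)I(V,W;Z)$, is matched exactly. When it cannot be matched exactly, the construction should dominate it, via $I(X;Y)$ through $(a)$ or $(b)$, or via absorbing $U$ into $W'$.
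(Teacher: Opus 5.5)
Your proof is correct and is essentially the paper's argument: the semi-deterministic cases (taking $V=Z$, resp.\ $U=Y$ with the $I(X;Y)$ form of $SR_{OB,1n}$) and the less-noisy case $Z\overset{\mathrm{L.N.}}{\succeq}Y$ with $\lambda_0\geq\lambda_1\geq\lambda_2$ (taking $V=X$ and absorbing $U$ into $W$) are exactly what the paper does. The one difference is the less-noisy case with the ``opposite'' ordering (your first bullet, and its mirror $Z\overset{\mathrm{L.N.}}{\succeq}Y$ with $\lambda_0\geq\lambda_2\geq\lambda_1$): there the paper reuses the more-capable argument of Appendix~\ref{appendix:morecapable} to dominate $SR_{OB,1m}$ (resp.\ $SR_{OB,2m}$), which needs only more-capability, while your absorption argument against $SR_{OB,1n}$ (resp.\ $SR_{OB,2n}$) is equally valid and keeps the proof self-contained.
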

\begin{proof}Please refer to Appendix~\ref{specialtogoodproof}.\end{proof}

The class $\mathcal{P}^{\lambda_0,\lambda_1,\lambda_2}$ encompasses a broader range of channels than just the less-noisy or semi-deterministic types. In particular, it is shown in Appendix~\ref{sec:appndAF} that MIMO Gaussian channels belong to $\mathcal{P}^{\lambda_0,\lambda_1,\lambda_2}$. On the other hand, $\mathcal{P}^{\lambda_0,\lambda_1,\lambda_2}$ does not, in general, contain more-capable broadcast channels; see Appendix~\ref{sec:appndAE}. To address this limitation, we introduce an alternative primary class.
\begin{definition}\label{defn:primary2}
    For a given triple
$\lambda_0 \geq \max\{\lambda_1, \lambda_2\} \geq 0$, we say that a broadcast channel $T_{YZ|X}$ belongs to the primary class $\hat{\mathcal{P}}^{\lambda_0,\lambda_1,\lambda_2}$ if $T\in \hat{\mathcal{P}}^{\lambda_0,\lambda_1,\lambda_2}_{A}\cup\hat{\mathcal{P}}^{\lambda_0,\lambda_1,\lambda_2}_{B}$, as defined below:
    \begin{itemize}
        \item For a given triple
$\lambda_0 \geq \lambda_1 \geq \lambda_2 \geq 0$, we say that  $T_{YZ|X}$ belongs to the \textit{primary class} $\hat{\mathcal{P}}^{\lambda_0,\lambda_1,\lambda_2}_{A}$ if, for every $\alpha\in[0,1]$,
    \begin{subequations}
    \begin{align}
    SR_{IB,1}^{\lambda_0,\lambda_1,\lambda_2,\alpha}(T)\geq & SR_{OB,1p}^{\lambda_0,\lambda_1,\lambda_2,\alpha}(T). \label{eq:con3}
    \end{align}
    Correspondingly, for a given triple
$\lambda_0 \geq \lambda_2 \geq \lambda_1 \geq 0$, we say that $T_{YZ|X}$ belongs to the \textit{primary class} $\hat{\mathcal{P}}^{\lambda_0,\lambda_1,\lambda_2}_{A}$ if, for every $\alpha\in[0,1]$,
    \begin{align}
    SR_{IB, 2}^{\lambda_0,\lambda_1,\lambda_2,\alpha}(T)\geq &SR_{OB,2m}^{\lambda_0,\lambda_1,\lambda_2,\alpha}(T). \label{eq:con4}
    \end{align}
    \end{subequations}
    \item For a given triple
$\lambda_0 \geq \lambda_1 \geq \lambda_2 \geq 0$, we say that $T_{YZ|X}$ belongs to the \textit{primary class} $\hat{\mathcal{P}}^{\lambda_0,\lambda_1,\lambda_2}_{B}$ if, for every $\alpha\in[0,1]$,
    \begin{subequations}
    \begin{align}
    SR_{IB, 1}^{\lambda_0,\lambda_1,\lambda_2,\alpha}(T)\geq &SR_{OB,1m}^{\lambda_0,\lambda_1,\lambda_2,\alpha}(T). \label{eq:con5}
    \end{align}
    Correspondingly, for a given triple
$\lambda_0 \geq \lambda_2 \geq \lambda_1 \geq 0$, we say that $T_{YZ|X}$ belongs to the \textit{primary class} $\hat{\mathcal{P}}^{\lambda_0,\lambda_1,\lambda_2}_{B}$ if, for every $\alpha\in[0,1]$,
    \begin{align}
    SR_{IB, 2}^{\lambda_0,\lambda_1,\lambda_2,\alpha}(T)\geq &SR_{OB,2p}^{\lambda_0,\lambda_1,\lambda_2,\alpha}(T).\label{eq:con6}
    \end{align}
    \end{subequations}
    \end{itemize}
\end{definition}
\begin{remark}
    Observe that
    $$\hat{\mathcal{P}}^{\lambda_0,\lambda_1,\lambda_2}_{B}\subseteq \mathcal{P}^{\lambda_0,\lambda_1,\lambda_2}\subseteq \hat{\mathcal{P}}^{\lambda_0,\lambda_1,\lambda_2}_{A}, \qquad\forall \lambda_0 \geq \lambda_1 \geq \lambda_2 \geq 0,$$
$$\hat{\mathcal{P}}^{\lambda_0,\lambda_1,\lambda_2}_{A}\subseteq \mathcal{P}^{\lambda_0,\lambda_1,\lambda_2}\subseteq \hat{\mathcal{P}}^{\lambda_0,\lambda_1,\lambda_2}_{B}, \qquad\forall \lambda_0 \geq \lambda_2 \geq \lambda_1 \geq 0,$$
 \hfill$\diamondsuit$
\end{remark}
\begin{lemma}\label{lem:primaexam2}
    If $T_{YZ|X}$ is a more-capable broadcast channel, then $T_{YZ|X} \in \hat{\mathcal{P}}^{\lambda_0,\lambda_1,\lambda_2},\forall \lambda_0\geq \max\{\lambda_1,\lambda_2\}\geq 0$. Specifically, when receiver $Z$ is more-capable than $Y$, then $T_{YZ|X} \in \hat{\mathcal{P}}^{\lambda_0,\lambda_1,\lambda_2}_{A},\forall \lambda_0\geq \max\{\lambda_1,\lambda_2\}\geq 0$. Conversely, when receiver $Y$ is more-capable than $Z$, then $T_{YZ|X} \in \hat{\mathcal{P}}^{\lambda_0,\lambda_1,\lambda_2}_{B},\forall \lambda_0\geq \max\{\lambda_1,\lambda_2\}\geq 0$. Moreover, any less-noisy or semi-deterministic broadcast channel belongs to $\hat{\mathcal{P}}^{\lambda_0,\lambda_1,\lambda_2},\forall \lambda_0\geq \max\{\lambda_1,\lambda_2\}\geq 0$.
\end{lemma}
\begin{proof}
Please refer to Appendix~\ref{specialtogoodproof}. \end{proof}

The following lemma is an immediate consequence of our definitions.
\begin{lemma}\label{lem:primcap}
If a broadcast channel $T$ belongs to $\mathcal{P}^{\lambda_0,\lambda_1,\lambda_2}$ or $\hat{\mathcal{P}}^{\lambda_0,\lambda_1,\lambda_2}$ for some weights $\lambda_0\geq \max\{\lambda_1,\lambda_2\}\geq 0$,
then the maximum weighted sum‑rate over Marton’s inner bound (Theorem \ref{thm:martonib}) equals that over the UVW outer bound (Theorem \ref{thm:uvwob}), thus giving the capacity region’s weighted sum‑rate:
\begin{align*}
        &\max_{(R_0,R_1,R_2)\in \mathcal{O}_{UVW}(T)}\lambda_0 R_0+\lambda_1 R_1+\lambda_2 R_2 = \max_{(R_0,R_1,R_2)\in \mathcal{M}(T)}\lambda_0 R_0+\lambda_1 R_1+\lambda_2 R_2.
    \end{align*}
In particular, if $T \in \mathcal{P}^{\lambda_0,\lambda_1,\lambda_2}\cup \hat{\mathcal{P}}^{\lambda_0,\lambda_1,\lambda_2}$ for \underline{all} $\lambda_0\geq \max\{\lambda_1,\lambda_2\}\geq 0$, then the entire Marton's inner bound $\mathcal{M}(T)$ (Theorem \ref{thm:martonib}) and the UVW outer bound $\mathcal{O}_{UVW}(T)$ (Theorem \ref{thm:uvwob}) coincide, yielding the capacity region.
\end{lemma}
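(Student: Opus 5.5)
The argument is a sandwich of inequalities. Fix weights $\lambda_0\geq\max\{\lambda_1,\lambda_2\}\geq 0$, and treat the case $\lambda_0\geq\lambda_1\geq\lambda_2$ first; the other ordering is symmetric, using subscript $2$ throughout.

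First, since $\mathcal{M}(T)\subseteq\mathcal{O}_{UVW}(T)$, the maximum of $\lambda_0R_0+\lambda_1R_1+\lambda_2R_2$ over $\mathcal{M}(T)$ is at most that over $\mathcal{O}_{UVW}(T)$.

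Second, by \eqref{eqnAddl1} the maximum over $\mathcal{O}_{UVW}(T)$ is at most $\min_{\alpha}SR_{OB,1p}^{\lambda_0,\lambda_1,\lambda_2,\alpha}(T)$.

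Third, I will show that membership in the primary class forces
\[
SR_{OB,1p}^{\lambda_0,\lambda_1,\lambda_2,\alpha}(T)\leq SR_{IB,1}^{\lambda_0,\lambda_1,\lambda_2,\alpha}(T)\qquad\text{for every }\alpha\in[0,1].
\]
I check this class by class:
\begin{itemize}
\item If $T\in\mathcal{P}^{\lambda_0,\lambda_1,\lambda_2}$, combine \eqref{eq:con1} with the second inequality of \eqref{eq:con7}: $SR_{OB,1p}\leq\min\{SR_{OB,1m},SR_{OB,1n}\}\leq SR_{IB,1}$.
\item If $T\in\hat{\mathcal{P}}_A$, this is exactly \eqref{eq:con3}.
\item If $T\in\hat{\mathcal{P}}_B$, use \eqref{eq:con5} together with $SR_{OB,1p}\leq SR_{OB,1m}$ from \eqref{eq:con7}.
\end{itemize}

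Fourth, taking $\min_\alpha$ and invoking Lemma~\ref{MartonCharacterization} gives that $\min_\alpha SR_{IB,1}^{\lambda_0,\lambda_1,\lambda_2,\alpha}(T)$ equals the maximum over $\mathcal{M}(T)$. This closes the chain, so the two maxima are equal. Because $\mathcal{M}(T)\subseteq\mathcal{C}(T)\subseteq\mathcal{O}_{UVW}(T)$, the common value is also the capacity's weighted sum-rate.

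For the ordering $\lambda_0\geq\lambda_2\geq\lambda_1$, I use \eqref{eqnAddl2}, \eqref{eq:con8} and the second identity in Lemma~\ref{MartonCharacterization}. The class cases change as follows:
\begin{itemize}
\item For $\mathcal{P}$, use \eqref{eq:con2}, exactly as before.
\item For $\hat{\mathcal{P}}_A$, use \eqref{eq:con4} with $SR_{OB,2p}\leq SR_{OB,2m}$.
\item For $\hat{\mathcal{P}}_B$, \eqref{eq:con6} gives the bound directly.
\end{itemize}
The only care needed is matching the $m$, $n$ and $p$ labels correctly in each case.

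For the "in particular" part, suppose the hypothesis holds for all admissible weights. Then the support functions of $\mathcal{M}(T)$ and $\mathcal{O}_{UVW}(T)$ agree for every $\lambda_0\geq\max\{\lambda_1,\lambda_2\}\geq 0$. By the reduction at the start of the weighted sum-rate subsection, this covers all nonnegative weight vectors, since for other weights the rate $R_0$ can be shifted into $R_1$ or $R_2$ in both regions. Both regions are closed, convex and contained in the nonnegative orthant, with the same downward and rate-transfer closure. Agreement of their support functions in all nonnegative directions therefore implies the regions are equal, and hence both equal $\mathcal{C}(T)$.

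I expect this last step to be the main obstacle. It requires convexity of $\mathcal{M}(T)$ (via time-sharing, where $W$ absorbs the time-sharing variable) and of $\mathcal{O}_{UVW}(T)$, which I take as standard. It also requires the rate-transfer closure, which I would justify for $\mathcal{O}_{UVW}$ directly from its defining inequalities. The bound $R_0\leq\min\{I(W;Y),I(W;Z)\}$ only becomes easier when $R_0$ decreases, and the other constraints depend only on $R_0+R_1$, $R_0+R_2$ and $R_0+R_1+R_2$, which do not increase under the transfer.
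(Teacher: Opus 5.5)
Your proposal is correct and is essentially the paper's own proof: the same chain $\max_{\mathcal{M}(T)}\leq\max_{\mathcal{O}_{UVW}(T)}\leq\min_\alpha SR_{OB,1p}^{\lambda_0,\lambda_1,\lambda_2,\alpha}(T)\leq\min_\alpha SR_{IB,1}^{\lambda_0,\lambda_1,\lambda_2,\alpha}(T)$, closed by Lemma~\ref{MartonCharacterization} (with the subscript-$2$ analogue for the other ordering). The only differences are that you check all six class/ordering cases via Lemma~\ref{lem:srobcompare}, where the paper proves one and calls the rest analogous, and that you supply the support-function argument (convexity, downward closure, rate-transfer closure) for the ``in particular'' claim, which the paper states without proof.
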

\begin{proof}

Let $\lambda_0 \geq \lambda_1 \geq \lambda_2 \geq 0$ be given.     Suppose $T\in\hat{\mathcal{P}}^{\lambda_0,\lambda_1,\lambda_2}_{A}$. We have
    \begin{align*}
        &\max_{(R_0,R_1,R_2)\in \mathcal{O}_{UVW}(T)}\lambda_0 R_0+\lambda_1 R_1+\lambda_2 R_2\\
        &\quad \leq \max_{p(w,u,v,x)} (\lambda_0-\lambda_1+\lambda_2)\min\{I(W;Y), I(W;Z)\}+(\lambda_1-\lambda_2) I(U,W;Y)\\
        &\qquad +\lambda_2\min\{I(U;Y|W)+I(X;Z|U,W),I(V;Z|W)+I(X;Y|V,W)\}\\
        &\quad \leq \min_{\alpha\in[0,1]}SR_{OB,1p}^{\lambda_0,\lambda_1,\lambda_2,\alpha}(T)\\
        &\quad \stackrel{(a)}\leq \min_{\alpha\in[0,1]}SR_{IB,1}^{\lambda_0,\lambda_1,\lambda_2,\alpha}(T)\\
        &\quad \stackrel{(b)}= \max_{(R_0,R_1,R_2)\in \mathcal{M}(T)}\lambda_0 R_0+\lambda_1 R_1+\lambda_2 R_2,
    \end{align*}
    where $(a)$ follows from the assumption, and $(b)$ follows from Lemma~\ref{MartonCharacterization}. The proofs for the remaining cases are analogous and omitted.
\end{proof}


\definecolor{MartonColor}{RGB}{245,245,220}   
\definecolor{Aonly}{RGB}{174,199,232}         
\definecolor{Bonly}{RGB}{255,187,120}         
\definecolor{Conly}{RGB}{141,211,199}         
\definecolor{ABonly}{RGB}{197,176,213}        
\definecolor{AConly}{RGB}{255,152,150}        
\definecolor{BConly}{RGB}{178,150,210}        
\definecolor{ABC}{RGB}{127,201,127}           

\begin{figure}[t]
\centering

\begin{tikzpicture}[scale=0.95, every node/.style={font=\small}]

    \def\Outer{(0,0) circle (4.2)}
    \def\A{(-1.05,0) ellipse [x radius=2.05, y radius=1.25]}
    \def\B{( 1.05,0) ellipse [x radius=2.05, y radius=1.25]}
    \def\C{(0,0) ellipse [x radius=1.15, y radius=1.55]}
    
    \def\Box{(-4.5,-4.5) rectangle (4.5,4.5)}

    \fill[MartonColor] \Outer;

    \begin{scope}
        \clip \Outer; \clip \A; \fill[Aonly] \Box;
    \end{scope}

    \begin{scope}
        \clip \Outer; \clip \B; \fill[Bonly] \Box;
    \end{scope}

    \begin{scope}
        \clip \Outer; \clip \C; \fill[Conly] \Box;
    \end{scope}

    \begin{scope}
        \clip \Outer; \clip \A; \clip \B; \fill[ABonly] \Box;
    \end{scope}

    \begin{scope}
        \clip \Outer; \clip \A; \clip \C; \fill[AConly] \Box;
    \end{scope}

    \begin{scope}
        \clip \Outer; \clip \B; \clip \C; \fill[BConly] \Box;
    \end{scope}

    \begin{scope}
        \clip \Outer; \clip \A; \clip \B; \clip \C; \fill[ABC] \Box;
    \end{scope}

    \draw[thick, draw=black!80] \Outer;
    \draw[thick, blue!60!black] \A;
    \draw[thick, orange!80!black] \B;
    \draw[thick, green!50!black] \C;

    \draw[thick, black!80, rounded corners=15pt] (-4.5,-4.5) rectangle (4.5,4.5);

    \tikzstyle{labelbox} = [inner sep=2pt, rounded corners=3pt]
    \node[labelbox] at (-2.8,1.10) {\(\hat{\mathcal{P}}^{\lambda_0,\lambda_1,\lambda_2}_{A}\)};
    \node[labelbox] at ( 3,1.10) {\(\hat{\mathcal{P}}^{\lambda_0,\lambda_1,\lambda_2}_{B}\)};
    \node[labelbox] at (0,-1.8) {\(\mathcal{P}^{\lambda_0,\lambda_1,\lambda_2}\)};

\end{tikzpicture}
\caption{Venn diagram of broadcast channel classes.}

\vspace{2em} 

\newcommand{\legendbox}[1]{\tikz[baseline=-0.1ex]{\draw[gray!50, fill=#1] (0,0) rectangle (0.4,0.25);}}

\renewcommand{\arraystretch}{1.4} 
\begin{tabular}{ll} 
\hline
\textbf{Class of Broadcast Channel} & \textbf{Color of the region it belongs to} \\
\hline
More-capable broadcast channels: \(Z \overset{\mathrm{M.C.}}{\succeq} Y\) & \legendbox{Aonly} \(+\) \legendbox{AConly} \(+\) \legendbox{ABC} \\
More-capable broadcast channels: \(Y \overset{\mathrm{M.C.}}{\succeq} Z\) & \legendbox{Bonly} \(+\) \legendbox{BConly} \(+\) \legendbox{ABC} \\
Semi-deterministic broadcast channels: \(Y=f(X)\) & \legendbox{AConly} \(+\) \legendbox{ABC} \\
Semi-deterministic broadcast channels: \(Z=f(X)\) & \legendbox{BConly} \(+\) \legendbox{ABC} \\
Less-noisy broadcast channels: \(Y \overset{\mathrm{L.N.}}{\succeq} Z\) or \(Z \overset{\mathrm{L.N.}}{\succeq} Y\) & \legendbox{ABC} \\
Broadcast channels for which Marton’s inner bound is tight & \legendbox{Aonly} \(+\) \legendbox{AConly} \(+\) \legendbox{Bonly} \(+\) \legendbox{BConly} \(+\) \legendbox{ABC} \(+\) \legendbox{Conly} \(+\) \legendbox{MartonColor} \\
\hline
\end{tabular}
\end{figure}

\subsection{Auxiliary Receiver Outer Bound}
\begin{theorem}[Theorem 8, \cite{gn22}] \label{th:obp} Consider an arbitrary broadcast channel \(T_{Y,Z | X}\). 
Let \(\mathcal{G}\) and \(\mathcal{K}\) be arbitrary sets, and let 
\(T_{G,K | X,Y,Z}\) denote a transition probability, where 
\(G\in \mathcal{G}\) and \(K\in \mathcal{K}\) are auxiliary receivers. Then, any achievable non-negative rate triple $(R_0, R_1, R_2)$ must satisfy the following constraints
{\begin{subequations} \begin{align}
R_0 & \leq \min \{ I(W^\dagger; G) + I(W; Y|G), I(W^\dagger; Z|K) + I(W; K) \}, \label{eqnSUM1a}\\
R_0 + R_1 & \leq I(U^\dagger,W^\dagger; G) + I(U,W; Y|G), \label{eqnSUM2a} \\
R_0 + R_1 & \leq I(W^\dagger; Z|K) + I(W, G; K)+I(U^\dagger; G|W^\dagger, K) + I(U; Y|W, G),  \label{eqnSUM3a}\\
R_0 + R_2 & \leq I(W^\dagger, K; G)+I(W; Y|G) +I(V^\dagger; Z|W^\dagger, K)+I(V; K|W, G), \label{eqnSUM4a}\\
R_0 + R_2 & \leq 
I(V,W; K) + I(V^\dagger,W^\dagger; Z|K), \label{eqnSUM5a}\\
R_0 + R_1 + R_2 & \leq \min \{ I(W^\dagger, K; G) + I(W; Y|G),~ I(W^\dagger; Z|K) + I(W, G; K) \} \nonumber \\
&\qquad+ I(U;Y|W, G) + I(X;K|U, W, G)\nonumber \\
& \qquad + \min \big\{ I(U^\dagger;G|W^\dagger, K) + I(X; Z|U^\dagger, W^\dagger, K),~ I(V^\dagger;Z|W^\dagger, K) + I(X; G|V^\dagger, W^\dagger, K) \big\},\label{eqnSUM6a} \\
R_0 + R_1 + R_2 & \leq \min \{ I(W^\dagger, K; G) + I(W; Y|G),~ I(W^\dagger; Z|K) + I(W, G; K) \} \nonumber\\
&\qquad + I(V^\dagger;Z|W^\dagger, K) + I(X;G|V^\dagger, W^\dagger, K) \nonumber\\&\qquad+ \min \big\{ I(U;Y|W, G) + I(X; K|U, W, G),~I(V;K|W, G) + I(X; Y|V, W, G) \big\},
\label{eqnSUM7a} \end{align}\end{subequations}}
 for some $p(w, v, u|x)p(w^\dagger, v^\dagger, u^\dagger|x)p(x)$ satisfying $|\mathcal{W}^\dagger|, |\mathcal{W}|\leq |\mathcal{X}|+7$, $|\mathcal{U}^\dagger|, |\mathcal{V}|\leq |\mathcal{X}|+2$, $|\mathcal{V}^\dagger|, |\mathcal{U}|\leq |\mathcal{X}|+1$. We denote this region by $\mathcal{O}_{aux}(T)$.
\end{theorem}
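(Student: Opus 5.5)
The plan is to follow the standard converse template for the broadcast channel (as in the proof of the UVW outer bound, Theorem~\ref{thm:uvwob}), with two additions. First, we append the auxiliary receivers to the physical outputs. Second, we use \emph{two} different single-letter identifications: a non-dagger one tied to the pair $(Y,G)$ and a dagger one tied to the pair $(Z,K)$.

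Fix a code of length $n$ and let $(G^n,K^n)$ be produced from $(X^n,Y^n,Z^n)$ through the memoryless channel $T_{GK|XYZ}^{\otimes n}$. Then $(Y_i,Z_i,G_i,K_i)$ depends on $(M_0,M_1,M_2,X^n)$ and the other outputs only through $X_i$.

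\textbf{Step 1 (multi-letter bounds).} Fano's inequality gives $nR_0\le I(M_0;Y^n)+n\epsilon_n$ and $nR_0\le I(M_0;Z^n)+n\epsilon_n$. Since extra outputs can only help,
\[ I(M_0;Y^n)\le I(M_0;G^n)+I(M_0;Y^n|G^n),\qquad I(M_0;Z^n)\le I(M_0;K^n)+I(M_0;Z^n|K^n). \]
The same is done for $(M_0,M_1)$ at $Y$, for $(M_0,M_2)$ at $Z$, and for the sum rates. For the sum rates we first write $n(R_0+R_1+R_2)\le I(M_0,M_1;Y^n,G^n)+I(M_2;Z^n,K^n|M_0,M_1)$, together with its mirror versions. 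We then chain-expand each term, either as $I(\cdot;G_i|G_{i+1}^n,\dots)$ or as $I(\cdot;Y_i|Y^{i-1},G^n,\dots)$, depending on which receiver plays which role. This is exactly what produces the mixed expressions such as $I(W^\dagger,K;G)$ and $I(W,G;K)$ in the statement.

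\textbf{Step 2 (identification).} We use $Q$ uniform on $[n]$, and take $X=X_Q$ and similarly for the outputs. The plan is to set
\[ W_i=(M_0,Y^{i-1},G_{i+1}^n),\quad U_i=M_1,\quad V_i=M_2, \]
and
\[ W_i^\dagger=(M_0,K^{i-1},Z_{i+1}^n),\quad U_i^\dagger=M_1,\quad V_i^\dagger=M_2. \]
The time index $Q$ is absorbed into both $W$ and $W^\dagger$. (The index ordering may need to be swapped, past versus future, so that Csiszár's sum identity applies in the needed direction.) Each of \eqref{eqnSUM1a}--\eqref{eqnSUM7a} should then follow from Csiszár's sum identity applied to the pairs $(Y^n,G^n)$, $(Z^n,K^n)$ and $(G^n,K^n)$, which cancels the cross terms $\sum_i I(Y^{i-1};G_i|\cdot)$ against $\sum_i I(G_{i+1}^n;Y_i|\cdot)$ and their analogues. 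The inequality $I(X;K|U,W,G)\ge\dots$ in the sum-rate bounds comes from the step $I(M_2;K^n\mid M_0,M_1,\dots)\le\sum_i I(X_i;K_i\mid\cdot)$, using memorylessness.

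\textbf{Step 3 (product form of the distribution).} Every mutual information term in the bound involves only one of the two families $(W,U,V)$ or $(W^\dagger,U^\dagger,V^\dagger)$, jointly with $(X,Y,Z,G,K)$. The value of each term therefore depends only on the two marginals $p(w,u,v,x)$ and $p(w^\dagger,u^\dagger,v^\dagger,x)$. We may replace the actual coupling by the conditionally independent one, $p(w,v,u|x)p(w^\dagger,v^\dagger,u^\dagger|x)p(x)$, without changing any term.

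\textbf{Step 4 (cardinality bounds).} These follow from the Fenchel--Eggleston--Carath\'eodory support lemma. We preserve $p(x)$ together with the relevant entropy functionals, handling $W,W^\dagger$ first and then $U,V,U^\dagger,V^\dagger$ conditionally, which gives the stated sizes $|\mathcal X|+7$, $|\mathcal X|+2$ and $|\mathcal X|+1$.

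The main obstacle I expect is Step 2. A single choice of the two identifications must make \emph{all seven} inequalities close simultaneously. This is most delicate for \eqref{eqnSUM6a}--\eqref{eqnSUM7a}, where the $\min$ terms mix dagger quantities conditioned on $K$ with non-dagger quantities conditioned on $G$. To handle these, I would first derive the sum-rate bound with an explicit $G^n$/$K^n$ interleaving. Concretely, I would expand $I(M_0,M_1;Y^n,G^n)$ by peeling $G$ first and then $Y$, and expand $I(M_2;Z^n,K^n|M_0,M_1)$ by peeling $K$ first and then $Z$. After that, I would check term by term that the leftover cross terms telescope via the sum identity. If a mismatch appears, I would adjust whether $W_i$ uses the past of $Y$ or the future of $G$.
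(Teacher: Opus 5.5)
Your overall template (Fano, appending $G^n,K^n$ to the outputs, two families of auxiliaries, Csisz\'ar's sum identity, decoupling, support lemma) is the right kind of argument for this result. The paper itself does not prove it; it imports it from \cite{gn22}. Steps 3 and 4 are fine.

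There is, however, a genuine gap in Step 2: your identification pairs the wrong sequences, and swapping past and future does not repair it. Look at how the two families enter the statement:
\begin{itemize}
\item every non-dagger term containing $Y$ is conditioned on $G$, while the non-dagger family is also measured against $K$ (through $I(W;K)$, $I(W,G;K)$, $I(V,W;K)$, $I(V;K|W,G)$, $I(X;K|U,W,G)$);
\item symmetrically, every dagger term containing $Z$ is conditioned on $K$, while the dagger family is measured against $G$.
\end{itemize}
So $W$ must act as a UVW-type auxiliary for the pair $(Y,K)$ with $G^n$ as side information, and $W^\dagger$ as one for $(G,Z)$ with $K^n$ as side information. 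Your $W_i=(M_0,Y^{i-1},G_{i+1}^n)$ carries nothing about $K^n$, and your $W_i^\dagger=(M_0,K^{i-1},Z_{i+1}^n)$ carries nothing about $G^n$.

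Here is a concrete failure. Let $X=(A,B)\in\{0,1\}^2$, $Y=A$, $Z=B$, $G$ constant, $K=Z$. Repeat the zero-error length-two code $(A_1,A_2)=(m_0,0)$, $(B_1,B_2)=(m_2,m_0\oplus m_2)$, which achieves $(R_0,R_1,R_2)=(\tfrac12,0,\tfrac12)$.
\begin{itemize}
\item Under your choice, $W_i=(M_0,Y^{i-1})$.
\item Each $B_i$ is independent of $(M_0,A^{i-1})$, because it is masked by a bit of $M_2$. Hence $I(W_i;K_i)=0$ for every $i$, and $I(Q;K_Q)=0$.
\item $I(W^\dagger;Z|K)=0$ because $K=Z$.
\end{itemize}
So the second expression in \eqref{eqnSUM1a} evaluates to $0<R_0$. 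The same happens with $Y_{i+1}^n$ in place of $Y^{i-1}$. Your single-letterization therefore cannot be valid.

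What is needed is the following. In the natural route one bounds $I(M_0;Z^n)\le I(M_0;K^n)+I(M_0;Z^n|K^n)$, and $I(M_0;K^n)$ must become $\sum_i I(W_i;K_i)$, so $W_i$ must contain part of $K^n$. Likewise, turning $I(\cdot\,;Y^n|G^n)$ into $\sum_i I(\cdot\,;Y_i|W_i,G_i)$ requires $G^{n\setminus i}$ inside $W_i$. A choice that closes the chains is
\[
W_i=(M_0,Y^{i-1},K_{i+1}^n,G^{n\setminus i}),\qquad W_i^\dagger=(M_0,G^{i-1},Z_{i+1}^n,K^{n\setminus i}),
\]
with $U_i=U_i^\dagger=M_1$ and $V_i=V_i^\dagger=M_2$.

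The mixed terms you attribute to ``interleaving'' come from a side-information swap. For \eqref{eqnSUM6a}, for instance,
\[
I(M_0,M_1;G^n)+I(M_2;K^n|M_0,M_1)\le I(M_0,M_1,K^n;G^n)+I(X^n;K^n|M_0,M_1,G^n),
\]
which follows from $I(M_2;K^n|M_0,M_1)\le I(X^n,G^n;K^n|M_0,M_1)$. This swap splits each multi-letter bound into a non-dagger part conditioned on $G^n$ and a dagger part conditioned on $K^n$. Csisz\'ar's sum identity is then applied to $(Y^n,K^n)$ given $G^n$ and to $(G^n,Z^n)$ given $K^n$. It is not applied to the pairs $(Y^n,G^n)$ and $(Z^n,K^n)$ that you list.
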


\begin{remark}
    \label{rem:con}
    In Theorem \ref{th:obp}, let us set $G=Y$ and let $K$ be a constant random variable. The constraints then become:
    \begin{align*}
R_0 & \leq \min \{ I(W^\dagger; Y) , I(W^\dagger; Z)  \}, \\
R_0 + R_1 & \leq I(U^\dagger,W^\dagger; Y) ,  \\
R_0 + R_1 & \leq I(W^\dagger; Z) +I(U^\dagger; Y|W^\dagger),  \\
R_0 + R_2 & \leq I(W^\dagger ; Y) +I(V^\dagger; Z|W^\dagger), \\
R_0 + R_2 & \leq I(V^\dagger,W^\dagger; Z), \\
R_0 + R_1 + R_2 & \leq \min \{ I(W^\dagger; Y) ,~ I(W^\dagger; Z)  \} + \min \big\{ I(U^\dagger;Y|W^\dagger) + I(X; Z|U^\dagger, W^\dagger),~ I(V^\dagger;Z|W^\dagger) + I(X; Y|V^\dagger, W^\dagger) \big\}, \\
R_0 + R_1 + R_2 & \leq \min \{ I(W^\dagger; Y) ,~ I(W^\dagger; Z)  \}  + I(V^\dagger;Z|W^\dagger) + I(X;Y|V^\dagger, W^\dagger).
\end{align*}
A moment's reflection reveals that these constraints are equivalent to those in the UVW outer bound. Given that $\mathcal{O}_{aux}(T)$ is obtained by intersecting the resulting regions over all possible choices of $T_{G,K|X,Y,Z}$, it is immediate that $\mathcal{O}_{aux}(T) \subseteq \mathcal{O}_{UVW}(T)$. \hfill $\diamondsuit$
\end{remark}

\subsection{Weighted sum-rate of Marton's inner bound for Sum-Broadcast channels}

Given two broadcast channels $T_a$ and $T_b$, whose input and output sets satisfy $\mathcal{X}_a \cap \mathcal{X}_b = \emptyset$, $\mathcal{Y}_a \cap \mathcal{Y}_b = \emptyset$, $\mathcal{Z}_a \cap \mathcal{Z}_b = \emptyset$,  we make the following definition.
\begin{definition}
    A broadcast channel \(T\) is said to be the sum of two broadcast channels \(T_a\) and \(T_b\), denoted by 
\[
T = T_a \oplus T_b,
\]
if its input and output alphabets are defined as
\[
\mathcal{X} = \mathcal{X}_a \sqcup \mathcal{X}_b, 
\qquad 
\mathcal{Y} = \mathcal{Y}_a \sqcup \mathcal{Y}_b, 
\qquad 
\mathcal{Z} = \mathcal{Z}_a \sqcup \mathcal{Z}_b,
\]
where \(\sqcup\) denotes the disjoint union of two sets; and the channel transition probabilities are defined by
\[
T(y,z | x) =
\begin{cases}
T_a(y,z | x), & \text{if } (x,y,z) \in \mathcal{X}_a \times \mathcal{Y}_a \times \mathcal{Z}_a, \\[6pt]
T_b(y,z | x), & \text{if } (x,y,z) \in \mathcal{X}_b \times \mathcal{Y}_b \times \mathcal{Z}_b, \\[6pt]
0, & \text{otherwise}.
\end{cases}
\]
\end{definition}

Here, we establish some general results regarding the weighted sum-rate of Marton's inner bound for a sum-broadcast channel. 
\begin{lemma}
    Let  $T=T_a\oplus T_b$ be the sum of two broadcast channels $T_a$ and $T_b$. Then for any $\alpha\in[0,1]$, we have
    \begin{subequations}
        \begin{align}
            SR_{IB, 1}^{\lambda_0,\lambda_1,\lambda_2,\alpha}(T)& =  \max_{q\in[0,1]} \lambda_0 H_2(q) + q SR_{IB,1}^{\lambda_0,\lambda_1,\lambda_2,\alpha}(T_a) + \overline{q} SR_{IB,1}^{\lambda_0,\lambda_1,\lambda_2,\alpha}(T_b),\quad \forall\lambda_0 \geq \lambda_1 \geq \lambda_2\geq 0,\label{eqn:martonsumcharacterization1}\\
            SR_{IB, 2}^{\lambda_0,\lambda_1,\lambda_2,\alpha}(T)& =  \max_{q\in[0,1]} \lambda_0 H_2(q) + q SR_{IB,2}^{\lambda_0,\lambda_1,\lambda_2,\alpha}(T_a) + \overline{q} SR_{IB,2}^{\lambda_0,\lambda_1,\lambda_2,\alpha}(T_b),\quad \forall\lambda_0 \geq \lambda_2 \geq \lambda_1\geq 0.\label{eqn:martonsumcharacterization2}
        \end{align}
    \end{subequations}
\end{lemma}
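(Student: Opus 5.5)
We prove both directions of \eqref{eqn:martonsumcharacterization1}; \eqref{eqn:martonsumcharacterization2} is identical after swapping the roles of $(U,Y)$ and $(V,Z)$. The idea is to introduce the indicator $Q=\mathbb{1}\{X\in\mathcal{X}_a\}$. Because the alphabets are disjoint, $Q$ is a deterministic function of $X$, of $Y$, and of $Z$ separately.

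\emph{Step 1 (upper bound).} Fix any $p(w,u,v,x)$ and let $q=P(Q=1)$. Since $Q$ is a function of $Y$, the chain rule gives $I(W;Y)=I(W,Q;Y)-I(Q;Y|W)=H(Q)+I(W;Y|Q)-I(Q;Y|W)$, and similarly for $Z$. Also $I(U,W;Y)=H(Q)+I(U,W;Y|Q)$. For the Marton term we write
\[
I(U;Y|W)+I(V;Z|W)-I(U;V|W)=\bigl[I(U;Y|W,Q)+I(V;Z|W,Q)-I(U;V|W,Q)\bigr]+\Delta ,
\]
where, from $I(U;Y|W)=I(U,Q;Y|W)-I(Q;Y|U,W)$ and the analogous identities,
\[
\Delta=I(Q;U|W)+I(Q;V|W)-I(Q;UV|W)-I(Q;Y|U,W)-I(Q;Z|V,W)+I(Q;U,V|W)\ \text{(regrouped)}.
\]
We then use $H(Q|U,W)\ge$ the relevant conditional entropies to show that the total extra contribution is bounded by $\lambda_0 H(Q)$, with weights $(\lambda_0-\lambda_1+\lambda_2)$ for the $W$ terms, $(\lambda_1-\lambda_2)$ for the $(U,W)$ term, and $\lambda_2$ for the Marton term. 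In detail, the $W$ terms together with the $(U,W)$ term contribute $(\lambda_0-\lambda_2)H(Q)$ minus nonnegative terms. The Marton term contributes at most $\lambda_2 H(Q|W)\le\lambda_2 H(Q)$, which follows from the Marton-type identity $I(U;Y|W)+I(V;Z|W)-I(U;V|W)\le$ the same quantity with $(U,Q)$ and $(V,Q)$ in place of $U,V$, since $Q$ is a function of both $Y$ and $Z$. This is the step we expect to be the main obstacle: the $-I(Q;Y|W)$ terms from the common part must be balanced against the possible gain $H(Q|W)$ in the Marton term.

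To handle this, we would first replace $W$ by $(W,Q)$, arguing that this does not decrease the objective up to exactly these corrections. Replacing $(U,V)$ by $(U,Q),(V,Q)$ is harmless, because conditioning on $Q$ in the Marton expression is exact when $Q\in\sigma(Y)\cap\sigma(Z)$. After this replacement every term splits as $H(Q)$ times its coefficient plus a conditional term given $Q$. The coefficients sum to $(\lambda_0-\lambda_1+\lambda_2)+(\lambda_1-\lambda_2)=\lambda_0$ on $H_2(q)$, because the Marton term conditioned on $W\ni Q$ carries no $H(Q)$. The conditional terms are $q\,[\cdot]_{Q=1}+\overline q\,[\cdot]_{Q=0}$, and each bracket is the objective of $SR_{IB,1}(T_a)$ (respectively $T_b$) evaluated at the conditional law. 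Each is therefore at most $SR_{IB,1}^{\lambda_0,\lambda_1,\lambda_2,\alpha}(T_a)$ (respectively $T_b$). Maximizing over $q$ gives the upper bound.

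\emph{Step 2 (lower bound).} Take optimizers $p_a(w,u,v,x)$ for $T_a$ and $p_b$ for $T_b$ and a value $q$. Define $W'=(W,Q)$ with $Q\sim\mathrm{Bern}(q)$, and run $p_a$ when $Q=1$ and $p_b$ when $Q=0$, with $U,V$ unchanged. Since $Q$ is a function of $Y$ and of $Z$, we get $I(W';Y)=H_2(q)+qI_a(W;Y)+\overline q I_b(W;Y)$, and the same holds for $Z$ and for $I(U,W';Y)$. The conditional Marton term splits exactly as $q(\cdot)_a+\overline q(\cdot)_b$ with no $H_2$ contribution. Summing with the weights gives a total coefficient $\lambda_0$ on $H_2(q)$, which proves the lower bound and hence equality.
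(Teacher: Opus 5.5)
Your high-level plan matches the paper's. You force the component indicator $Q$ into the common auxiliary, then split the objective into $\lambda_0H_2(q)$ plus $q$ and $\overline q$ times the component objectives; your Step~2 construction for the reverse inequality is fine. But the step that carries essentially all the content of the lemma is only asserted (``up to exactly these corrections''): that replacing $W$ by $(W,Q)$ does not decrease the objective. The accounting you offer around it is also incorrect in three places:
\begin{itemize}
\item Since $Q$ is a function of $Y$, one has $I(U,W;Y)=H(Q)+I(U,W;Y|Q)-H(Q|U,W)$, not $H(Q)+I(U,W;Y|Q)$.
\item Your displayed $\Delta$ reduces to $I(Q;U|W)+I(Q;V|W)-H(Q|U,W)-H(Q|V,W)$. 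The true loss from conditioning the Marton term on $Q$ is $\Delta=I(U,V;Q|W)=H(Q|W)-H(Q|U,V,W)\ge 0$. For constant $U,V$ yours gives $-2H(Q|W)$, while the truth is $0$.
\item The common-message and $(U,W)$ terms contribute $\lambda_0H(Q)-(\lambda_0-\lambda_1+\lambda_2)H(Q|W)-(\lambda_1-\lambda_2)H(Q|U,W)$, not ``$(\lambda_0-\lambda_2)H(Q)$ minus nonnegative terms.'' With $W=Q$ and $U$ constant the excess is exactly $\lambda_0H(Q)$.
\end{itemize}
Your argument never invokes $\lambda_0\ge\lambda_1$, which is precisely what the balancing you flag as the main obstacle requires.

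The missing computation is short. Let $F(W)$ denote the $SR_{IB,1}$ objective with $(U,V,X)$ held fixed. Use $I(W,Q;Y)-I(W;Y)=I(W,Q;Z)-I(W;Z)=H(Q|W)$, $I(U,W,Q;Y)-I(U,W;Y)=H(Q|U,W)$, and the corrected $\Delta$ above for the Marton term. Then
\begin{align*}
F(W,Q)-F(W)&=(\lambda_0-\lambda_1+\lambda_2)H(Q|W)+(\lambda_1-\lambda_2)H(Q|U,W)-\lambda_2\bigl(H(Q|W)-H(Q|U,V,W)\bigr)\\
&=(\lambda_0-\lambda_1)H(Q|W)+(\lambda_1-\lambda_2)H(Q|U,W)+\lambda_2H(Q|U,V,W)\;\ge\;0 .
\end{align*}
So the Marton loss $\lambda_2I(U,V;Q|W)\le\lambda_2H(Q|W)$ is absorbed by the common-message gain exactly because $\lambda_0\ge\lambda_1$; this is the paper's key step.

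Once $Q$ is inside $W$, your bookkeeping is correct. The total coefficient on $H_2(q)$ is $\lambda_0$. The conditional brackets are bounded by $SR_{IB,1}^{\lambda_0,\lambda_1,\lambda_2,\alpha}(T_a)$ and $SR_{IB,1}^{\lambda_0,\lambda_1,\lambda_2,\alpha}(T_b)$, and the Markov chain is preserved because $Q$ is a function of $X$. The detour through $(U,Q),(V,Q)$ is unnecessary. For the $SR_{IB,2}$ identity the analogous difference is $(\lambda_0-\lambda_2)H(Q|W)+(\lambda_2-\lambda_1)H(Q|V,W)+\lambda_1H(Q|U,V,W)\ge 0$.
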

\begin{proof}
    We prove \eqref{eqn:martonsumcharacterization1}. We first claim that there exist maximizers of 
\( SR_{IB,1}^{\lambda_0,\lambda_1,\lambda_2,\alpha}(T) \) 
of the form 
\( ((W,Q), U, V, (X_Q, Q)) \) 
such that \( Q \) is a function of \( W \). To see this,  
    suppose $(W^*,U^*,V^*,X^*)$ is an optimal random variable tuple for the following maximization problem, which defines $SR_{IB,1}^{\lambda_0,\lambda_1,\lambda_2,\alpha}(T)$:
\begin{align*}&\max_{p(w,u,v,x)}\alpha (\lambda_0-\lambda_1+\lambda_2)I(W;Y)+\overline\alpha (\lambda_0-\lambda_1+\lambda_2)I(W;Z)\\
        &\qquad +(\lambda_1-\lambda_2)I(U,W;Y)+\lambda_2(I(U;Y|W)+ I(V;Z|W)-I(U;V|W)).
    \end{align*}
    
    Let $Q^*\in\{a,b\}$ be such that $p(Q^*=a) = p(X^*\in\mathcal{X}_a)=q^*$, and note we can write $X^* = (X_Q^*,Q^*)$. We now show that $(\tilde W, U^*, V^*,X^*)$ is also an optimal tuple for the same maximization problem, where $\tilde W = (W^*,Q^*)$:
    \begin{align*}
        &\alpha (\lambda_0-\lambda_1+\lambda_2)I(\tilde W;Y)+\overline\alpha (\lambda_0-\lambda_1+\lambda_2)I(\tilde W;Z)+(\lambda_1-\lambda_2)I(U^*,\tilde W;Y)\\
        &+\lambda_2(I(U^*;Y|\tilde W)+ I(V^*;Z|\tilde W)-I(U^*;V^*|\tilde W))\\
        =&\alpha (\lambda_0-\lambda_1+\lambda_2)I( W^*;Y)+\overline\alpha (\lambda_0-\lambda_1+\lambda_2)I(W^*;Z)+(\lambda_1-\lambda_2)I(U^*,W^*;Y)\\
        &+\lambda_2(I(U^*;Y,Q^*|W^*)+ I(V^*;Z,Q^*|W^*)-I(U^*;V^*|W^*))\\
        &+\alpha (\lambda_0-\lambda_1+\lambda_2)I(Q^*;Y|W^*)+\overline\alpha (\lambda_0-\lambda_1+\lambda_2)I(Q^*;Z|W^*)+(\lambda_1-\lambda_2)I(Q^*;Y|U^*,W^*)\\
        &+\lambda_2(-I(U^*;Q^*|W^*)-I(V^*;Q^*|W^*)+I(U^*;V^*|W^*)-I(U^*;V^*|W^*,Q^*))\\
        \stackrel{(a)}=&\alpha (\lambda_0-\lambda_1+\lambda_2)I( W^*;Y)+\overline\alpha (\lambda_0-\lambda_1+\lambda_2)I(W^*;Z)+(\lambda_1-\lambda_2)I(U^*,W^*;Y)\\
        &+\lambda_2(I(U^*;Y|W^*)+ I(V^*;Z|W^*)-I(U^*;V^*|W^*))\\
        &+(\lambda_0-\lambda_1)H(Q^*|W^*)+(\lambda_1-\lambda_2)H(Q^*|U^*,W^*)+\lambda_2 H(Q^*|U^*,V^*,W^*)\\
        \geq &\alpha (\lambda_0-\lambda_1+\lambda_2)I( W^*;Y)+\overline\alpha (\lambda_0-\lambda_1+\lambda_2)I(W^*;Z)+(\lambda_1-\lambda_2)I(U^*,W^*;Y)\\
        &+\lambda_2(I(U^*;Y|W^*)+ I(V^*;Z|W^*)-I(U^*;V^*|W^*)),
    \end{align*}
    where $(a)$ follows because $Q^*$ is a function of $Y$ or $Z$.
    
    The above argument establishes our claim that maximizers of the form $((W,Q),U,V,(X_Q,Q))$ exist, with $Q$ a function of $W$.    
    By this property of the maximizers, we obtain:
    \begin{align*}
        &SR_{IB, 1}^{\lambda_0,\lambda_1,\lambda_2,\alpha}(T)
        \\
        &\quad =\max_{p(w,u,v,x_q,q)}\alpha (\lambda_0-\lambda_1+\lambda_2)I(W,Q;Y)+\overline\alpha (\lambda_0-\lambda_1+\lambda_2)I(W,Q;Z)\\
        &\qquad +(\lambda_1-\lambda_2)I(U,W,Q;Y)+\lambda_2(I(U;Y|W,Q)+ I(V;Z|W,Q)-I(U;V|W,Q))\\
        &\quad =\max_{p(w,u,v,x_q,q)} \lambda_0 H(Q)+\alpha (\lambda_0-\lambda_1+\lambda_2)I(W;Y|Q)+\overline\alpha (\lambda_0-\lambda_1+\lambda_2)I(W;Z|Q)\\
        &\qquad +(\lambda_1-\lambda_2)I(U,W;Y|Q)+\lambda_2(I(U;Y|W,Q)+ I(V;Z|W,Q)-I(U;V|W,Q))\\
        &\quad =\max_{q} \lambda_0 H_2(q)+q\bigg(\max_{p(w,u,v,x_a|q=a)}\alpha (\lambda_0-\lambda_1+\lambda_2)I(W;Y_a|Q=a)+\overline\alpha (\lambda_0-\lambda_1+\lambda_2)I(W;Z_a|Q=a)\\
        &\qquad +(\lambda_1-\lambda_2)I(U,W;Y_a|Q=a)+\lambda_2(I(U;Y_a|W,Q=a)+ I(V;Z_a|W,Q=a)-I(U;V|W,Q=a))\bigg)\\
        &\qquad +\overline q\bigg(\max_{p(w,u,v,x_b|q=b)}\alpha (\lambda_0-\lambda_1+\lambda_2)I(W;Y_b|Q=b)+\overline\alpha (\lambda_0-\lambda_1+\lambda_2)I(W;Z_b|Q=b)\\
        &\qquad +(\lambda_1-\lambda_2)I(U,W;Y_b|Q=b)+\lambda_2(I(U;Y_b|W,Q=b)+ I(V;Z_b|W,Q=b)-I(U;V|W,Q=b))\bigg)\\
        &\quad =\max_{q} \lambda_0 H_2(q)+q\bigg(\max_{p(w_a,u_a,v_a,x_a)}\alpha (\lambda_0-\lambda_1+\lambda_2)I(W_a;Y_a)+\overline\alpha (\lambda_0-\lambda_1+\lambda_2)I(W_a;Z_a)\\
        &\qquad +(\lambda_1-\lambda_2)I(U_a,W_a;Y_a)+\lambda_2(I(U_a;Y_a|W_a)+ I(V_a;Z_a|W_a)-I(U_a;V_a|W_a))\bigg)\\
        &\qquad+\overline q\bigg(\max_{p(w_b,u_b,v_b,x_b)}\alpha (\lambda_0-\lambda_1+\lambda_2)I(W_b;Y_b)+\overline\alpha (\lambda_0-\lambda_1+\lambda_2)I(W_b;Z_b)\\
        &\qquad +(\lambda_1-\lambda_2)I(U_b,W_b;Y_b)+\lambda_2(I(U_b;Y_b|W_b)+ I(V_b;Z_b|W_b)-I(U_b;V_b|W_b))\bigg)\\
        &\quad =\max_{q} \lambda_0 H_2(q) + q SR_{IB,1}^{\lambda_0,\lambda_1,\lambda_2,\alpha}(T_a) + \overline{q} SR_{IB,1}^{\lambda_0,\lambda_1,\lambda_2,\alpha}(T_b).
    \end{align*}
    This proves \eqref{eqn:martonsumcharacterization1}. The identity \eqref{eqn:martonsumcharacterization2} can be shown by an entirely analogous argument.
\end{proof}
\begin{corollary}
\label{MartonCharacterizationSumChannel} Let  $T=T_a\oplus T_b$ be the sum of two broadcast channels $T_a$ and $T_b$. Then, for any $\lambda_0 \geq \lambda_1 \geq \lambda_2 \geq 0$, we have
    \begin{align}
        & \max_{(R_0,R_1,R_2)\in \mathcal{M}(T)}\lambda_0 R_0+\lambda_1 R_1+\lambda_2 R_2   \nonumber 
        \\&\nonumber\quad= \min_{\alpha\in[0,1]}SR_{IB, 1}^{\lambda_0,\lambda_1,\lambda_2,\alpha}(T)\\& \quad=\min_{\alpha \in [0,1]}  \max_{q} \lambda_0 H_2(q) + q SR_{IB,1}^{\lambda_0,\lambda_1,\lambda_2,\alpha}(T_a) + \overline{q} SR_{IB,1}^{\lambda_0,\lambda_1,\lambda_2,\alpha}(T_b)
         \label{eq:keydec22}
         \\&\quad=\min_{\alpha \in [0,1]}\lambda_0\log\left(2^{\frac{1}{\lambda_0}SR_{IB,1}^{\lambda_0,\lambda_1,\lambda_2,\alpha}(T_a)} + 2^{\frac{1}{\lambda_0}SR_{IB,1}^{\lambda_0,\lambda_1,\lambda_2,\alpha}(T_b)}\right). \label{eq:keydec23}
    \end{align}
Similarly, for any $\lambda_0\geq\lambda_2\geq \lambda_1\geq 0$, we have
\begin{align}
        & \max_{(R_0,R_1,R_2)\in \mathcal{M}(T)}\lambda_0 R_0+\lambda_1 R_1+\lambda_2 R_2   \nonumber 
        \\&\nonumber \quad=\min_{\alpha\in[0,1]} SR_{IB, 2}^{\lambda_0,\lambda_1,\lambda_2,\alpha}(T)\\&\quad=\min_{\alpha \in [0,1]}  \max_{q} \lambda_0 H_2(q) + q SR_{IB,2}^{\lambda_0,\lambda_1,\lambda_2,\alpha}(T_a) + \overline{q} SR_{IB,2}^{\lambda_0,\lambda_1,\lambda_2,\alpha}(T_b)
         \label{eq:keydec22b}
         \\&\quad=\min_{\alpha \in [0,1]}\lambda_0\log\left(2^{\frac{1}{\lambda_0}SR_{IB,2}^{\lambda_0,\lambda_1,\lambda_2,\alpha}(T_a)} + 2^{\frac{1}{\lambda_0}SR_{IB,2}^{\lambda_0,\lambda_1,\lambda_2,\alpha}(T_b)}\right). \label{eq:keydec23b}
    \end{align}
\end{corollary}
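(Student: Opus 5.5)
The corollary is obtained by chaining three results already in place, so the plan is to verify each equality in turn. Throughout, fix $\lambda_0 \geq \lambda_1 \geq \lambda_2 \geq 0$; the case $\lambda_0\geq\lambda_2\geq\lambda_1\geq 0$ is handled by the same argument with the subscript $2$ quantities.

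\emph{First equality.} This is exactly Lemma~\ref{MartonCharacterization} applied to the channel $T=T_a\oplus T_b$. That lemma holds for an arbitrary broadcast channel, and $T$ is a legitimate DM-BC on the disjoint-union alphabets, so nothing more is needed.

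\emph{Second equality, \eqref{eq:keydec22}.} Apply \eqref{eqn:martonsumcharacterization1} pointwise in $\alpha$. For each fixed $\alpha\in[0,1]$ it rewrites $SR_{IB,1}^{\lambda_0,\lambda_1,\lambda_2,\alpha}(T)$ as $\max_q \lambda_0 H_2(q)+qSR_{IB,1}^{\lambda_0,\lambda_1,\lambda_2,\alpha}(T_a)+\overline q SR_{IB,1}^{\lambda_0,\lambda_1,\lambda_2,\alpha}(T_b)$. Since the identity holds for every $\alpha$, taking $\min_\alpha$ of both sides preserves it.

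\emph{Third equality, \eqref{eq:keydec23}.} For each fixed $\alpha$, invoke the identity \eqref{rem:opt} with
\[
\lambda=\lambda_0,\qquad \alpha\leftarrow SR_{IB,1}^{\lambda_0,\lambda_1,\lambda_2,\alpha}(T_a),\qquad \beta\leftarrow SR_{IB,1}^{\lambda_0,\lambda_1,\lambda_2,\alpha}(T_b).
\]
The substituted quantities are renamed to avoid a clash with the parameter $\alpha$. The hypotheses of \eqref{rem:opt} are $\lambda>0$ and $\alpha,\beta\geq 0$.
\begin{itemize}
\item Nonnegativity of both $SR$ values follows by choosing constant $W,U,V$, which makes every term in \eqref{defSR_IB,1} vanish; hence the maximum is at least $0$.
\item The degenerate case $\lambda_0=0$ forces $\lambda_1=\lambda_2=0$. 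Both sides are then $0$, reading the log-expression as its limit $\max\{\cdot,\cdot\}=0$, so this case is trivial and can be excluded.
\end{itemize}
The only step requiring care is justifying \eqref{rem:opt} itself. This is the standard Gibbs/log-sum argument: $\lambda_0\bigl(q\log\frac{A}{q}+\overline q\log\frac{B}{\overline q}\bigr)\leq \lambda_0\log(A+B)$ by concavity of $\log$, with equality at $q=A/(A+B)$. The paper already records this, so I would just cite it.

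No real obstacle is expected; the corollary is essentially bookkeeping. The mild points to state explicitly are that each identity holds uniformly in $\alpha$, which licenses passing the outer $\min_\alpha$ through, and the nonnegativity check above.
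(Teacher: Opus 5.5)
Your proposal is correct and is essentially the paper's proof. The first equality is Lemma~\ref{MartonCharacterization} applied to $T$; \eqref{eq:keydec22} is \eqref{eqn:martonsumcharacterization1} applied for each fixed $\alpha$; \eqref{eq:keydec23} is the identity \eqref{rem:opt}; and the subscript-$2$ case is handled analogously. Your extra checks (nonnegativity of the $SR_{IB,1}$ values via constant auxiliaries, and reading the $\lambda_0=0$ case as a limit) are details the paper leaves implicit and do not change the argument.
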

\begin{proof}
    \eqref{eq:keydec22} and \eqref{eq:keydec22b} are direct results of \eqref{eqn:martonsumcharacterization1} and \eqref{eqn:martonsumcharacterization2};
    while \eqref{eq:keydec23} and \eqref{eq:keydec23b} are obtained by applying \eqref{rem:opt}.
\end{proof}

\section{The Gap between the UVW Outer Bound and Marton's Inner Bound}\label{sec:example_reversely-semi}

   \begin{figure}
\centering
\begin{minipage}{0.45\linewidth}
\resizebox{0.7\linewidth}{!}{%
\centering
\resizebox{40mm}{!}{%
\begin{tikzpicture}[
    >=stealth,
    xnode/.style={draw, circle, minimum size=7mm},
    ynode/.style={draw, circle, minimum size=7mm},
    znode/.style={draw, circle, minimum size=7mm},
    every edge/.style={->, thick}
]

\node[xnode] (x1) at (0,3) {$1$};
\node[xnode] (x2) at (0,1) {$2$};
\node[xnode] (x3) at (0,-1) {$3$};
\node[xnode] (x4) at (0,-3) {$4$};

\node[ynode] (y1) at (4,4.0) {$1$};
\node[ynode] (y2) at (4,3.0) {$2$};

\node[znode] (z1) at (4,1.0) {$1$};
\node[znode] (z2) at (4,0.0) {$2$};
\node[znode] (z3) at (4,-1.0) {$3$};
\node[znode] (z4) at (4,-2.0) {$4$};
\node[znode] (z5) at (4,-3.0) {$5$};
\node[znode] (z6) at (4,-4.0) {$6$};

\node[above=2mm] at (x1.north) {$X_a\in\{1,2,3,4\}$};
\node[above=2mm] at (y1.north) {$Z_a\in\{1,2\}$};
\node[below=2mm] at (z6.south) {$Y_a\in\{1,2,3,4,5,6\}$};

\draw[->, blue, thick] (x1) -- (y1);
\draw[->, blue, thick] (x2) -- (y1);
\draw[->, blue, thick] (x3) -- (y2);
\draw[->, blue, thick] (x4) -- (y2);

\draw[->, red, thick] (x1) --  (z1);
\draw[->, red, thick] (x1) --  (z2);
\draw[->, red, thick] (x1) -- (z3);

\draw[->, red, thick] (x2) -- (z1);
\draw[->, red, thick] (x2) -- (z4);
\draw[->, red, thick] (x2) -- (z5);

\draw[->, red, thick] (x3) -- (z2);
\draw[->, red, thick] (x3) -- (z4);
\draw[->, red, thick] (x3) -- (z6);

\draw[->, red, thick] (x4) -- (z3);
\draw[->, red, thick] (x4) -- (z5);
\draw[->, red, thick] (x4) -- (z6);

\end{tikzpicture}}}
\end{minipage}
\hfill
\begin{minipage}{0.45\linewidth}
\resizebox{0.7\linewidth}{!}{%
\centering
\resizebox{40mm}{!}{%
\begin{tikzpicture}[
    >=stealth,
    xnode/.style={draw, circle, minimum size=7mm},
    ynode/.style={draw, circle, minimum size=7mm},
    znode/.style={draw, circle, minimum size=7mm},
    every edge/.style={->, thick}
]

\node[xnode] (x1) at (0,3) {$1$};
\node[xnode] (x2) at (0,1) {$2$};
\node[xnode] (x3) at (0,-1) {$3$};
\node[xnode] (x4) at (0,-3) {$4$};

\node[ynode] (y1) at (4,4) {$1$};
\node[ynode] (y2) at (4,3) {$2$};

\node[znode] (z1) at (4,1.0) {$1$};
\node[znode] (z2) at (4,0.0) {$2$};
\node[znode] (z3) at (4,-1.0) {$3$};
\node[znode] (z4) at (4,-2.0) {$4$};
\node[znode] (z5) at (4,-3.0) {$5$};
\node[znode] (z6) at (4,-4.0) {$6$};

\node[above=2mm] at (x1.north) {$X_b\in\{1,2,3,4\}$};
\node[above=2mm] at (y1.north) {$Y_b\in\{1,2\}$};
\node[below=2mm] at (z6.south) {$Z_b\in\{1,2,3,4,5,6\}$};

\draw[->, red, thick] (x1) -- (y1);
\draw[->, red, thick] (x2) -- (y1);
\draw[->, red, thick] (x3) -- (y2);
\draw[->, red, thick] (x4) -- (y2);

\draw[->, blue, thick] (x1) --  (z1);
\draw[->, blue, thick] (x1) --  (z2);
\draw[->, blue, thick] (x1) -- (z3);

\draw[->, blue, thick] (x2) -- (z1);
\draw[->, blue, thick] (x2) -- (z4);
\draw[->, blue, thick] (x2) -- (z5);

\draw[->, blue, thick] (x3) -- (z2);
\draw[->, blue, thick] (x3) -- (z4);
\draw[->, blue, thick] (x3) -- (z6);

\draw[->, blue, thick] (x4) -- (z3);
\draw[->, blue, thick] (x4) -- (z5);
\draw[->, blue, thick] (x4) -- (z6);

\end{tikzpicture}}}
\end{minipage}

\caption{A reversely semi-deterministic sum broadcast channel.}
\label{fig:f1}
\end{figure}

Consider the reversely semi-deterministic sum broadcast channel in Figure~\ref{fig:f1}. This channel resembles the example in \cite[Claim~3]{ggny14}. Assume that the transition probabilities are uniform over the possible outputs, i.e., the  red edges have probability $\frac{1}{3}$ in the first component and the blue edges have probability $\frac{1}{3}$ in the second component. By Lemma~\ref{lem:primaexam} and Lemma~\ref{lem:primaexam2}, $T_a, T_b\in \mathcal{P}^{\lambda_0,\lambda_1,\lambda_2}\cap \hat{\mathcal{P}}^{\lambda_0,\lambda_1,\lambda_2}$, since $T_a,T_b$ are both semi-deterministic. However, as we show in this section, the sum channel $T=T_a\oplus T_b$ does not fall into any primary class. Indeed, we show that the UVW sum-rate is greater than or equal to $5/2$, while Marton's sum-rate is $7/3$ for this channel.  Since $5/2 > 7/3$, this reversely semi-deterministic sum broadcast channel
exhibits a strict, non-zero gap between Marton’s inner bound and the UVW outer
bound. Consequently, this example illustrates that the UVW outer bound alone is
insufficient to establish capacity results for sum broadcast channels, even when
each component lies in $\mathcal{P}^{\lambda_0,\lambda_1,\lambda_2}$ or
$\hat{\mathcal{P}}^{\lambda_0,\lambda_1,\lambda_2}$, and thereby highlights the
novelty of the results established in this paper.

\subsection{The Sum-rate of Marton's inner bound}
In this section, we show that for the channel depicted in Figure~\ref{fig:f1}, we have
\begin{align*}
        \max_{(R_0,R_1,R_2)\in\mathcal{M}(T)} R_0+R_1+R_2 = \frac{7}{3}.
    \end{align*}

By setting $\lambda_0=\lambda_1=\lambda_2=1$, 
Marton's sum-rate can be evaluated, using Corollary~\ref{MartonCharacterizationSumChannel}, as
    \begin{align*}
        \max_{(R_0,R_1,R_2)\in\mathcal{M}(T)} R_0+R_1+R_2 \stackrel{(a)}=&        \min_{\alpha\in[0,1]}SR_{IB, 1}^{1,1,1,\alpha}(T)\\
        \stackrel{(b)}=&\min_{\alpha\in[0,1]}\log\left(2^{SR_{IB,1}^{1,1,1,\alpha}(T_a)} + 2^{SR_{IB,1}^{1,1,1,\alpha}(T_b)}\right).
    \end{align*}

    In \cite[Appendix~A and~B]{ggny14}, it is shown that, $SR_{IB,1}^{1,1,1,\alpha}(T_a)$ is achieved by a uniform input distribution, and evaluates to:
    \begin{align*}
        SR_{IB,1}^{1,1,1,\alpha}(T_a)= \begin{cases}
        \frac{5}{3}-\frac{2}{3}\alpha,& \alpha\in[0,\frac{1}{2}],\\
        \frac{4}{3},& \alpha\in(\frac{1}{2},1].
    \end{cases}
    \end{align*}
    Similarly,
    \begin{align*}
        SR_{IB,1}^{1,1,1,\alpha}(T_b)= \begin{cases}
        \frac{4}{3}, & \alpha\in[0,\frac{1}{2}],\\
        1+\frac{2}{3}\alpha& \alpha\in(\frac{1}{2},1].
    \end{cases}
    \end{align*}
    Thus,
    \begin{align*}
        \max_{(R_0,R_1,R_2)\in\mathcal{M}(T)} R_0+R_1+R_2 \stackrel{(a)}=&        \min_{\alpha\in[0,1]}SR_{IB, 1}^{1,1,1,\alpha}(T)\\
        \stackrel{(b)}=&\min_{\alpha\in[0,1]}\log\left(2^{SR_{IB,1}^{1,1,1,\alpha}(T_a)} + 2^{SR_{IB,1}^{1,1,1,\alpha}(T_b)}\right)\\
        =& \log\min\Bigg\{\min_{\alpha\in[0,1/2]}\left(2^{\frac{5}{3}-\frac{2}{3}\alpha} + 2^{ \frac{4}{3}}\right),
        \min_{\alpha\in[1/2,1]}\left(2^{\frac43} + 2^{ 1+\frac{2}{3}\alpha}\right)\Bigg\}
        \\ =& \frac{7}{3}.
    \end{align*}

\subsection{Lower bound on the sum-rate of the UVW outer bound}
In this section, we show that for the channel depicted in Figure~\ref{fig:f1},  the rate tuple $(R_0,R_1,R_2)=(0,5/4,5/4)$ belongs to the UVW outer bound. Consequently, 
we have
\begin{align*}
        \max_{(R_0,R_1,R_2)\in\mathcal{O}_{UVW}(T)} R_0+R_1+R_2 \geq \frac{5}{2}.
    \end{align*}

    Assume that $X_a$ and $X_b$ are independent and uniform on $\mathcal{X}_a$ and $\mathcal{X}_b$, respectively. Define $U_a,V_a,X_a,U_b,V_b,X_b$ with joint distribution of the form $p(x_a)p(x_b)p(u_a|x_a)p(v_a|x_a)p(u_b|x_b)p(v_b|x_b)$ as follows.
    
    Let $V_a = Z_a$ (so that $p(V_a=1)=p(V_a=2)=\frac{1}{2}$) and choose $p(u_a,x_a)$ such that
    \begin{align*}
        &p(X_a=1|U_a=1)=p(X_a=3|U_a=1)=\frac{1}{2},\\
        &p(X_a=2|U_a=1)=p(X_a=4|U_a=1)=0,\\&p(X_a=2|U_a=2)=p(X_a=4|U_a=2)=\frac{1}{2},\\
        &p(X_a=1|U_a=2)=p(X_a=3|U_a=2)=0,\\
        &p(U_a=1)=p(U_a=2)=\frac{1}{2}.
    \end{align*}
    Similarly, let $U_b= Y_b$ and choose $p(v_b,x_b)$ such that
    \begin{align*}
        &p(X_b=1|V_b=1)=p(X_b=3|V_b=1)=\frac{1}{2},\\
        &p(X_b=2|V_b=1)=p(X_b=4|V_b=1)=0,\\&p(X_b=2|V_b=2)=p(X_b=4|V_b=2)=\frac{1}{2},\\&p(X_b=1|V_b=2)=p(X_b=3|V_b=2)=0,\\
        &p(V_b=1)=p(V_b=2)=\frac{1}{2}.
    \end{align*}
    Let $Q\in\{a,b\}$ be a uniform Bernoulli random variable, independent of $U_a,V_a,X_a,U_b,V_b,X_b$. 
    Let
    \[
X = (Q, X_Q), \quad \tilde{U} = (Q, U_Q), \quad \tilde{V} = (Q, V_Q).
\]
Note that $I(Q;Y)=I(Q;Z)=H(Q)=1$, since $Q$ indicates the channel component and is therefore a function of either $Y$ or $Z$. 
    Let $\tilde W$ be a constant random variable. 
    
    We verify that the rate tuple $(R_0,R_1,R_2)=(0,5/4,5/4)$ belongs to the UVW outer bound for the choice of $(X, \tilde W, \tilde U,\tilde V)$. For $R_0$, we have
    \begin{align*}
        R_0\leq & \min\{I(\tilde W;Y),I(\tilde W;Z)\} = 0.
    \end{align*}
    
    For $R_0+R_1$, we have
    \begin{align*}
        R_0+R_1\leq &I(\tilde U,\tilde W;Y)\\
        =&I(U_Q,Q;Y)\\
        =&I(Q;Y)+\frac{1}{2} I(U_a;Y|Q=a)+\frac{1}{2} I(U_b;Y|Q=b)\\
        = & 1+\frac{1}{2} I(U_a;Y_a)+\frac{1}{2} I(U_b;Y_b)
        \\
        = & 1+\frac{1}{2}\cdot\frac13 +\frac{1}{2}\cdot 1\\
        =&\frac{5}{3},
    \end{align*}
     Similarly, one can show
    $$R_0+R_2\leq 1+\frac{1}{2} I(V_a;Z_a)+\frac{1}{2} I(V_b;Z_b)=\frac{5}{3}.$$
    For the sum-rate constraint, we obtain
    \begin{align*}
        R_0+R_1+R_2\leq & \min\{I(\tilde W;Y),I(\tilde W;Z)\}+I(\tilde U,\tilde W;Y)+I(X;Z|\tilde U,\tilde W)\\
        =&I(Q;Y)+\frac{1}{2}I(U_a;Y_a)+\frac{1}{2}I(U_b;Y_b)+\frac{1}{2}I(X_a;Z_a|U_a)+\frac{1}{2}I(X_b;Z_b|U_b)\\
        = &1+\frac{1}{2}\cdot \frac13+\frac{1}{2}\cdot 1 +\frac{1}{2}\cdot 1+\frac{1}{2}\cdot \frac23\\
        =&\frac{5}{2}.
    \end{align*}
    Similarly, one can compute
    \begin{align*}
        R_0+R_1+R_2\leq & \min\{I(\tilde W;Y),I(\tilde W;Z)\}+I(\tilde V,\tilde W;Z)+I(X;Y|\tilde V,\tilde W)=\frac{5}{2}.
    \end{align*}

    The rate tuple $(0,\frac{5}{4},\frac{5}{4})$ satisfies all the constraints and hence lies inside the rate region of UVW outer bound. As a result, sum-rate of the UVW outer bound for this channel is lower bounded by $0+\frac{5}{4}+\frac{5}{4}=\frac{5}{2}$.
    \section{Main Result}

\begin{theorem}\label{thm:maintheorem1} Let $T$ be a sum of two broadcast channels $T_a,T_b$, i.e., $T=T_a\oplus T_b$. Take some $\lambda_0\geq \max\{\lambda_1,\lambda_2\}\geq 0$. Assume that at least one of the following three conditions hold:
\vspace{0.2cm}
\begin{enumerate}    \renewcommand{\labelenumi}{(\alph{enumi})}
     \item 
      $T_a,T_b\in \mathcal{P}^{\lambda_0,\lambda_1,\lambda_2}$,\\
     \item $T_a\in\hat{\mathcal{P}}^{\lambda_0,\lambda_1,\lambda_2} $ and $T_b\in \mathcal{P}^{\lambda_0,\lambda_1,\lambda_2}$ (or $T_a\in\mathcal{P}^{\lambda_0,\lambda_1,\lambda_2}$ and $ T_b\in \hat{\mathcal{P}}^{\lambda_0,\lambda_1,\lambda_2}$),\\
     \item $T_a\in\hat{\mathcal{P}}^{\lambda_0,\lambda_1,\lambda_2}_{A}$ and $T_b\in \hat{\mathcal{P}}^{\lambda_0,\lambda_1,\lambda_2}_{B}$ (or $T_a\in\hat{\mathcal{P}}^{\lambda_0,\lambda_1,\lambda_2}_{B}$ and $T_b\in \hat{\mathcal{P}}^{\lambda_0,\lambda_1,\lambda_2}_{A}$).
\end{enumerate}
\vspace{0.2cm}
   Then,
   \begin{align*}
        \max_{(R_0,R_1,R_2)\in \mathcal{O}_{aux}(T)}\lambda_0 R_0+\lambda_1 R_1+\lambda_2 R_2&=\max_{(R_0,R_1,R_2)\in \mathcal{C}(T)}\lambda_0 R_0+\lambda_1 R_1+\lambda_2 R_2\\ &
        = \max_{(R_0,R_1,R_2)\in \mathcal{M}(T)}\lambda_0 R_0+\lambda_1 R_1+\lambda_2 R_2.
    \end{align*}
      In particular, if at least one of the above conditions holds for every choice of
weights $\lambda_0 \geq \max\{\lambda_1,\lambda_2\}\geq 0$, then the capacity
region is fully characterized and
\[
\mathcal{M}(T) = \mathcal{C}(T) = \mathcal{O}_{aux}(T).
\]
\end{theorem}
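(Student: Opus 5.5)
Since $\mathcal{M}(T)\subseteq\mathcal{C}(T)\subseteq\mathcal{O}_{aux}(T)$, it suffices to show that the weighted sum-rate of $\mathcal{O}_{aux}(T)$ is at most the right-hand side of Corollary~\ref{MartonCharacterizationSumChannel}, namely $\min_{\alpha}\lambda_0\log\big(2^{S_a(\alpha)/\lambda_0}+2^{S_b(\alpha)/\lambda_0}\big)$, where $S_\cdot(\alpha)=SR_{IB,1}^{\lambda_0,\lambda_1,\lambda_2,\alpha}(T_\cdot)$. We treat the case $\lambda_0\ge\lambda_1\ge\lambda_2$; the other case is symmetric.

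The key idea is to choose the auxiliary receivers in Theorem~\ref{th:obp} differently on the two components, since no single choice works on $T$. This is exactly why the UVW bound (the choice $G=Y$, $K$ constant) fails in Section~\ref{sec:example_reversely-semi}. Let $Q$ denote the component index, which is a function of $X$, of $Y$ and of $Z$. The choice we try first is
\begin{itemize}
\item $G=Y$ and $K=Q$ on component $a$;
\item $G=Q$ and $K=Z$ on component $b$ (or vice versa).
\end{itemize}
The effect is that one component sees the UVW-type bound with $Y$ as the reference receiver, and the other sees the analogous bound with the roles of the $\dagger$ and non-$\dagger$ variables swapped. This swap should produce, on component $b$, the Körner--Marton-type expressions $SR_{OB,1n}$ or $SR_{OB,1m}$.

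The steps are as follows.
\begin{enumerate}
\item Fix $\alpha$. Write $\min\{A,B\}\le \alpha A+\overline\alpha B$ for the common-message terms in \eqref{eqnSUM1a} and in the first minima of \eqref{eqnSUM6a}--\eqref{eqnSUM7a}.
\item Combine the constraints with weights $(\lambda_0-\lambda_1)$, $(\lambda_1-\lambda_2)$ and $\lambda_2$, using \eqref{eqnSUM1a}, \eqref{eqnSUM2a} and one of \eqref{eqnSUM6a}/\eqref{eqnSUM7a}. For each remaining inner $\min$, select whichever branch matches the hypothesis on each component.
\item Expand every mutual information by conditioning on $Q$. Because $Q$ is a function of $G,K,Y,Z$ and of $X$, each term $I(\cdot;\cdot)$ splits as $H(Q)$-type contributions plus $p(Q=a)\,(\cdots)_a+p(Q=b)\,(\cdots)_b$. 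In particular, $W^\dagger$ only matters through $G$ on component $b$ and $W$ only through $K$ on component $a$ (or the reverse), so the two components decouple.
\item Show that the $Q$-terms total at most $\lambda_0H(Q)$. This will need the chain rule $I(W^\dagger,K;G)=I(K;G)+\dots$ and the fact that $I(K;G)\ge H(Q)$ once $Q$ is revealed in both.
\item Bound the per-component bracket on $a$ by an expression of the form $SR_{OB,1p}$, $SR_{OB,1m}$ or $SR_{OB,1n}$ of $T_a$, and likewise on $b$.
\item Invoke the primary-class hypotheses to replace each of these by $S_a(\alpha)$ and $S_b(\alpha)$ respectively. This gives the upper bound $\max_q \lambda_0H_2(q)+qS_a+\overline qS_b$, and \eqref{rem:opt} together with Corollary~\ref{MartonCharacterizationSumChannel} then finishes the proof.
\end{enumerate}

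The three hypotheses correspond to which branches are available on each component. Under (a), each component needs only $\min\{SR_{OB,1m},SR_{OB,1n}\}$, so we choose the branch per component and per $\alpha$; here we may even use the same UVW-like assignment, since the inner $\min$ of \eqref{eqnSUM6a} gives the ``m'' form on one side and \eqref{eqnSUM7a} gives the ``n'' form. Under (b) and (c), the asymmetric assignment of $(G,K)$ is what lets $T_a$ receive the ``p'' (respectively ``m'') bound while $T_b$ receives its own ``m'' or ``p'' bound.

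The main obstacle is the bookkeeping in steps 4--5. The sum-rate constraints \eqref{eqnSUM6a}--\eqref{eqnSUM7a} mix $\dagger$ and non-$\dagger$ variables, and we must verify that after conditioning on $Q$ each component's contribution collapses to exactly one of the single-letter $SR_{OB}$ expressions. We must also check that the cross terms vanish, such as $I(X;K|U,W,G)$ on component $a$ where $K=Q$ is deterministic given $Q$, and $I(U^\dagger;G|W^\dagger,K)$ on component $b$ where $G=Q$. Finally, we must confirm that the $H(Q)$ contributions carry total weight exactly $\lambda_0$. If the naive assignment fails to produce the right form for case (c), we will try instead $G=(Y\text{ on }a,\;Q\text{ on }b)$ with $K=(Q\text{ on }a,\;Z\text{ on }b)$, exploiting that $\hat{\mathcal P}_A$ controls $SR_{OB,1p}$ while $\hat{\mathcal P}_B$ controls $SR_{OB,1m}$.
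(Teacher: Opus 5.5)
Your setup is the paper's. The assignment $G=Y,K=Q$ on $a$ and $G=Q,K=Z$ on $b$ is exactly the choice $G=(Q,\tilde G)$, $K=(Q,\tilde K)$ that yields $\mathcal O_2$, and its mirror yields $\mathcal O_1$. The weighting, the $H(Q)$ bookkeeping (total weight $(\lambda_0-\lambda_1)+(\lambda_1-\lambda_2)+\lambda_2=\lambda_0$) and the final appeal to the hypotheses and Corollary~\ref{MartonCharacterizationSumChannel} also match.

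The gap is in where the per-component choice between the ``m'' and ``n'' forms comes from. Write $p_\cdot,m_\cdot,n_\cdot$ for $SR_{OB,1p},SR_{OB,1m},SR_{OB,1n}$ of the indicated component, in the case $\lambda_1\ge\lambda_2$. Under one fixed assignment, the inner $\min$ of each sum-rate constraint survives on only one component, and the other component's form is forced:
\begin{itemize}
\item for your assignment, \eqref{eqnSUM6a} yields the pair $(p_a,m_b)$ (this is \eqref{eqn:o24}) and \eqref{eqnSUM7a} yields $(n_a,p_b)$ (this is \eqref{eqn:o25});
\item the mirror assignment yields $(m_a,p_b)$ and $(p_a,n_b)$.
\end{itemize}
So you cannot ``select whichever branch matches the hypothesis on each component'' inside one constraint. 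A single fixed assignment reaches only three of the four combinations in $\{m_a,n_a\}\times\{m_b,n_b\}$.

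Cases (a) and (b) (e.g.\ $T_a\in\hat{\mathcal P}_A^{\lambda_0,\lambda_1,\lambda_2}$, $T_b\in\mathcal P^{\lambda_0,\lambda_1,\lambda_2}$) need $p_a$ together with $\min\{m_b,n_b\}$, and which of $m_b,n_b$ is smaller can depend on $\alpha$. The step you are missing is to intersect both mirror-image bounds. They give $\min_\alpha\lambda_0\log(2^{p_a/\lambda_0}+2^{n_b/\lambda_0})$ and $\min_\alpha\lambda_0\log(2^{p_a/\lambda_0}+2^{m_b/\lambda_0})$. Two facts combine them:
\begin{itemize}
\item the minimum over $\alpha$ commutes with taking the minimum of the two bounds;
\item $(x,y)\mapsto\lambda_0\log(2^{x/\lambda_0}+2^{y/\lambda_0})$ is increasing in each argument.
\end{itemize}
Together these give \eqref{eqn:obwsr1}. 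Lemma~\ref{lem:srobcompare} ($p_a\le\min\{m_a,n_a\}$) then settles (a), and the hypotheses settle (b). Case (c) does go through with a single appropriate assignment, so the combination step is needed only for (a) and (b).

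Relatedly, your remark that under (a) you ``may even use the same UVW-like assignment'' is false. The example of Section~\ref{sec:example_reversely-semi}, which you yourself cite, refutes it. Both of its components are semi-deterministic, hence in $\mathcal P^{\lambda_0,\lambda_1,\lambda_2}$ by Lemma~\ref{lem:primaexam}, yet the UVW sum-rate is at least $5/2>7/3$. With $G=Y$ on both components, the $\min$ in the sum-rate bound sits outside the average over $Q$. The branch must then be chosen jointly for both components, which is precisely what (a) cannot tolerate.

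Two minor points:
\begin{itemize}
\item In step 4, what you need is the upper bound $I(K;G)=H(Q)+I(\tilde K;\tilde G|Q)=H(Q)$, which holds because one of $\tilde G,\tilde K$ is trivial on each component. You also need $I(W;Q),I(W^\dagger;Q),I(U^\dagger,W^\dagger;Q)\le H(Q)$. The inequality $I(K;G)\ge H(Q)$ that you state points the wrong way.
\item The fallback assignment in your last sentence is identical to your first one. The genuinely different option is the mirror image, which is what the fix above uses.
\end{itemize}
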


The proof of the above theorem is given in Section~\ref{thm4proofl}.

\begin{remark}
    We are not able to show that $\hat{\mathcal{P}}^{\lambda_0,\lambda_1,\lambda_2}_{A}$ and $\hat{\mathcal{P}}^{\lambda_0,\lambda_1,\lambda_2}_{B}$ are closed under the channel-sum operation, i.e., that a broadcast channel $T$ belongs to $\hat{\mathcal{P}}^{\lambda_0,\lambda_1,\lambda_2}_{A}$ whenever it can be written as the sum of broadcast channels $T_a,T_b\in\hat{\mathcal{P}}^{\lambda_0,\lambda_1,\lambda_2}_{A}$ (and similarly for $\hat{\mathcal{P}}^{\lambda_0,\lambda_1,\lambda_2}_{B}$). However, for the sum of two more-capable channels $T_a,T_b\in\hat{\mathcal{P}}^{\lambda_0,\lambda_1,\lambda_2}_{A}$, where $Z_a\overset{M.C.}{\succeq}Y_a$ and $Z_b\overset{M.C.}{\succeq}Y_b$, one can easily show that  $Z_a\oplus Z_b\overset{M.C.}{\succeq}Y_a\oplus Y_b$, so $T_a\oplus T_b$ is still a more-capable channel. Therefore, by Lemma~\ref{lem:primaexam2} and Lemma~\ref{lem:primcap}, $T_a\oplus T_b\in\hat{\mathcal{P}}^{\lambda_0,\lambda_1,\lambda_2}_{A}$, and Marton's inner bound is tight for this sum channel. \hfill $\diamondsuit$
\end{remark}
A direct consequence of Theorem~\ref{thm:maintheorem1}, Lemma~\ref{lem:primaexam}, and Lemma~\ref{lem:primaexam2} is the following: for a sum of broadcast channels, if both components belong to one of the following categories—degraded, less-noisy, more-capable, deterministic, or semi-deterministic broadcast channels—then Marton's inner bound $\mathcal{M}(T)$ and the auxiliary-receiver outer bound $\mathcal{O}_{aux}(T)$ coincide.
This result generalizes a previously known result for the case of the sum of two reversely degraded broadcast channels \cite{g80}.

\subsection{Some Useful Preliminaries}
\begin{lemma}
    Take a sum channel $T=T_a\oplus T_b$, and let $Q\in\{a,b\}$ indicate the component of the sum channel. Further denote $q:=\mathbb{P}(Q=a)$. Then, any achievable triple $(R_0,R_1,R_2)$ for the sum channel $T=T_a\oplus T_b$ lies in the intersection of the regions $\mathcal{O}_1$ and $\mathcal{O}_2$ defined as follows:
the region $\mathcal{O}_1$ is given by
\begin{subequations}\begin{align}
R_0 & \leq H_2(q)+\min \big\{ q I(W;Y_a)+ \overline q I(W^\dagger ; Y_b),~ q I(W; Z_a) + \overline q I(W^\dagger ;Z_b) \big \}, \label{eqn:o11}\\
R_0 + R_1 & \leq H_2(q) + q I(U,W; Y_a) + \overline q I(U^\dagger ,W^\dagger ; Y_b),\label{eqn:o12}\\
R_0 + R_2 & \leq H_2(q)+ q I(V, W;Z_a) + \overline q I(V^\dagger ,W^\dagger;Z_b),\label{eqn:o13}\\
R_0 + R_1 + R_2 & \leq H_2(q) + \min \{ q I(W; Y_a) + \overline q I(W^\dagger; Y_b),~ q I(W; Z_a) + \overline q I(W^\dagger; Z_b) \} \nonumber \\
&\qquad+ q \left(I(U;Y_a|W) + I(X_a;Z_a|U, W)\right)\nonumber \\
& \qquad + \overline{q} \min \big\{ I(U^\dagger;Y_b|W^\dagger) + I(X_b; Z_b|U^\dagger, W^\dagger),~   I(V^\dagger;Z_b|W^\dagger) +  I(X_b; Y_b|V^\dagger, W^\dagger) \big\},\label{eqn:o14} \\
R_0 + R_1 + R_2 & \leq H_2(q) + \min \{ q I(W; Y_a) + \overline q I(W^\dagger; Y_b),~ q I(W; Z_a) + \overline q I(W^\dagger; Z_b) \} \nonumber \\
&\qquad+ \overline q (I(V^\dagger;Z_b|W^\dagger) +I(X_b;Y_b|V^\dagger,W^\dagger))\nonumber \\
& \qquad + q \min \big\{ I(U;Y_a|W) + I(X_a; Z_a|U, W),~ I(V;Z_a|W)+I(X_a;Y_a|V,W) \big\},\label{eqn:o15}
\end{align}\end{subequations}
for some $q\in[0,1]$, $p(w, u,v,x_a)T_a(y_a,z_a|x_a)$ and $p(w^\dagger, u^\dagger, v^\dagger,x_b)T_b(y_b,z_b|x_b)$.

The region $\mathcal{O}_2$ is given by
 \begin{subequations}
\begin{align}
R_0 & \leq H_2(q)+\min \big\{ q I(W^\dagger;Y_a)+\overline q I(W; Y_b), q I(W^\dagger; Z_a) + \overline q I(W;Z_b) \big \}, \label{eqn:o21} \\
R_0 + R_1 & \leq H_2(q) + q I(U^\dagger,W^\dagger; Y_a) + \overline q I(U,W; Y_b), \label{eqn:o22} \\
R_0+R_2&\leq H_2(q)+qI(V^\dagger,W^\dagger;Z_a)+\overline q I(V,W;Z_b),\label{eqn:o23}\\
R_0 + R_1 + R_2 & \leq H_2(q) + \min \{q I(W^\dagger; Y_a) + \overline qI(W; Y_b),~ q I(W^\dagger; Z_a) + \overline q I(W; Z_b) \} \nonumber 
 \\
& \qquad +  \overline q( I(U;Y_b|W) + I(X_b; Z_b|U, W))\nonumber \\
&\qquad+q \min\{I(U^\dagger ;Y_a|W^\dagger)+I(X_a;Z_a|U^\dagger, W^\dagger),~ I(V^\dagger;Z_a|W^\dagger) + I(X_a;Y_a|V^\dagger, W^\dagger)\},
\label{eqn:o24}\\
R_0 + R_1 + R_2 & \leq H_2(q) + \min \{q I(W^\dagger; Y_a) + \overline qI(W; Y_b),~ q I(W^\dagger; Z_a) + \overline q I(W; Z_b) \} \nonumber \\
&\qquad+q(I(V^\dagger;Z_a|W^\dagger )+I(X_a;Y_a|V^\dagger, W^\dagger))\nonumber\\
&\qquad +\overline q \min\{I(U;Y_b|W)+I(X_b;Z_b|U,W),~I(V;Z_b|W)+I(X_b;Y_b|V,W)\}, \label{eqn:o25}
\end{align}
\end{subequations}
for some $q\in[0,1]$,  $p(w^\dagger, u^\dagger , v^\dagger ,x_a)T_a(y_a,z_a|x_a)$ and $p(w,u,v,x_b)T_b(y_b,z_b|x_b)$.\end{lemma}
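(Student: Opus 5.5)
We will obtain $\mathcal{O}_1$ and $\mathcal{O}_2$ as two specific instances of the auxiliary receiver outer bound $\mathcal{O}_{aux}(T)$ of Theorem~\ref{th:obp}. Each instance uses auxiliary receivers that look like the true receivers on one component and are blind on the other. For $\mathcal{O}_1$ we take $G$ to equal $Y$ when $X\in\mathcal{X}_a$ and to equal $Q=b$ otherwise. Likewise $K$ equals $Q=a$ when $X\in\mathcal{X}_a$ and equals $Z$ otherwise. Equivalently, $G=(Q,Y\mathbf{1}\{Q=a\})$ and $K=(Q,Z\mathbf{1}\{Q=b\})$, so both $G$ and $K$ reveal $Q$, which is a function of $Y$, of $Z$ and of $X$. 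For $\mathcal{O}_2$ we swap the roles of the components: $G$ reveals $Y$ only on component $b$, and $K$ reveals $Z$ only on component $a$.

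The plan is to substitute these $G,K$ into \eqref{eqnSUM1a}--\eqref{eqnSUM7a}, expand each term by conditioning on $Q$, and then relabel the auxiliaries. For $\mathcal{O}_1$, the unprimed triple $(W,U,V)$ conditioned on $Q=a$ plays the role of $(W,U,V)$ with input $X_a$. The daggered triple conditioned on $Q=b$ plays the role of $(W^\dagger,U^\dagger,V^\dagger)$ with input $X_b$. The key computations are the following identities. Since $Q$ is a function of $G$, the quantity $I(W^\dagger;G)+I(W;Y|G)$ equals $I(W^\dagger;Q)+\overline q\,I(W^\dagger;Y_b|Q=b)+q\,I(W;Y_a|Q=a)$; this uses $I(W;Y|G)=q\,I(W;Y_a|Q=a)$, because $G$ determines $Y$ when $Q=a$ and is trivial given $Q=b$. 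Similarly, $I(W^\dagger;Z|K)+I(W;K)$ becomes $I(W;Q)+q\,I(W;Z_a|Q=a)+\overline q\,I(W^\dagger;Z_b|Q=b)$.

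The entropy of $Q$ appears through the first mutual-information terms. Each term such as $I(W^\dagger;Q)$ or $I(W;Q)$ is at most $H(Q)=H_2(q)$. In terms like $I(W^\dagger,K;G)$ we have $I(W^\dagger,K;G)\ge I(W^\dagger;G)$, and this term again expands as $H_2(q)$-type pieces plus component terms. So the plan is to upper bound every $I(\cdot;Q)$ contribution by $H_2(q)$, and to verify that in each constraint exactly one such contribution appears. Terms like $I(U^\dagger;G|W^\dagger,K)$ become $\overline q\,I(U^\dagger;Y_b|W^\dagger,Q=b)$. Terms like $I(X;K|U,W,G)$ become $q\,I(X_a;Z_a|U,W,Q=a)$, because $K$ is trivial given $Q=a$ and $G$ already gives $Y_a$ on component $a$, wait—more precisely, given $Q=a$, $G=Y_a$ and $K$ is constant.

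Doing this constraint by constraint, \eqref{eqnSUM1a} gives \eqref{eqn:o11}. \eqref{eqnSUM2a} gives \eqref{eqn:o12} after bounding $I(U^\dagger,W^\dagger;Q)\le H_2(q)$. \eqref{eqnSUM5a} gives \eqref{eqn:o13}. The sum-rate constraints \eqref{eqnSUM6a} and \eqref{eqnSUM7a} give \eqref{eqn:o14} and \eqref{eqn:o15}. Constraints \eqref{eqnSUM3a} and \eqref{eqnSUM4a} are simply dropped, since a superset is acceptable. We then absorb the conditioning on $Q=a$ (resp. $Q=b$) by redefining the auxiliaries as random variables with distributions $p(w,u,v,x_a)$ and $p(w^\dagger,u^\dagger,v^\dagger,x_b)$. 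This is legitimate because Theorem~\ref{th:obp} allows the unprimed and daggered auxiliaries to have separate conditional laws given $X$, and conditioning on $Q$ is conditioning on a function of $X$. $\mathcal{O}_2$ follows by the symmetric choice.

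The main obstacle is the bookkeeping in \eqref{eqnSUM6a}--\eqref{eqnSUM7a}. There, terms of the form $I(W^\dagger,K;G)$ and $I(W,G;K)$ mix both components, so one must check carefully that the $Q$-related contributions total at most one $H_2(q)$. For example, $I(W^\dagger,K;G)=I(K;G)+I(W^\dagger;G|K)$. Here $I(K;G)=H(Q)+$ zero cross-terms, since given $Q$ one of $G,K$ is constant. Then $I(W^\dagger;G|K)$ has no $Q$-information left and reduces to $\overline q\,I(W^\dagger;Y_b|Q=b)$. This is the step that must be verified for each term.
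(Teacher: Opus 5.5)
Your route is the paper's: auxiliary receivers $G=(Q,\tilde G)$ and $K=(Q,\tilde K)$ that each expose the true output on only one component, discarding \eqref{eqnSUM3a}--\eqref{eqnSUM4a}, using $I(K;G)=H(Q)$ in the sum-rate constraints and $I(\cdot\,;Q)\le H_2(q)$ elsewhere, and then relabeling the conditional laws given $Q$. Your accounting of the $H(Q)$ contributions is correct.

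There is, however, a consistent swap in the labels. The assignment you state for $\mathcal{O}_1$, namely $G=(Q,Y\mathbf{1}\{Q=a\})$ and $K=(Q,Z\mathbf{1}\{Q=b\})$, is the paper's choice for $\mathcal{O}_2$. It yields $\mathcal{O}_2$, with the unprimed auxiliaries on component $b$ and the daggered ones on component $a$.

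The identities you then display are false for that assignment. These include $I(W;Y|G)=q\,I(W;Y_a|Q=a)$, $I(U^\dagger;G|W^\dagger,K)=\overline q\,I(U^\dagger;Y_b|W^\dagger,Q=b)$, and $I(X;K|U,W,G)=q\,I(X_a;Z_a|U,W,Q=a)$. If $G$ reveals $Y$ on component $a$, that component contributes nothing to $I(W;Y|G)$. If $K$ is constant given $Q=a$, every term of the form $I(\cdot\,;K|\cdot)$ vanishes on that component; your ``wait---more precisely'' sentence runs into exactly this.

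Those identities hold instead for the paper's $\mathcal{O}_1$ choice: $\tilde G=\emptyset$, $\tilde K=Z_a$ on $Q=a$, and $\tilde G=Y_b$, $\tilde K=\emptyset$ on $Q=b$. That is the assignment you attach to $\mathcal{O}_2$. Because you use both assignments, exchanging the two labels makes everything consistent, and the intersection claim follows exactly as in the paper.
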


\begin{proof}
    Consider the constraints \eqref{eqnSUM1a},\eqref{eqnSUM2a},\eqref{eqnSUM5a},\eqref{eqnSUM6a},\eqref{eqnSUM7a} in the outer bound region in Theorem~\ref{th:obp}. We derive the proposed outer bound by selecting $G = (Q,\tilde G)$ and $K=(Q,\tilde K)$ in Theorem~\ref{th:obp}, specifically:
\begin{itemize}
    \item For $\mathcal{O}_1$, we set $ \tilde G = \emptyset$ and $\tilde K = Z_a$ when $Q = a$, while $\tilde G = Y_b$ and $\tilde K = \emptyset$  when $Q = b$
    \item For $\mathcal{O}_2$, we set $ \tilde G = Y_a$ and $\tilde K = \emptyset$ when $Q = a$, while $\tilde G = \emptyset$ and $\tilde K = Z_b$  when $Q = b$
\end{itemize}
After substituting these choices, we then upper-bound the terms 
$I(W;Q)$ and $I(W^\dagger;Q)$ by $H(Q)$. We obtain \eqref{eqn:o11},\eqref{eqn:o12},\eqref{eqn:o13},\eqref{eqn:o14},\eqref{eqn:o15} for the first substitution choice, while we obtain \eqref{eqn:o21},\eqref{eqn:o22},\eqref{eqn:o23},\eqref{eqn:o24},\eqref{eqn:o25} for the second substitution choice. See Appendix~\ref{o1derivation} for a step-by-step derivation of $\mathcal{O}_1$ (derivation of $\mathcal{O}_2$ is similar).
\end{proof}

\begin{corollary} 
    Any achievable triple $(R_0,R_1,R_2)$ for the sum broadcast channel $T=T_a\oplus T_b$
    must satisfy, for any $\lambda_0 \geq \lambda_1 \geq \lambda_2\geq 0$,
 \begin{align}
        & \max_{(R_0,R_1,R_2)\in \mathcal{C}(T)}\lambda_0 R_0+\lambda_1 R_1+\lambda_2 R_2   \nonumber 
        \\& \qquad\qquad \quad\leq\min_{\alpha \in [0,1]}  \max_{q} \lambda_0 H_2(q) + q SR_{OB,1p}^{\lambda_0,\lambda_1,\lambda_2,\alpha}(T_a) + \overline{q} \min\{SR_{OB,1m}^{\lambda_0,\lambda_1,\lambda_2,\alpha}(T_b),SR_{OB,1n}^{\lambda_0,\lambda_1,\lambda_2,\alpha}(T_b) \},\label{eqn:obwsr1}\\
        & \max_{(R_0,R_1,R_2)\in \mathcal{C}(T)}\lambda_0 R_0+\lambda_1 R_1+\lambda_2 R_2   \nonumber 
        \\& \qquad \qquad \quad\leq\min_{\alpha \in [0,1]}  \max_{q} \lambda_0 H_2(q) + \overline q SR_{OB,1p}^{\lambda_0,\lambda_1,\lambda_2,\alpha}(T_b) + q \min\{SR_{OB,1m}^{\lambda_0,\lambda_1,\lambda_2,\alpha}(T_a),SR_{OB,1n}^{\lambda_0,\lambda_1,\lambda_2,\alpha}(T_a) \}.\label{eqn:obwsr2}
    \end{align}
And similarly, for any $\lambda_0 \geq \lambda_2 \geq \lambda_1\geq 0$,
    \begin{align}
         & \max_{(R_0,R_1,R_2)\in \mathcal{C}(T)}\lambda_0 R_0+\lambda_1 R_1+\lambda_2 R_2   \nonumber 
        \\& \qquad \qquad \quad\leq\min_{\alpha \in [0,1]}  \max_{q} \lambda_0 H_2(q) + q SR_{OB,2p}^{\lambda_0,\lambda_1,\lambda_2,\alpha}(T_a) + \overline{q} \min\{SR_{OB,2m}^{\lambda_0,\lambda_1,\lambda_2,\alpha}(T_b),SR_{OB,2n}^{\lambda_0,\lambda_1,\lambda_2,\alpha}(T_b) \},\label{eqn:obwsr3}\\
        & \max_{(R_0,R_1,R_2)\in \mathcal{C}(T)}\lambda_0 R_0+\lambda_1 R_1+\lambda_2 R_2   \nonumber 
        \\&\qquad \qquad  \quad\leq\min_{\alpha \in [0,1]}  \max_{q} \lambda_0 H_2(q) + \overline q SR_{OB,2p}^{\lambda_0,\lambda_1,\lambda_2,\alpha}(T_b) + q \min\{SR_{OB,2m}^{\lambda_0,\lambda_1,\lambda_2,\alpha}(T_a),SR_{OB,2n}^{\lambda_0,\lambda_1,\lambda_2,\alpha}(T_a) \}. \label{eqn:obwsr4}   
    \end{align}
\end{corollary}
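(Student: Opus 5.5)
We derive \eqref{eqn:obwsr1} from the region $\mathcal{O}_1$ of the preceding lemma; \eqref{eqn:obwsr2} then follows from $\mathcal{O}_2$ by the symmetric argument with the roles of $a$ and $b$ exchanged. The case $\lambda_0\ge\lambda_2\ge\lambda_1$, i.e.\ \eqref{eqn:obwsr3}--\eqref{eqn:obwsr4}, is handled identically, using \eqref{eqn:o13} in place of \eqref{eqn:o12}.

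Fix $\lambda_0\ge\lambda_1\ge\lambda_2\ge0$ and an achievable $(R_0,R_1,R_2)$. By the lemma there exist $q$, $p(w,u,v,x_a)$ and $p(w^\dagger,u^\dagger,v^\dagger,x_b)$ such that \eqref{eqn:o11}--\eqref{eqn:o15} hold. We combine three of these constraints with nonnegative weights, as in \eqref{eqnMm1}:
\begin{itemize}
\item \eqref{eqn:o11} with weight $\lambda_0-\lambda_1$;
\item \eqref{eqn:o12} with weight $\lambda_1-\lambda_2$;
\item with weight $\lambda_2$, either \eqref{eqn:o14} or \eqref{eqn:o15}.
\end{itemize}
The two choices in the last item are what produce the two options $SR_{OB,1m}$ and $SR_{OB,1n}$ for $T_b$.

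The $H_2(q)$ terms carry total weight $(\lambda_0-\lambda_1)+(\lambda_1-\lambda_2)+\lambda_2=\lambda_0$. The two $\min\{\cdot,\cdot\}$ terms (from \eqref{eqn:o11} and from the sum-rate constraint) carry total weight $\lambda_0-\lambda_1+\lambda_2$. For any $\alpha\in[0,1]$ we bound $\min\{A,B\}\le\alpha A+\overline\alpha B$. Since $A=qI(W;Y_a)+\overline q I(W^\dagger;Y_b)$, and similarly for $B$, this splits linearly into a $q$-part and a $\overline q$-part.

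Next we handle the inner minima in the sum-rate constraints. Using \eqref{eqn:o14}, the $a$-part is $q\bigl(I(U;Y_a|W)+I(X_a;Z_a|U,W)\bigr)$, which is at least the $a$-part's inner minimum. So both \eqref{eqn:o14} and \eqref{eqn:o15} are upper bounded by the same expression with the $a$-part replaced by
\[
q\min\{I(U;Y_a|W)+I(X_a;Z_a|U,W),\,I(V;Z_a|W)+I(X_a;Y_a|V,W)\}.
\]
The $b$-part is, respectively:
\begin{itemize}
\item from \eqref{eqn:o14}: $\overline q\bigl(I(U^\dagger;Y_b|W^\dagger)+I(X_b;Z_b|U^\dagger,W^\dagger)\bigr)$, after dropping the other term of its minimum;
\item from \eqref{eqn:o15}: $\overline q\bigl(I(V^\dagger;Z_b|W^\dagger)+I(X_b;Y_b|V^\dagger,W^\dagger)\bigr)$.
\end{itemize}

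Collecting terms, the weighted sum is at most
\[
\lambda_0H_2(q)+q\,F_a(W,U,V,X_a)+\overline q\,G_b^{(i)}(W^\dagger,U^\dagger,V^\dagger,X_b),\qquad i\in\{m,n\}.
\]
Here $F_a$ is exactly the objective defining $SR^{\lambda_0,\lambda_1,\lambda_2,\alpha}_{OB,1p}(T_a)$. Likewise $G_b^{(m)}$ and $G_b^{(n)}$ are the objectives of $SR_{OB,1m}$ and $SR_{OB,1n}$ for $T_b$, where the term $(\lambda_1-\lambda_2)I(U^\dagger,W^\dagger;Y_b)$ comes from \eqref{eqn:o12}. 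Since the distributions on the two components are free, each part is bounded by its maximum. Both choices $i=m$ and $i=n$ are valid for the same achievable point, so we may take the smaller $b$-bound:
\[
\lambda_0R_0+\lambda_1R_1+\lambda_2R_2\le \lambda_0H_2(q)+q\,SR_{OB,1p}(T_a)+\overline q\min\{SR_{OB,1m}(T_b),SR_{OB,1n}(T_b)\}.
\]
Here the minimum is legitimate because, after maximization, each bound holds separately. Maximizing over $q$ and then minimizing over $\alpha$, which was arbitrary, gives \eqref{eqn:obwsr1}.

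The main obstacle is the bookkeeping. We must verify that, after the weighted combination, the $a$-side terms assemble exactly into the $SR_{OB,1p}$ objective, and the $b$-side terms exactly into the $SR_{OB,1m}$ or $SR_{OB,1n}$ objective. In particular, the $\lambda_1-\lambda_2$ term from \eqref{eqn:o12} and the $\alpha$-split of both common-message minima must line up componentwise. We also need care that taking $\min\{SR_{OB,1m},SR_{OB,1n}\}$ is justified: it is valid only because it is taken after maximizing each bound separately, not inside a single maximization.
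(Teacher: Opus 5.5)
There is a genuine gap in how you treat \eqref{eqn:o14}. Its $a$-part is $q\bigl(I(U;Y_a|W)+I(X_a;Z_a|U,W)\bigr)$, which is at least $q\min\{I(U;Y_a|W)+I(X_a;Z_a|U,W),\,I(V;Z_a|W)+I(X_a;Y_a|V,W)\}$. Replacing it by this minimum makes the right-hand side of \eqref{eqn:o14} smaller, so the resulting inequality is not implied by \eqref{eqn:o14}; the direction is reversed.

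Used correctly, \eqref{eqn:o14} together with \eqref{eqn:o11} and \eqref{eqn:o12} gives an $a$-part equal to the $SR_{OB,1m}$ objective for $T_a$ and a $b$-part bounded by $SR_{OB,1p}(T_b)$. That is a constituent of \eqref{eqn:obwsr2}, not of \eqref{eqn:obwsr1}. Within $\mathcal{O}_1$, only \eqref{eqn:o15} puts the minimum on component $a$, and it pairs with the $V^\dagger$-type term on component $b$. There is no sum-rate constraint in $\mathcal{O}_1$ with the minimum on $a$ and the $U^\dagger$-type term on $b$. So your weighting of $\mathcal{O}_1$ only yields \eqref{eqn:obwsr1} with $SR_{OB,1n}(T_b)$ in place of $\min\{SR_{OB,1m}(T_b),SR_{OB,1n}(T_b)\}$.

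The missing ingredient is the second auxiliary-receiver choice. Constraint \eqref{eqn:o24} of $\mathcal{O}_2$ has exactly the needed structure: the minimum on component $a$ and $I(U;Y_b|W)+I(X_b;Z_b|U,W)$ on component $b$. Weighting \eqref{eqn:o21}, \eqref{eqn:o22}, \eqref{eqn:o24} by $\lambda_0-\lambda_1$, $\lambda_1-\lambda_2$, $\lambda_2$ gives the bound with $q\,SR_{OB,1p}(T_a)+\overline q\,SR_{OB,1m}(T_b)$. Your (valid) use of \eqref{eqn:o15} from $\mathcal{O}_1$ gives the bound with $SR_{OB,1n}(T_b)$.

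These two bounds come from different regions, and the lemma as stated allows different $q$ in each. So merge them after maximizing over $q$: each becomes $\lambda_0\log\bigl(2^{A/\lambda_0}+2^{B_i/\lambda_0}\bigr)$ by \eqref{rem:opt}, which is nondecreasing in $B_i$. The minimum of the two therefore equals the same expression with $\min\{B_m,B_n\}$, and you then minimize over $\alpha$. This is what the paper does.

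For the mirror reason, your plan to obtain \eqref{eqn:obwsr2} from $\mathcal{O}_2$ alone also fails. It needs \eqref{eqn:o14} from $\mathcal{O}_1$, giving $SR_{OB,1m}(T_a)$ and $SR_{OB,1p}(T_b)$, together with \eqref{eqn:o25} from $\mathcal{O}_2$, giving $SR_{OB,1n}(T_a)$ and $SR_{OB,1p}(T_b)$. Each of the four inequalities requires both auxiliary-receiver choices, which is why both $\mathcal{O}_1$ and $\mathcal{O}_2$ are derived.
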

\begin{proof}
    Fix some $\lambda_0 \geq \lambda_1 \geq \lambda_2\geq 0$. Recall the outer bound region $\mathcal{O}_1$. Multiplying \eqref{eqn:o11} by $(\lambda_0-\lambda_1)$, \eqref{eqn:o12} by $(\lambda_1-\lambda_2)$, and \eqref{eqn:o15} by $\lambda_2$ and summing yields:
\begin{align*}
    &\lambda_0 R_0+\lambda_1 R_1+\lambda_2 R_2\\
    &\quad \leq \lambda_0 H_2(q)+(\lambda_0-\lambda_1+\lambda_2)\min \big\{ q I(W;Y_a)+ \overline q I(W^\dagger ; Y_b),~ q I(W; Z_a) + \overline q I(W^\dagger ;Z_b) \big \}\\
    &\qquad +(\lambda_1-\lambda_2)\bigg(q I(U,W; Y_a) + \overline q I(U^\dagger ,W^\dagger ; Y_b)\bigg) +\lambda_2\bigg (\overline q (I(V^\dagger;Z_b|W^\dagger) +I(X_b;Y_b|V^\dagger,W^\dagger))\nonumber \\
& \qquad + q \min \big\{ I(U;Y_a|W) + I(X_a; Z_a|U, W),~ I(V;Z_a|W)+I(X_a;Y_a|V,W) \big\}\bigg )\\
&\quad \leq \lambda_0 H_2(q)+q\bigg((\lambda_0-\lambda_1+\lambda_2)(\alpha  I(W;Y_a)+\overline\alpha I(W; Z_a))+(\lambda_1-\lambda_2)I(U,W; Y_a)\\
&\qquad +\lambda_2 \min \big\{ I(U;Y_a|W) + I(X_a; Z_a|U, W),~ I(V;Z_a|W)+I(X_a;Y_a|V,W) \big\}\bigg)\\
&\qquad +\overline q\bigg((\lambda_0-\lambda_1+\lambda_2)(\alpha I(W^\dagger ; Y_b)+\overline\alpha I(W^\dagger ;Z_b))+(\lambda_1-\lambda_2)I(U^\dagger ,W^\dagger ; Y_b)\\
&\qquad+\lambda_2(I(V^\dagger;Z_b|W^\dagger) +I(X_b;Y_b|V^\dagger,W^\dagger))\bigg),
\end{align*}
for some $p_Q p(u,v,w,x_a)p(u^\dagger,v^\dagger,w^\dagger,x_b)$, and any $\alpha\in[0,1]$.

Therefore,
\begin{align}
    &\max_{(R_0,R_1,R_2)\in \mathcal{C}(T)} \lambda_0 R_0+\lambda_1 R_1+\lambda_2 R_2\nonumber    \\
    &\quad \leq \max_{(R_0,R_1,R_2)\in \mathcal{O}_{1}(T)} \lambda_0 R_0+\lambda_1 R_1+\lambda_2 R_2\nonumber    \\
    & \quad \leq \min_{\alpha \in [0,1]} \max_{q} \lambda_0 H_2(q) + q \bigg(\max_{p(u_a,v_a,w_a,x_a)} \alpha(\lambda_0-\lambda_1+\lambda_2) I(W_a;Y_a) + \overline{\alpha}(\lambda_0-\lambda_1+\lambda_2) I(W_a;Z_a)\nonumber\\
    &\qquad + (\lambda_1-\lambda_2) I(U_a,W_a;Y_a) +\lambda_2 \min\big\{I(U_a;Y_a|W_a)+I(X_a;Z_a|U_a,W_a), I(V_a;Z_a|W_a)+I(X_a;Y_a|V_a,W_a)\big\} \bigg)\nonumber\\
    &\qquad + \overline{q} \bigg( \max_{p(u_b,v_b,w_b,x_b)} \alpha (\lambda_0-\lambda_1+\lambda_2)I(W_b;Y_b) + \overline{\alpha}(\lambda_0-\lambda_1+\lambda_2) I(W_b;Z_b) + (\lambda_1 - \lambda_2) I(W_b, U_b;Y_b)\nonumber\\
    &\qquad + \lambda_2 (I(V_b;Z_b|W_b)+I(X_b;Y_b|V_b,W_b))\bigg)\nonumber\\
    &\quad \leq \min_{\alpha \in [0,1]} \max_{q} \lambda_0 H_2(q) + q SR_{OB,1p}^{\lambda_0,\lambda_1,\lambda_2,\alpha}(T_a) + \overline{q} SR_{OB,1n}^{\lambda_0,\lambda_1,\lambda_2,\alpha}(T_b)\nonumber\\
    &\quad = \min_{\alpha \in [0,1]}  \lambda_0 \log\left( 2^{\frac{1}{\lambda_0} SR_{OB,1p}^{\lambda_0,\lambda_1,\lambda_2,\alpha}(T_a)} + 2^{\frac{1}{\lambda_0} SR_{OB,1n}^{\lambda_0,\lambda_1,\lambda_2,\alpha}(T_b)}\right).\label{eqn:obwsrproof1}
\end{align}

Recall the outer bound region $\mathcal{O}_2$. Multiply \eqref{eqn:o21} by $(\lambda_0-\lambda_1)$, \eqref{eqn:o22} by $(\lambda_1-\lambda_2)$, and \eqref{eqn:o24} by $\lambda_2$ and summing yields
\begin{align*}
    &\lambda_0 R_0+\lambda_1R_1+\lambda_2 R_2\\
    &\quad\leq \lambda_0 H_2(q) + (\lambda_0-\lambda_1+\lambda_2)\min \big\{ q I(W^\dagger;Y_a)+\overline q I(W; Y_b), q I(W^\dagger; Z_a) + \overline q I(W;Z_b) \big \}\\
    &\qquad +(\lambda_1-\lambda_0)\bigg(q I(U^\dagger,W^\dagger; Y_a) + \overline q I(U,W; Y_b)\bigg)\\
    &\qquad +\lambda_2\bigg(\overline q( I(U;Y_b|W) + I(X_b; Z_b|U, W))+\\
& \quad\qquad\qquad +q \min\big \{I(U^\dagger ;Y_a|W^\dagger )+I(X_a;Z_a|U^\dagger, W^\dagger),~ I(V^\dagger;Z_a|W^\dagger) + I(X_a;Y_a|V^\dagger, W^\dagger)\big \}  \bigg)\\
&\quad\leq \lambda_0 H_2(q)+q \bigg(\alpha (\lambda_0-\lambda_1+\lambda_2)I(W^\dagger;Y_a) +\overline\alpha (\lambda_0-\lambda_1+\lambda_2)I(W^\dagger; Z_a)+(\lambda_1-\lambda_0)I(U^\dagger,W^\dagger;Y_a)\\
&\quad\qquad\qquad +\lambda_2 \min\big \{I(U^\dagger ;Y_a|W^\dagger)+I(X_a;Z_a|U^\dagger, W^\dagger),~ I(V^\dagger;Z_a|W^\dagger) + I(X_a;Y_a|V^\dagger, W^\dagger)\big \}\bigg)\\
&\qquad +\overline q \bigg(\alpha (\lambda_0-\lambda_1+\lambda_2)I(W; Y_b)+\overline\alpha (\lambda_0-\lambda_1+\lambda_2)I(W;Z_b)+(\lambda_1-\lambda_0)I(U,W;Y_b)\\
&\quad\qquad\qquad +\lambda_2(I(U;Y_b|W) + I(X_b; Z_b|U, W))\bigg),
\end{align*}
for some $p_Q p(u^\dagger,v^\dagger,w^\dagger,x_a) p(u,w,x_b)$, and any $\alpha \in [0,1]$.
Therefore,
\begin{align}
    &\max_{(R_0,R_1,R_2)\in \mathcal{C}(T)} \lambda_0 R_0+\lambda_1 R_1+\lambda_2 R_2\nonumber    \\
    &\quad \leq\max_{(R_0,R_1,R_2)\in \mathcal{O}_{2}(T)} \lambda_0 R_0+\lambda_1 R_1+\lambda_2 R_2\nonumber    \\
    & \quad \leq \min_{\alpha \in [0,1]} \max_{q} \lambda_0 H_2(q) + q \bigg(\max_{p(w_a,u_a,v_a,x_a)} \alpha (\lambda_0-\lambda_1+\lambda_2) I(W_a;Y_a) + \overline{\alpha} (\lambda_0-\lambda_1+\lambda_2)I(W_a;Z_a) \nonumber\\
    &\qquad\qquad+ (\lambda_1-\lambda_2) I(U_a,W_a;Y_a) +\lambda_2 \min\big \{I(U_a ;Y_a|W_a)+I(X_a;Z_a|U_a, W_a),~ I(V_a;Z_a|W_a) + I(X_a;Y_a|V_a, W_a)\big \} \bigg) \nonumber\\
    &\qquad + \overline{q} \bigg(\max_{p(w_b,u_b,x_b)}\alpha (\lambda_0-\lambda_1+\lambda_2)I(W_b;Y_b) + \overline{\alpha}(\lambda_0-\lambda_1+\lambda_2) I(W_b;Z_b) + (\lambda_1 - \lambda_2) I(U_b,W_b;Y_b)\nonumber\\
    &\qquad\qquad+\lambda_2( I(U_b;Y_b|W) + I(X_b; Z_b|U_b, W_b))\bigg) \nonumber\\
    &\quad \leq \min_{\alpha \in [0,1]} \max_{q} \lambda_0 H_2(q) + q SR_{OB,1p}^{\lambda_0,\lambda_1,\lambda_2,\alpha}(T_a) + \overline{q} SR_{OB,1m}^{\lambda_0,\lambda_1,\lambda_2,\alpha}(T_b)\nonumber\\
    &\quad = \min_{\alpha \in [0,1]}  \lambda_0 \log\left( 2^{\frac{1}{\lambda_0} SR_{OB,1p}^{\lambda_0,\lambda_1,\lambda_2,\alpha}(T_a)} + 2^{\frac{1}{\lambda_0} SR_{OB,1m}^{\lambda_0,\lambda_1,\lambda_2,\alpha}(T_b)}\right).\label{eqn:obwsrproof2}
\end{align}
Using \eqref{eqn:obwsrproof1} and \eqref{eqn:obwsrproof2}, we prove \eqref{eqn:obwsr1} as follows: 
\begin{align*}
    &\max_{(R_0,R_1,R_2)\in \mathcal{C}(T)} \lambda_0 R_0+\lambda_1 R_1+\lambda_2 R_2\\
    &\quad \leq \min\{\min_{\alpha \in [0,1]}  \lambda_0 \log\left( 2^{\frac{1}{\lambda_0} SR_{OB,1p}^{\lambda_0,\lambda_1,\lambda_2,\alpha}(T_a)} + 2^{\frac{1}{\lambda_0} SR_{OB,1n}^{\lambda_0,\lambda_1,\lambda_2,\alpha}(T_b)}\right),\min_{\alpha \in [0,1]}  \lambda_0 \log\left( 2^{\frac{1}{\lambda_0} SR_{OB,1p}^{\lambda_0,\lambda_1,\lambda_2,\alpha}(T_a)} + 2^{\frac{1}{\lambda_0} SR_{OB,1m}^{\lambda_0,\lambda_1,\lambda_2,\alpha}(T_b)}\right)\}\\
    &\quad =\min_{\alpha \in [0,1]}  \lambda_0 \log\left( 2^{\frac{1}{\lambda_0} SR_{OB,1p}^{\lambda_0,\lambda_1,\lambda_2,\alpha}(T_a)} + 2^{\frac{1}{\lambda_0} \min\{SR_{OB,1m}^{\lambda_0,\lambda_1,\lambda_2,\alpha}(T_b),~SR_{OB,1n}^{\lambda_0,\lambda_1,\lambda_2,\alpha}(T_b)\}}\right)\\
    &\quad=\min_{\alpha \in [0,1]} \max_{q} \lambda_0 H_2(q) + q SR_{OB,1p}^{\lambda_0,\lambda_1,\lambda_2,\alpha}(T_a) + \overline{q} \min\{SR_{OB,1m}^{\lambda_0,\lambda_1,\lambda_2,\alpha}(T_b),~SR_{OB,1n}^{\lambda_0,\lambda_1,\lambda_2,\alpha}(T_b)\}.
\end{align*}
Equations \eqref{eqn:obwsr2}, \eqref{eqn:obwsr3}, \eqref{eqn:obwsr4} are proved in a similar manner. 
\end{proof}

\subsection{Proof of Theorem \ref{thm:maintheorem1}}
\label{thm4proofl}
\subsubsection{Proof of part (a)}
Suppose $T = T_a\oplus T_b$, where $T_a,T_b\in \mathcal{P}^{\lambda_0,\lambda_1,\lambda_2}$.

Fix $\lambda_0 \geq \lambda_1 \geq \lambda_2 \geq 0 $. It suffices to prove
\begin{equation} \label{eq:mainthmdes} \max_{(R_0,R_1,R_2)\in \mathcal{M}(T)}\lambda_0 R_0+\lambda_1 R_1+\lambda_2 R_2 \geq \max_{(R_0,R_1,R_2)\in \mathcal{C}(T)}\lambda_0 R_0+\lambda_1 R_1+\lambda_2 R_2.\end{equation}
The other case where $\lambda_0 \geq \lambda_2 \geq \lambda_1\geq 0$ is handled analogously, by interchanging the roles of receivers and the corresponding auxiliary variables. 

\begin{align*}
    & \max_{(R_0,R_1,R_2)\in \mathcal{C}(T)}\lambda_0 R_0+\lambda_1 R_1+\lambda_2 R_2   \nonumber 
        \\& \quad\stackrel{(a)}\leq\min_{\alpha \in [0,1]}  \max_{q} \lambda_0 H_2(q) + q SR_{OB,1p}^{\lambda_0,\lambda_1,\lambda_2,\alpha}(T_a) + \overline{q} \min\{SR_{OB,1m}^{\lambda_0,\lambda_1,\lambda_2,\alpha}(T_b),SR_{OB,1n}^{\lambda_0,\lambda_1,\lambda_2,\alpha}(T_b) \}\\
        &\quad \stackrel{(b)}\leq \min_{\alpha \in [0,1]}  \max_{q} \lambda_0 H_2(q) + q \min\{SR_{OB,1m}^{\lambda_0,\lambda_1,\lambda_2,\alpha}(T_a),SR_{OB,1n}^{\lambda_0,\lambda_1,\lambda_2,\alpha}(T_a) \} + \overline{q} \min\{SR_{OB,1m}^{\lambda_0,\lambda_1,\lambda_2,\alpha}(T_b),SR_{OB,1n}^{\lambda_0,\lambda_1,\lambda_2,\alpha}(T_b) \}\\
        &\quad = \min_{\alpha \in [0,1]}  \lambda_0 \log\left( 2^{\frac{1}{\lambda_0} \min\{SR_{OB,1m}^{\lambda_0,\lambda_1,\lambda_2,\alpha}(T_a),~SR_{OB,1n}^{\lambda_0,\lambda_1,\lambda_2,\alpha}(T_a)\}} + 2^{\frac{1}{\lambda_0} \min\{SR_{OB,1m}^{\lambda_0,\lambda_1,\lambda_2,\alpha}(T_b),~SR_{OB,1n}^{\lambda_0,\lambda_1,\lambda_2,\alpha}(T_b)\}}\right)\\
        &\quad \stackrel{(c)}\leq \min_{\alpha \in [0,1]}  \lambda_0 \log\left( 2^{\frac{1}{\lambda_0} SR_{IB,1}^{\lambda_0,\lambda_1,\lambda_2,\alpha}(T_a)} + 2^{\frac{1}{\lambda_0} SR_{IB,1}^{\lambda_0,\lambda_1,\lambda_2,\alpha}(T_b)}\right)\\
        &\quad =\min_{\alpha \in [0,1]}  \max_{q} \lambda_0 H_2(q) + q SR_{IB,1}^{\lambda_0,\lambda_1,\lambda_2,\alpha}(T_a) + \overline{q} SR_{IB,1}^{\lambda_0,\lambda_1,\lambda_2,\alpha}(T_b)\\
        &\quad \stackrel{(d)}= \min_{\alpha\in [0,1]} SR_{IB,1}^{\lambda_0,\lambda_1,\lambda_2,\alpha}(T)\\
        &\quad \stackrel{(e)}=\max_{(R_0,R_1,R_2)\in \mathcal{M}(T)}\lambda_0 R_0+\lambda_1 R_1+\lambda_2 R_2,
\end{align*}
where $(a)$ follows by \eqref{eqn:obwsr1}; $(b)$ is by Lemma~\ref{lem:srobcompare}; $(c)$ holds by \eqref{eq:con1} and \eqref{eq:con2} as $T_a,T_b\in\mathcal{P}^{\lambda_0,\lambda_1,\lambda_2}$; $(d)$ results from \eqref{eqn:martonsumcharacterization1}. And $(e)$ results from Corollary~\ref{MartonCharacterizationSumChannel}.

\subsubsection{Proof of  part (b)}
Without loss of generality, assume $T_a\in \hat{\mathcal{P}}^{\lambda_0,\lambda_1,\lambda_2}$ and $T_b\in\mathcal{P}^{\lambda_0,\lambda_1,\lambda_2}$.

We first assume $T_a\in\hat{\mathcal{P}}^{\lambda_0,\lambda_1,\lambda_2}_{A}$. Then, for any $\lambda_0 \geq \lambda_1 \geq \lambda_2 \geq 0 $,
\begin{align*}
    & \max_{(R_0,R_1,R_2)\in \mathcal{C}(T)}\lambda_0 R_0+\lambda_1 R_1+\lambda_2 R_2   \nonumber 
        \\& \quad\stackrel{(a)}\leq\min_{\alpha \in [0,1]}  \max_{q} \lambda_0 H_2(q) + q SR_{OB,1p}^{\lambda_0,\lambda_1,\lambda_2,\alpha}(T_a) + \overline{q} \min\{SR_{OB,1m}^{\lambda_0,\lambda_1,\lambda_2,\alpha}(T_b),SR_{OB,1n}^{\lambda_0,\lambda_1,\lambda_2,\alpha}(T_b) \}\\
        &\quad = \min_{\alpha \in [0,1]}  \lambda_0 \log\left( 2^{\frac{1}{\lambda_0} SR_{OB,1p}^{\lambda_0,\lambda_1,\lambda_2,\alpha}(T_a)} + 2^{\frac{1}{\lambda_0} \min\{SR_{OB,1m}^{\lambda_0,\lambda_1,\lambda_2,\alpha}(T_b),SR_{OB,1n}^{\lambda_0,\lambda_1,\lambda_2,\alpha}(T_b) \}}\right)\\
        &\quad\stackrel{(b)}\leq \min_{\alpha \in [0,1]}  \lambda_0 \log\left( 2^{\frac{1}{\lambda_0} SR_{IB,1}^{\lambda_0,\lambda_1,\lambda_2,\alpha}(T_a)} + 2^{\frac{1}{\lambda_0} SR_{IB,1}^{\lambda_0,\lambda_1,\lambda_2,\alpha}(T_b)}\right)\\
        &\quad =\min_{\alpha \in [0,1]}  \max_{q} \lambda_0 H_2(q) + q SR_{IB,1}^{\lambda_0,\lambda_1,\lambda_2,\alpha}(T_a) + \overline{q} SR_{IB,1}^{\lambda_0,\lambda_1,\lambda_2,\alpha}(T_b)\\
        &\quad \stackrel{(c)}=\min_{\alpha \in [0,1]} SR_{IB,1}^{\lambda_0,\lambda_1,\lambda_2,\alpha}(T)\\
        &\quad \stackrel{(d)}=\max_{(R_0,R_1,R_2)\in \mathcal{M}(T)}\lambda_0 R_0+\lambda_1 R_1+\lambda_2 R_2,
\end{align*}
where $(a)$ follows by \eqref{eqn:obwsr1}; $(b)$ holds by \eqref{eq:con3} and \eqref{eq:con1} as $T_a\in\hat{\mathcal{P}}^{\lambda_0,\lambda_1,\lambda_2}_{A}$ and $T_b\in\mathcal{P}^{\lambda_0,\lambda_1,\lambda_2}$; $(c)$ results from \eqref{eqn:martonsumcharacterization1}. And $(d)$ results from Corollary~\ref{MartonCharacterizationSumChannel}.

Similarly, for any $\lambda_0 \geq \lambda_2 \geq \lambda_1 \geq 0 $, 
\begin{align*}
    & \max_{(R_0,R_1,R_2)\in \mathcal{C}(T)}\lambda_0 R_0+\lambda_1 R_1+\lambda_2 R_2   \nonumber 
        \\& \quad\stackrel{(a)}\leq\min_{\alpha \in [0,1]}  \max_{q} \lambda_0 H_2(q)  + q \min\{SR_{OB,2m}^{\lambda_0,\lambda_1,\lambda_2,\alpha}(T_a),SR_{OB,2n}^{\lambda_0,\lambda_1,\lambda_2,\alpha}(T_a) \}+ \overline q SR_{OB,2p}^{\lambda_0,\lambda_1,\lambda_2,\alpha}(T_b)\\
        &\quad\leq \min_{\alpha \in [0,1]}  \max_{q} \lambda_0 H_2(q)  + q SR_{OB,2m}^{\lambda_0,\lambda_1,\lambda_2,\alpha}(T_a)+ \overline q SR_{OB,2p}^{\lambda_0,\lambda_1,\lambda_2,\alpha}(T_b)\\
        & \quad\stackrel{(b)}\leq\min_{\alpha \in [0,1]}  \max_{q} \lambda_0 H_2(q) + q SR_{OB,2m}^{\lambda_0,\lambda_1,\lambda_2,\alpha}(T_a) + \overline{q} \min\{SR_{OB,2m}^{\lambda_0,\lambda_1,\lambda_2,\alpha}(T_b),~SR_{OB,2n}^{\lambda_0,\lambda_1,\lambda_2,\alpha}(T_b) \}\\
        &\quad = \min_{\alpha \in [0,1]}  \lambda_0 \log\left( 2^{\frac{1}{\lambda_0} SR_{OB,2m}^{\lambda_0,\lambda_1,\lambda_2,\alpha}(T_a)} + 2^{\frac{1}{\lambda_0} \min\{SR_{OB,2m}^{\lambda_0,\lambda_1,\lambda_2,\alpha}(T_b),~SR_{OB,2n}^{\lambda_0,\lambda_1,\lambda_2,\alpha}(T_b) \}}\right)\\
        &\quad\stackrel{(c)}\leq \min_{\alpha \in [0,1]}  \lambda_0 \log\left( 2^{\frac{1}{\lambda_0} SR_{IB,2}^{\lambda_0,\lambda_1,\lambda_2,\alpha}(T_a)} + 2^{\frac{1}{\lambda_0} SR_{IB,2}^{\lambda_0,\lambda_1,\lambda_2,\alpha}(T_b)}\right)\\
        &\quad =\min_{\alpha \in [0,1]}  \max_{q} \lambda_0 H_2(q) + q SR_{IB,2}^{\lambda_0,\lambda_1,\lambda_2,\alpha}(T_a) + \overline{q} SR_{IB,2}^{\lambda_0,\lambda_1,\lambda_2,\alpha}(T_b)\\
        &\quad \stackrel{(d)}=\min_{\alpha \in [0,1]} SR_{IB,2}^{\lambda_0,\lambda_1,\lambda_2,\alpha}(T)\\
        &\quad \stackrel{(e)}=\max_{(R_0,R_1,R_2)\in \mathcal{M}(T)}\lambda_0 R_0+\lambda_1 R_1+\lambda_2 R_2,
\end{align*}
where $(a)$ follows by \eqref{eqn:obwsr4}; $(b)$ follows by Lemma~\ref{lem:srobcompare}; $(c)$ holds by \eqref{eq:con4} and \eqref{eq:con2} as $T_a\in\hat{\mathcal{P}}^{\lambda_0,\lambda_1,\lambda_2}_{A}$ and $T_b\in\mathcal{P}^{\lambda_0,\lambda_1,\lambda_2}$; $(d)$ results from \eqref{eqn:martonsumcharacterization2}. And $(e)$ results from Corollary~\ref{MartonCharacterizationSumChannel}.

The case $T_a\in\hat{\mathcal{P}}^{\lambda_0,\lambda_1,\lambda_2}_{B},T_b\in\mathcal{P}^{\lambda_0,\lambda_1,\lambda_2}$ is handled analogously as above.

\subsubsection{Proof of  part (c)}
First suppose $T = T_a\oplus T_b$ where $T_a\in \hat{\mathcal{P}}^{\lambda_0,\lambda_1,\lambda_2}_{A}$ and $T_b\in\hat{\mathcal{P}}^{\lambda_0,\lambda_1,\lambda_2}_{B}$. Then for any $\lambda_0 \geq \lambda_1 \geq \lambda_2 \geq 0 $,
\begin{align*}
    & \max_{(R_0,R_1,R_2)\in \mathcal{C}(T)}\lambda_0 R_0+\lambda_1 R_1+\lambda_2 R_2   \nonumber 
        \\& \quad\stackrel{(a)}\leq\min_{\alpha \in [0,1]}  \max_{q} \lambda_0 H_2(q) + q SR_{OB,1p}^{\lambda_0,\lambda_1,\lambda_2,\alpha}(T_a) + \overline{q} \min\{SR_{OB,1m}^{\lambda_0,\lambda_1,\lambda_2,\alpha}(T_b),SR_{OB,1n}^{\lambda_0,\lambda_1,\lambda_2,\alpha}(T_b) \}\\
        & \quad\leq\min_{\alpha \in [0,1]}  \max_{q} \lambda_0 H_2(q) + q SR_{OB,1p}^{\lambda_0,\lambda_1,\lambda_2,\alpha}(T_a) + \overline{q} SR_{OB,1m}^{\lambda_0,\lambda_1,\lambda_2,\alpha}(T_b)\\
        &\quad = \min_{\alpha \in [0,1]}  \lambda_0 \log\left( 2^{\frac{1}{\lambda_0} SR_{OB,1p}^{\lambda_0,\lambda_1,\lambda_2,\alpha}(T_a)} + 2^{\frac{1}{\lambda_0} SR_{OB,1m}^{\lambda_0,\lambda_1,\lambda_2,\alpha}(T_b)}\right)\\
        &\quad\stackrel{(b)}\leq \min_{\alpha \in [0,1]}  \lambda_0 \log\left( 2^{\frac{1}{\lambda_0} SR_{IB,1}^{\lambda_0,\lambda_1,\lambda_2,\alpha}(T_a)} + 2^{\frac{1}{\lambda_0} SR_{IB,1}^{\lambda_0,\lambda_1,\lambda_2,\alpha}(T_b)}\right)\\
        &\quad =\min_{\alpha \in [0,1]}  \max_{q} \lambda_0 H_2(q) + q SR_{IB,1}^{\lambda_0,\lambda_1,\lambda_2,\alpha}(T_a) + \overline{q} SR_{IB,1}^{\lambda_0,\lambda_1,\lambda_2,\alpha}(T_b)\\
        &\quad \stackrel{(c)}=\min_{\alpha \in [0,1]}SR_{IB,1}^{\lambda_0,\lambda_1,\lambda_2,\alpha}(T)\\
        &\quad \stackrel{(d)}= \max_{(R_0,R_1,R_2)\in \mathcal{M}(T)}\lambda_0 R_0+\lambda_1 R_1+\lambda_2 R_2,
\end{align*}
where $(a)$ follows by \eqref{eqn:obwsr1}; $(b)$ holds by \eqref{eq:con3} and \eqref{eq:con5} as $T_a\in\hat{\mathcal{P}}^{\lambda_0,\lambda_1,\lambda_2}_{A}$ and $T_b\in\hat{\mathcal{P}}^{\lambda_0,\lambda_1,\lambda_2}_{B}$; $(c)$ results from \eqref{eqn:martonsumcharacterization1}. And $(d)$ results from Corollary~\ref{MartonCharacterizationSumChannel}.

Similarly, for any $\lambda_0 \geq \lambda_2 \geq \lambda_1 \geq 0 $,
\begin{align*}
    & \max_{(R_0,R_1,R_2)\in \mathcal{C}(T)}\lambda_0 R_0+\lambda_1 R_1+\lambda_2 R_2   \nonumber 
        \\& \quad\stackrel{(a)}\leq\min_{\alpha \in [0,1]}  \max_{q} \lambda_0 H_2(q)  + q \min\{SR_{OB,2m}^{\lambda_0,\lambda_1,\lambda_2,\alpha}(T_a),SR_{OB,2n}^{\lambda_0,\lambda_1,\lambda_2,\alpha}(T_a) \}+ \overline q SR_{OB,2p}^{\lambda_0,\lambda_1,\lambda_2,\alpha}(T_b)\\
        & \quad\leq\min_{\alpha \in [0,1]}  \max_{q} \lambda_0 H_2(q) + q SR_{OB,2m}^{\lambda_0,\lambda_1,\lambda_2,\alpha}(T_a) + \overline{q} SR_{OB,2p}^{\lambda_0,\lambda_1,\lambda_2,\alpha}(T_b)\\
        &\quad = \min_{\alpha \in [0,1]}  \lambda_0 \log\left( 2^{\frac{1}{\lambda_0} SR_{OB,2m}^{\lambda_0,\lambda_1,\lambda_2,\alpha}(T_a)} + 2^{\frac{1}{\lambda_0} SR_{OB,2p}^{\lambda_0,\lambda_1,\lambda_2,\alpha}(T_b)}\right)\\
        &\quad\stackrel{(b)}\leq \min_{\alpha \in [0,1]}  \lambda_0 \log\left( 2^{\frac{1}{\lambda_0} SR_{IB,2}^{\lambda_0,\lambda_1,\lambda_2,\alpha}(T_a)} + 2^{\frac{1}{\lambda_0} SR_{IB,2}^{\lambda_0,\lambda_1,\lambda_2,\alpha}(T_b)}\right)\\
        &\quad =\min_{\alpha \in [0,1]}  \max_{q} \lambda_0 H_2(q) + q SR_{IB,2}^{\lambda_0,\lambda_1,\lambda_2,\alpha}(T_a) + \overline{q} SR_{IB,2}^{\lambda_0,\lambda_1,\lambda_2,\alpha}(T_b)\\
        &\quad \stackrel{(c)}=\min_{\alpha \in [0,1]} SR_{IB,2}^{\lambda_0,\lambda_1,\lambda_2,\alpha}(T)\\
        &\quad \stackrel{(d)}=\max_{(R_0,R_1,R_2)\in \mathcal{M}(T)}\lambda_0 R_0+\lambda_1 R_1+\lambda_2 R_2,
\end{align*}
where $(a)$ follows by \eqref{eqn:obwsr4}; $(b)$ holds by \eqref{eq:con4} and \eqref{eq:con6} as $T_a\in\hat{\mathcal{P}}^{\lambda_0,\lambda_1,\lambda_2}_{A}$ and $T_b\in\hat{\mathcal{P}}^{\lambda_0,\lambda_1,\lambda_2}_{B}$; $(c)$ results from \eqref{eqn:martonsumcharacterization2}. And $(d)$ results from Corollary~\ref{MartonCharacterizationSumChannel}.

The case $T_a\in\hat{\mathcal{P}}^{\lambda_0,\lambda_1,\lambda_2}_{B},T_b\in\hat{\mathcal{P}}^{\lambda_0,\lambda_1,\lambda_2}_{A}$ is handled analogously as above.

Finally, since these  collections of hyperplanes characterize the convex capacity region $\mathcal{C}(T)$, this will complete the proof.

\section{Conclusion}
We investigated the sum channels of two-receiver broadcast channels. We defined two ``primary" classes of broadcast channels comprising previously studied classes, such as less-noisy, more-capable and semi-deterministic. Then, we established the capacity region for a class of sum-broadcast channels whose components are both ``primary". We utilized an outer bound using auxiliary receivers to fashion a converse. 

\bibliographystyle{amsplain}
\bibliography{mybiblio}
\appendices

\section{Proofs That Three Classes of Broadcast Channels Are Primary}\label{specialtogoodproof}
\subsection{More-Capable Broadcast Channels Belong to $\hat{\mathcal{P}}^{\lambda_0,\lambda_1,\lambda_2}$}\label{appendix:morecapable}

Suppose $Z\overset{M.C.}{\succeq} Y$. We first show that \eqref{eq:con3} holds. For any $\lambda_0 \geq \lambda_1 \geq \lambda_2 \geq 0 $, we have
    \begin{align*}
    &SR_{IB,1}^{\lambda_0,\lambda_1,\lambda_2,\alpha}(T) \\&\quad =\max_{p(w,u,v,x)}\alpha (\lambda_0-\lambda_1+\lambda_2)I(W;Y)+\overline\alpha (\lambda_0-\lambda_1+\lambda_2)I(W;Z)+(\lambda_1-\lambda_2)I(U,W;Y)\\
  &\qquad +\lambda_2(I(U;Y|W)+ I(V;Z|W)-I(U;V|W))\\
        &\quad \stackrel{(a)}\geq\max_{p(w,v,x)}\alpha (\lambda_0-\lambda_1+\lambda_2)I(W;Y)+\overline\alpha (\lambda_0-\lambda_1+\lambda_2)I(W;Z)+(\lambda_1-\lambda_2)I(W;Y)+ \lambda_2 I(V;Z|W)\\
    &\quad \stackrel{(b)}=\max_{p(w,x)}\alpha (\lambda_0-\lambda_1+\lambda_2)I(W;Y)+\overline\alpha (\lambda_0-\lambda_1+\lambda_2)I(W;Z)+(\lambda_1-\lambda_2)I(W;Y) + \lambda_2 I(X;Z|W)\\
    &\quad \stackrel{(c)}= \max_{p(w,u,x)}\alpha (\lambda_0-\lambda_1+\lambda_2)I(U,W;Y)+\overline\alpha (\lambda_0-\lambda_1+\lambda_2)I(U,W;Z)+(\lambda_1-\lambda_2)I(U,W;Y)\\
  &\qquad + \lambda_2 I(X;Z|U,W)\\
    &\quad \geq \max_{p(w,u,x)}\alpha (\lambda_0-\lambda_1+\lambda_2)I(W;Y)+\overline\alpha (\lambda_0-\lambda_1+\lambda_2)I(W;Z)\\
  &\qquad +(\lambda_1-\lambda_2)I(U,W;Y) + \lambda_2 (\alpha I(U;Y|W)+\overline\alpha I(U;Z|W)+I(X;Z|U,W))\\
  &\quad \stackrel{(d)}\geq \max_{p(w,u,x)}\alpha (\lambda_0-\lambda_1+\lambda_2)I(W;Y)+\overline\alpha (\lambda_0-\lambda_1+\lambda_2)I(W;Z)\\
  &\qquad +(\lambda_1-\lambda_2)I(U,W;Y) + \lambda_2\min\{I(U;Y|W)+ I(X;Z|U,W), ~I(X;Z|W)\}\\
  &\quad \stackrel{(e)}\geq \max_{p(w,u,v,x)}\alpha (\lambda_0-\lambda_1+\lambda_2)I(W;Y)+\overline\alpha (\lambda_0-\lambda_1+\lambda_2)I(W;Z)\\
  &\qquad +(\lambda_1-\lambda_2)I(U,W;Y) + \lambda_2\min\{I(U;Y|W)+ I(X;Z|U,W), ~I(V;Z|W)+I(X;Y|V,W)\}\\
    &\quad = SR_{OB,1p}^{\lambda_0,\lambda_1,\lambda_2,\alpha}(T).
\end{align*}
Step $(a)$ follows by setting $U$ to be a constant. Step $(b)$ holds since it is optimal to choose $V = X$. Step $(c)$ follows by replacing the auxiliary variable $W$ with the pair $(U,W)$. The inequality $(d)$ holds because \begin{align*}
    &\alpha I(U;Y|W)+\overline\alpha I(U;Z|W)+I(X;Z|U,W)\\
    =&\alpha(I(U;Y|W)+I(X;Z|U,W))+\overline \alpha(I(U;Z|W)+I(X;Z|U,W))\\
    \geq&\min\{I(U;Y|W)+I(X;Z|U,W),I(X;Z|W)\}.
\end{align*}
The more-capable property is applied in step $(e)$: for an arbitrary $V$ such that $V\mcchain X\mcchain (Y,Z)$ is Markov,
$$I(X;Z|W)= I(V;Z|W)+I(X;Z|V,W)\geq I(V;Z|W)+I(X;Y|V,W).$$
Next, we prove equation \eqref{eq:con4}. For any $\lambda_0 \geq \lambda_2 \geq \lambda_1 \geq 0 $, we have
\begin{align*}
    &SR_{IB,2}^{\lambda_0,\lambda_1,\lambda_2,\alpha}(T) \\
    &\quad=\max_{p(w,u,v,x)}\alpha (\lambda_0-\lambda_2+\lambda_1)I(W;Y)+\overline\alpha (\lambda_0-\lambda_2+\lambda_1)I(W;Z)+(\lambda_2-\lambda_1)I(V,W;Z)\\
  &\qquad+\lambda_1(I(U;Y|W)+ I(V;Z|W)-I(U;V|W))\\
  &\quad \stackrel{(a)}\geq \max_{p(w,v,x)}\alpha (\lambda_0-\lambda_2+\lambda_1)I(W;Y)+\overline\alpha (\lambda_0-\lambda_2+\lambda_1)I(W;Z)+(\lambda_2-\lambda_1)I(V,W;Z)+\lambda_1 I(V;Z|W)\\
  &\quad \stackrel{(b)}=\max_{p(w,x)}\alpha (\lambda_0-\lambda_2+\lambda_1)I(W;Y)+\overline\alpha (\lambda_0-\lambda_2+\lambda_1)I(W;Z)+(\lambda_2-\lambda_1)I(X;Z) +\lambda_1 I(X;Z|W)\\
  &\quad \stackrel{(c)}\geq \max_{p(w,v,x)}\alpha (\lambda_0-\lambda_2+\lambda_1)I(W;Y)+\overline\alpha (\lambda_0-\lambda_2+\lambda_1)I(W;Z)+(\lambda_2-\lambda_1)I(V,W;Z)\\
  &\qquad +\lambda_1 (I(V;Z|W)+I(X;Y|V,W))\\
  &\quad =SR_{OB,2m}^{\lambda_0,\lambda_1,\lambda_2,\alpha}(T).
\end{align*}
Step $(a)$ follows by setting $U$ to be a constant. Step $(b)$ holds since the maximum is achieved by $V=X$. For step $(c)$: for an arbitrary $V$ such that $V\mcchain X\mcchain (Y,Z)$ is Markov,
$I(X;Z)\geq I(V,W;Z)$. Moreover, the more-capable property is used here:
$$I(X;Z|W)= I(V;Z|W)+I(X;Z|V,W)\geq I(V;Z|W)+I(X;Y|V,W).$$
This completes the proof of $T\in\hat{\mathcal{P}}^{\lambda_0,\lambda_1,\lambda_2}_{A}$ when $Z\overset{M.C.}{\succeq} Y$.

We can prove $T\in\hat{\mathcal{P}}^{\lambda_0,\lambda_1,\lambda_2}_{B}$ when $Y\overset{M.C.}{\succeq} Z$ analogously by swapping $U$ with $V$ and $Y$ with $Z$ in the above proof argument.
\subsection{Less-Noisy Broadcast Channels Belong to $\mathcal{P}^{\lambda_0,\lambda_1,\lambda_2}\cap \hat{\mathcal{P}}^{\lambda_0,\lambda_1,\lambda_2}$}
We assume $Z\overset{L.N.}{\succeq} Y$. Since the less-noisy class is a subset of the more-capable class, we have $Z\overset{M.C.}{\succeq} Y$. Hence, by Appendix~\ref{appendix:morecapable}, for any $\lambda_0 \geq \lambda_2 \geq \lambda_1 \geq 0 $, we have 
$$SR_{IB,2}^{\lambda_0,\lambda_1,\lambda_2,\alpha}(T)\geq SR_{OB,2m}^{\lambda_0,\lambda_1,\lambda_2,\alpha}(T).$$

Moreover, for any $\lambda_0 \geq \lambda_1 \geq \lambda_2 \geq 0 $, we have
\begin{align*}
    &SR_{IB,1}^{\lambda_0,\lambda_1,\lambda_2,\alpha}(T)\\
    &\quad =\max_{p(w,u,v,x)} \alpha(\lambda_0-\lambda_1+\lambda_2) I(W;Y)+\overline\alpha (\lambda_0-\lambda_1+\lambda_2) I(W;Z)\\
    &\qquad +(\lambda_1-\lambda_2) I(U,W;Y)+\lambda_2\left(I(U;Y|W)+I(V;Z|W)-I(U;V|W)\right)\\
    &\quad \stackrel{(a)}{\geq}\max_{p(w,x)} \alpha(\lambda_0-\lambda_1+\lambda_2)I(W;Y)+\overline\alpha (\lambda_0-\lambda_1+\lambda_2)I(W;Z)\\
    &\qquad +(\lambda_1-\lambda_2)I(W;Y)+\lambda_2 I(X;Z|W)\\ 
    &\quad \stackrel{(b)}=\max_{p(w,u,x)} \alpha(\lambda_0-\lambda_1+\lambda_2)I(U,W;Y)+\overline\alpha (\lambda_0-\lambda_1+\lambda_2)I(U,W;Z)\\
    &\qquad +(\lambda_1-\lambda_2)I(U,W;Y)+\lambda_2 I(X;Z|U,W)\\
    &\quad\stackrel{(c)}\geq  \max_{p(w,u,v,x)} \alpha(\lambda_0-\lambda_1+\lambda_2) I(W;Y)+\overline\alpha (\lambda_0-\lambda_1+\lambda_2) I(W;Z)\\
    &\qquad +(\lambda_1-\lambda_2)I(U,W;Y) +\lambda_2\left(I(U;Y|W)+ I(X;Z|U,W)\right)\\
    &\quad =SR_{OB,1m}^{\lambda_0,\lambda_1,\lambda_2,\alpha}(T),
\end{align*}
where $(a)$ follows as we maximize over a smaller set of distributions where $V = X$ and $U$ is a constant random variable;  $(b)$ follows by substituting the auxiliary variable $W$ with the pair $(U,W)$; and $(c)$ uses the property of the less-noisy ordering such that $I(U;Z|W)\geq I(U;Y|W)$.

On the one hand, we conclude that less-noisy broadcast channels with $Z\overset{L.N.}{\succeq }Y$ belong to $\mathcal{P}^{\lambda_0,\lambda_1,\lambda_2}$, since
\begin{align*}
    &SR_{IB,1}^{\lambda_0,\lambda_1,\lambda_2,\alpha}(T)\geq SR_{OB,1m}^{\lambda_0,\lambda_1,\lambda_2,\alpha}(T)\geq \min\{SR_{OB,1m}^{\lambda_0,\lambda_1,\lambda_2,\alpha}(T),SR_{OB,1n}^{\lambda_0,\lambda_1,\lambda_2,\alpha}(T)\}, \quad \forall\lambda_0\geq\lambda_1\geq \lambda_2\geq 0,\\
    &SR_{IB,2}^{\lambda_0,\lambda_1,\lambda_2,\alpha}(T)\geq SR_{OB,2m}^{\lambda_0,\lambda_1,\lambda_2,\alpha}(T)\geq \min\{SR_{OB,2m}^{\lambda_0,\lambda_1,\lambda_2,\alpha}(T),SR_{OB,2n}^{\lambda_0,\lambda_1,\lambda_2,\alpha}(T)\}, \quad \forall\lambda_0\geq\lambda_2\geq \lambda_1\geq 0.
\end{align*}

On the other hand, we also conclude that less-noisy broadcast channels with $Z\overset{L.N.}{\succeq }Y$ belong to both $\hat{\mathcal{P}}^{\lambda_0,\lambda_1,\lambda_2}_{A}$ and $\hat{\mathcal{P}}^{\lambda_0,\lambda_1,\lambda_2}_{B}$, since
\begin{align*}
    &SR_{IB,1}^{\lambda_0,\lambda_1,\lambda_2,\alpha}(T)\geq SR_{OB,1m}^{\lambda_0,\lambda_1,\lambda_2,\alpha}(T) \stackrel{(a)}\geq SR_{OB,1p}^{\lambda_0,\lambda_1,\lambda_2,\alpha}(T), \quad \forall\lambda_0\geq\lambda_1\geq \lambda_2\geq 0,\\
    &SR_{IB,2}^{\lambda_0,\lambda_1,\lambda_2,\alpha}(T)\geq SR_{OB,2m}^{\lambda_0,\lambda_1,\lambda_2,\alpha}(T) \stackrel{(b)}\geq SR_{OB,2p}^{\lambda_0,\lambda_1,\lambda_2,\alpha}(T), \quad \forall\lambda_0\geq\lambda_2\geq \lambda_1\geq 0, 
\end{align*}
where $(a)$ and $(b)$ holds by Lemma~\ref{lem:srobcompare}.

The proof argument for less-noisy broadcast channels with $Y\overset{L.N.}{\succeq }Z$ follows analogously.

\subsection{Semi-Deterministic Broadcast Channels Belong to $\mathcal{P}^{\lambda_0,\lambda_1,\lambda_2}\cap\hat{\mathcal{P}}^{\lambda_0,\lambda_1,\lambda_2}$}
Assume $Z=f(X)$. For any $\lambda_0\geq\lambda_1\geq\lambda_2\geq 0$, we have
\begin{align*}
    &SR_{IB,1}^{\lambda_0,\lambda_1,\lambda_2,\alpha}(T)\\
    &\quad =\max_{p(w,u,v,x)} \alpha (\lambda_0-\lambda_1+\lambda_2)I(W;Y)+\overline\alpha (\lambda_0-\lambda_1+\lambda_2)I(W;Z)\\
    &\qquad +(\lambda_1-\lambda_2)I(U,W;Y)+\lambda_2(I(U;Y|W)+ I(V;Z|W)-I(U;V|W))\\
    &\quad \stackrel{(a)}{\geq}\max_{p(w,u,x)} \alpha (\lambda_0-\lambda_1+\lambda_2)I(W;Y)+\overline\alpha (\lambda_0-\lambda_1+\lambda_2)I(W;Z)\\
    &\qquad +(\lambda_1-\lambda_2)I(U,W;Y)+\lambda_2(I(U;Y|W)+ H(Z|U,W))\\
    &\quad =\max_{p(w,u,x)} \alpha (\lambda_0-\lambda_1+\lambda_2)I(W;Y)+\overline\alpha (\lambda_0-\lambda_1+\lambda_2)I(W;Z)\\
    &\qquad +(\lambda_1-\lambda_2)I(U,W;Y)+\lambda_2(I(U;Y|W)+ I(X;Z|U,W))\\
    &\quad=SR_{OB,1m}^{\lambda_0,\lambda_1,\lambda_2,\alpha}(T),
\end{align*}
where (a) holds by choosing $V=Z$.

Similarly, for any $\lambda_0\geq\lambda_2\geq\lambda_1\geq 0$, we have
\begin{align*}
    &SR_{IB,2}^{\lambda_0,\lambda_1,\lambda_2,\alpha}(T)\\
    &\quad =\max_{p(w,u,v,x)} \alpha (\lambda_0-\lambda_2+\lambda_1)I(W;Y)+\overline\alpha (\lambda_0-\lambda_2+\lambda_1)I(W;Z)\\
    &\qquad +(\lambda_2-\lambda_1)I(V,W;Z)+\lambda_1(I(U;Y|W)+ I(V;Z|W)-I(U;V|W))\\
    &\quad \stackrel{(a)}{\geq}\max_{p(w,u,v,x)} \alpha (\lambda_0-\lambda_2+\lambda_1)I(W;Y)+\overline\alpha (\lambda_0-\lambda_2+\lambda_1)I(W;Z)\\
    &\qquad +(\lambda_2-\lambda_1)H(Z)+\lambda_1(I(U;Y|W)+ H(Z|W)-I(U;Z|W))\\
    &\quad =\max_{p(w,u,x)} \alpha (\lambda_0-\lambda_2+\lambda_1)I(W;Y)+\overline\alpha (\lambda_0-\lambda_2+\lambda_1)I(W;Z)\\
    &\qquad +(\lambda_2-\lambda_1)H(Z)+\lambda_1(I(U;Y|W)+ H(Z|U,W))\\
    &\quad \geq \max_{p(w,u,x)} \alpha (\lambda_0-\lambda_2+\lambda_1)I(W;Y)+\overline\alpha (\lambda_0-\lambda_2+\lambda_1)I(W;Z)\\
    &\qquad +(\lambda_2-\lambda_1)I(V,W;Z)+\lambda_1(I(U;Y|W)+ I(X;Z|U,W))\\
    &\quad= SR_{OB,2n}^{\lambda_0,\lambda_1,\lambda_2,\alpha}(T).
\end{align*}
where (a) holds by choosing $V=Z$.

On one hand, we can conclude that semi-deterministic broadcast channels with $Z=f(X)$ for some function $f$ belong to $\mathcal{P}^{\lambda_0,\lambda_1,\lambda_2}$, since 
\begin{align*}
    &SR_{IB,1}^{\lambda_0,\lambda_1,\lambda_2,\alpha}(T)\geq SR_{OB,1m}^{\lambda_0,\lambda_1,\lambda_2,\alpha}(T)\geq \min\{SR_{OB,1m}^{\lambda_0,\lambda_1,\lambda_2,\alpha}(T),SR_{OB,1n}^{\lambda_0,\lambda_1,\lambda_2,\alpha}(T)\},\quad\forall\lambda_0\geq\lambda_1\geq \lambda_2\geq 0,\\
    &SR_{IB,2}^{\lambda_0,\lambda_1,\lambda_2,\alpha}(T)\geq SR_{OB,2n}^{\lambda_0,\lambda_1,\lambda_2,\alpha}(T)\geq \min\{SR_{OB,2m}^{\lambda_0,\lambda_1,\lambda_2,\alpha}(T),SR_{OB,2n}^{\lambda_0,\lambda_1,\lambda_2,\alpha}(T)\},\quad\forall\lambda_0\geq\lambda_2\geq \lambda_1\geq 0.
\end{align*}

On the other hand, we can also conclude that semi-deterministic broadcast channels with $Z=f(X)$ for some function $f$ belong to $\hat{\mathcal{P}}^{\lambda_0,\lambda_1,\lambda_2}_{B}$, since
\begin{align*}
    &SR_{IB,2}^{\lambda_0,\lambda_1,\lambda_2,\alpha}(T)\geq SR_{OB,2n}^{\lambda_0,\lambda_1,\lambda_2,\alpha}(T)\stackrel{(a)}\geq \min\{SR_{OB,2m}^{\lambda_0,\lambda_1,\lambda_2,\alpha}(T),SR_{OB,2n}^{\lambda_0,\lambda_1,\lambda_2,\alpha}(T)\},\quad\forall\lambda_0\geq\lambda_2\geq \lambda_1\geq 0,
\end{align*}
where $(a)$ follows by Lemma~\ref{lem:srobcompare}.

The proof argument for semi-deterministic broadcast channels with $Y=f(X)$ for some function $f$ follows analogously. They belong to $\mathcal{P}^{\lambda_0,\lambda_1,\lambda_2}$ and $\hat{\mathcal{P}}^{\lambda_0,\lambda_1,\lambda_2}_{A}$.

\subsection{Evaluation of \(SR_{IB,1}^{\lambda_0,\lambda_1,\lambda_2,\alpha}(T)\), \(SR_{OB,1m}^{\lambda_0,\lambda_1,\lambda_2,\alpha}(T)\), and \(SR_{OB,1n}^{\lambda_0,\lambda_1,\lambda_2,\alpha}(T)\) for Binary Input Broadcast Channels}
\label{sec:characti1}

In this section, we evaluate the sum-rate bounds \(SR_{IB,1}^{\lambda_0,\lambda_1,\lambda_2,\alpha}(T)\), \(SR_{OB,1m}^{\lambda_0,\lambda_1,\lambda_2,\alpha}(T)\), and \(SR_{OB,1n}^{\lambda_0,\lambda_1,\lambda_2,\alpha}(T)\) specifically for binary input broadcast channels. 

To facilitate our analysis, we utilize the upper concave envelope operator, denoted by \(\mathcal{C}_{p_X}[\cdot]\). This operator maps a function of the input distribution \(p_X\) to its upper concave envelope, providing a tractable formulation and bypasses the need for an explicit auxiliary variable representation, as discussed in \cite{nair2013upper}.

Take some $\lambda_0 \geq \lambda_1 \geq \lambda_2\geq 0$. For Marton's inner bound, the sum-rate can be expressed and simplified as follows: 
\begin{align*}
    SR_{IB,1}^{\lambda_0,\lambda_1,\lambda_2,\alpha}(T) &= \max_{p(w,u,v,x)} \Big[ \alpha (\lambda_0-\lambda_1+\lambda_2)I(W;Y)+\bar\alpha (\lambda_0-\lambda_1+\lambda_2)I(W;Z)\\
    &\quad +(\lambda_1-\lambda_2)I(U,W;Y)+\lambda_2\big(I(U;Y|W)+ I(V;Z|W)-I(U;V|W)\big) \Big]\\
    &= \max_{p_X} \Big[ (\lambda_0\alpha +(\lambda_1-\lambda_2)\bar \alpha)I(X;Y)+\bar\alpha (\lambda_0-\lambda_1+\lambda_2)I(X;Z)\\
    &\quad + \mathcal{C}_{p_X}\Big[-(\lambda_0\alpha +(\lambda_1-\lambda_2)\bar \alpha)I(X;Y)-\bar\alpha (\lambda_0-\lambda_1+\lambda_2)I(X;Z)\\
    &\qquad\quad +\max_{p_{U,V|X}}\big\{\lambda_1I(U;Y)+\lambda_2 I(V;Z)- \lambda_2 I(U;V)\big\}\Big] \Big]\\
    &\stackrel{(a)}{=} \max_{p_X} \Big[ (\lambda_0\alpha +(\lambda_1-\lambda_2)\bar \alpha)I(X;Y)+\bar\alpha (\lambda_0-\lambda_1+\lambda_2)I(X;Z)\\
    &\quad + \mathcal{C}_{p_X}\Big[-(\lambda_0\alpha +(\lambda_1-\lambda_2)\bar \alpha)I(X;Y)-\bar\alpha (\lambda_0-\lambda_1+\lambda_2)I(X;Z)\\
    &\qquad\quad +\max\big\{\lambda_1I(X;Y),\lambda_2 I(X;Z)\big\}\Big] \Big].
\end{align*}
Here, step \((a)\) follows directly from Theorem 3 in \cite{agn19} for $\lambda_0 \geq \lambda_1 \geq \lambda_2\geq 0$, which  simplifies the maximization over the auxiliary random variables \(U\) and \(V\). 

Similarly, by applying the concave envelope operator, one can directly verify the expression for the first outer bound:
\begin{align*}
    SR_{OB,1m}^{\lambda_0,\lambda_1,\lambda_2,\alpha}(T) &= \max_{p(w,u,x)} \Big[ \alpha (\lambda_0-\lambda_1+\lambda_2)I(W;Y)+\bar\alpha (\lambda_0-\lambda_1+\lambda_2)I(W;Z)\\
    &\quad +(\lambda_1-\lambda_2)I(U,W;Y) +\lambda_2(I(U;Y|W)+ I(X;Z|U,W)) \Big]\\
    &= \max_{p_X} \Big[ (\lambda_0\alpha +(\lambda_1-\lambda_2)\bar \alpha)I(X;Y)+\bar\alpha (\lambda_0-\lambda_1+\lambda_2)I(X;Z)\\
    &\quad + \mathcal{C}_{p_X}\Big[-(\lambda_0\alpha +(\lambda_1-\lambda_2)\bar \alpha)I(X;Y)-\bar\alpha (\lambda_0-\lambda_1+\lambda_2)I(X;Z)\\
    &\qquad\quad + \lambda_1 I(X;Y) + \mathcal{C}_{p_X}[ \lambda_2 I(X;Z) - \lambda_1 I(X;Y)]\Big] \Big].
\end{align*}

Following a parallel derivation, the second outer bound is given by:
\begin{align*}
    SR_{OB,1n}^{\lambda_0,\lambda_1,\lambda_2,\alpha}(T) &= \max_{p(w,v,x)} \Big[ \alpha (\lambda_0-\lambda_1+\lambda_2)I(W;Y)+\bar\alpha (\lambda_0-\lambda_1+\lambda_2)I(W;Z)\\
    &\quad +(\lambda_1-\lambda_2)I(X;Y) +\lambda_2(I(V;Z|W)+ I(X;Y|V,W)) \Big]\\ 
    &= \max_{p_X} \Big[ (\lambda_0\alpha +(\lambda_1-\lambda_2)\bar \alpha)I(X;Y)+\bar\alpha (\lambda_0-\lambda_1+\lambda_2)I(X;Z)\\
    &\quad + \mathcal{C}_{p_X}\Big[-\alpha (\lambda_0-\lambda_1+\lambda_2)I(X;Y)-\bar\alpha (\lambda_0-\lambda_1+\lambda_2)I(X;Z) + \lambda_2 I(X;Z)\\
    &\qquad\quad + \lambda_2 \mathcal{C}_{p_X}[  I(X;Y) -  I(X;Z)]\Big] \Big].
\end{align*}  

\subsection{A More-Capable Channel Outside the Primary Class \(\mathcal{P}^{\lambda_0,\lambda_1,\lambda_2}\)}
\label{sec:appndAE}

To show the existence of a more-capable channel outside the primary class \(\mathcal{P}^{\lambda_0,\lambda_1,\lambda_2}\), consider a broadcast channel where \(T_{Z|X} \sim \text{BEC}(\epsilon)\) and \(T_{Y|X} \sim \text{BSC}(p)\), coupled with the condition \(\epsilon=H_2(p)\). Under these parameters, it is a well-known result that \(Z\) is more-capable than \(Y\), denoted as \(Z\overset{M.C.}{\succeq }Y\) \cite{nair2009capacity}. Let \(\lambda_0=\lambda_1 = 1+\delta\) and \(\lambda_2=1\) for some \(\delta>0\). 

We use the characterizations given in Appendix~\ref{sec:characti1}.
    We begin by evaluating the inner bound:
    \begin{align*}
        SR_{IB,1}^{\lambda_0,\lambda_1,\lambda_2,\alpha}(T) &= \max_{p_X} \Big[ (\lambda_0\alpha +(\lambda_1-\lambda_2)\bar \alpha)I(X;Y)+\bar\alpha (\lambda_0-\lambda_1+\lambda_2)I(X;Z)\\
        &\quad + \mathcal{C}_{p_X}\Big[-(\lambda_0\alpha +(\lambda_1-\lambda_2)\bar \alpha)I(X;Y)-\bar\alpha (\lambda_0-\lambda_1+\lambda_2)I(X;Z)\\
        &\qquad\quad +\max\big\{\lambda_1I(X;Y),\lambda_2 I(X;Z)\big\}\Big] \Big].
    \end{align*}
    Due to the inherent symmetries of both the BEC and BSC channels, the linear combination \((\lambda_0\alpha +(\lambda_1-\lambda_2)\bar \alpha)I(X;Y)+\bar\alpha (\lambda_0-\lambda_1+\lambda_2)I(X;Z)\) is maximized when the input \(X\) follows a uniform distribution. 
    
    Furthermore, as established in \cite{nair2013upper}, the upper concave envelope of a symmetric function over these channels is also maximized at the uniform input distribution. Specifically, the envelope term:
    \begin{align*}
        &\mathcal{C}_{p_X}\Big[-(\lambda_0\alpha +(\lambda_1-\lambda_2)\bar \alpha)I(X;Y)-\bar\alpha (\lambda_0-\lambda_1+\lambda_2)I(X;Z) +\max\big\{\lambda_1I(X;Y),\lambda_2 I(X;Z)\big\}\Big]
    \end{align*}
    evaluated at the uniform distribution is simply equal to its maximum value over all \(p_X\):
    \begin{align*}
        &\max_{p_X}\Big[-(\lambda_0\alpha +(\lambda_1-\lambda_2)\bar \alpha)I(X;Y)-\bar\alpha (\lambda_0-\lambda_1+\lambda_2)I(X;Z) +\max\big\{\lambda_1I(X;Y),\lambda_2 I(X;Z)\big\}\Big].
    \end{align*}
    Substituting \(\lambda_0=\lambda_1 = 1+\delta\) and \(\lambda_2=1\), and letting \(C\) denote the capacity achieved at the uniform distribution, we find:
    \begin{align*}
        SR_{IB,1}^{\lambda_0,\lambda_1,\lambda_2,\alpha}(T) &= (1+\delta)C + \max_{p_X}\Big[-((1+\delta)\alpha +\delta \bar \alpha)I(X;Y)-\bar\alpha I(X;Z)+\max\big\{(1+\delta)I(X;Y), I(X;Z)\big\}\Big]\\
        &= (1+\delta)C + \max_{p_X} \max\{\bar \alpha(I(X;Y)-I(X;Z)), \alpha I(X;Z) - (\alpha + \delta)I(X;Y) \}\\
        &= (1+\delta)C + \max_{p_X} \{ \alpha I(X;Z) - (\alpha + \delta)I(X;Y) \}
    \end{align*}
where the last equality holds because the first term $\bar \alpha(I(X;Y)-I(X;Z))\leq 0$ for all $p_X$ as \(Z\overset{M.C.}{\succeq }Y\), while 
$$\max_{p_X} \max\{\bar \alpha(I(X;Y)-I(X;Z)), \alpha I(X;Z) - (\alpha + \delta)I(X;Y) \}\geq 0$$
as constant $X$ gives value $0$.

    Next, we evaluate the first outer bound, \(SR_{OB,1m}^{\lambda_0,\lambda_1,\lambda_2,\alpha}(T)\). By the same symmetry arguments, the maximizing input distribution is uniform. The expression simplifies as follows:
    \begin{align*}
        SR_{OB,1m}^{\lambda_0,\lambda_1,\lambda_2,\alpha}(T) &= (1+\delta)C + \max_{p_X}\Big[-(\alpha+\delta)I(X;Y)-\bar\alpha I(X;Z)+ (1+\delta) I(X;Y) 
        + \mathcal{C}_{p_X}[  I(X;Z) - (1+\delta) I(X;Y)]\Big] \\
        &= (1+\delta)C + \max_{p_X}\Big\{\bar \alpha(I(X;Y)-I(X;Z)) + \mathcal{C}_{p_X}[  I(X;Z) - (1+\delta) I(X;Y)]  \Big\}\\
        &= (1+\delta)C  + \max_{p_X} \Big\{I(X;Z) - (1+\delta) I(X;Y) \Big\}.
    \end{align*}
    The final equality holds because of two key observations at the uniform distribution:
    (i) \(I(X;Y)-I(X;Z) \leq 0\), with equality holding for uniform \(X\).
    (ii) The concave envelope satisfies the upper bound:
    \[
        \mathcal{C}_{p_X}[I(X;Z) - (1+\delta) I(X;Y)] \leq \max_{p_X} \{I(X;Z) - (1+\delta) I(X;Y)\}
    \]
    and achieves this maximum at uniform \(X\). Therefore, the entire expression inside the maximization is maximized at the uniform input distribution.

    Finally, we evaluate the second outer bound, \(SR_{OB,1n}^{\lambda_0,\lambda_1,\lambda_2,\alpha}(T)\). Applying similar symmetry and envelope arguments, we obtain:
    \begin{align*}
        SR_{OB,1n}^{\lambda_0,\lambda_1,\lambda_2,\alpha}(T) &= (1+\delta)C + \max_{p_X}\Big\{\alpha (I(X;Z)-I(X;Y)) +  \mathcal{C}_{p_X}[  I(X;Y) -  I(X;Z)]\Big\}\\
        &= (1+\delta)C + \alpha\max_{p_X} \Big\{I(X;Z)-I(X;Y)\Big\},
    \end{align*}  
    where the last step holds because $I(X;Y)-I(X;Z)\leq 0$ for all $p_X$, achieving equality when $X$ is a constant random variable. Therefore,
    $$\mathcal{C}_{p_X}[  I(X;Y) -  I(X;Z)]=0, \qquad\forall p(x).$$
    To concretely demonstrate the separation between these bounds, consider a numerical instantiation where we fix \(p = 0.2\), \(\alpha = 0.5\), and \(\delta=0.1\). This yields:
    \begin{align*}
        C         &= 0.27807191,\\
        \epsilon  &= H_2(p) = 0.72192809,\\
        SR_{IB,1}^{\lambda_0,\lambda_1,\lambda_2,\alpha}   &= 0.31228936,\\
        SR_{OB,1m}^{\lambda_0,\lambda_1,\lambda_2,\alpha}  &= 0.32392242,\\
        SR_{OB,1n}^{\lambda_0,\lambda_1,\lambda_2,\alpha}  &= 0.31875207.
    \end{align*}
    Observe that both outer bounds, \(SR_{OB,1m}\) and \(SR_{OB,1n}\), are strictly greater than the inner bound \(SR_{IB,1}\). Because the bounds do not coincide, we conclude that this more-capable channel does not belong to the primary class \(\mathcal{P}^{\lambda_0,\lambda_1,\lambda_2}\).

\subsection{MIMO Gaussian Channels}
\label{sec:appndAF}

In the MIMO Gaussian setting, the channel outputs are given by $\bm{Y}=\bm{A}\bm{X}+\bm{G}_1$ and $\bm{Z}=\bm{B}\bm{X} + \bm{G}_2$, where the noise vectors $\bm{G}_1,\bm{G}_2 \sim \mathcal{N}(\bm{0},\bm{I})$. 

Geng and Nair \cite{gn14} showed that the maximum of the outer bound expression, subject to the covariance constraint $\mathbb{E}[\bm{X}\bm{X}^T] \preceq \bm{K}$,
\begin{align*}
    SR_{OB,1m}^{\lambda_0,\lambda_1,\lambda_2,\alpha}(T) =& \max_{p(w,u,\bm{x}): \mathbb{E}[\bm{X}\bm{X}^T] \preceq \bm{K}} \Big[ \alpha (\lambda_0-\lambda_1+\lambda_2)I(W;\bm{Y})+\bar\alpha (\lambda_0-\lambda_1+\lambda_2)I(W;\bm{Z})\\
    &\quad +(\lambda_1-\lambda_2)I(U,W;\bm{Y})+\lambda_2\big(I(U;\bm{Y}|W)+ I(\bm{X};\bm{Z}|U,W)\big) \Big]
\end{align*}
is attained by jointly Gaussian random variables. This is established using the doubling-followed-by-rotation argument. 

Furthermore, by Costa's Dirty Paper Coding (DPC) \cite{costa1983writing} principle, there exists a Gaussian auxiliary random variable $V$ correlated with $(U,\bm{X})$ such that:
\begin{align*}
    I(\bm{X};\bm{Z}|U,W) = I(V;\bm{Z}|W) - I(V;U|W).
\end{align*}

Substituting this identity into the expression for $SR_{OB,1m}^{\lambda_0,\lambda_1,\lambda_2,\alpha}(T)$ and relaxing the optimization domain to include $V$, we obtain:
\begin{align*}
    SR_{OB,1m}^{\lambda_0,\lambda_1,\lambda_2,\alpha}(T) =& \max_{p(w,u,\bm{x}): \mathbb{E}[\bm{X}\bm{X}^T] \preceq \bm{K}} \Big[ \alpha (\lambda_0-\lambda_1+\lambda_2)I(W;\bm{Y})+\bar\alpha (\lambda_0-\lambda_1+\lambda_2)I(W;\bm{Z})\\
    &\quad +(\lambda_1-\lambda_2)I(U,W;\bm{Y})+\lambda_2\big(I(U;\bm{Y}|W)+ I(\bm{X};\bm{Z}|U,W)\big) \Big]\\
    \leq & \max_{p(w,u,v,\bm{x}): \mathbb{E}[\bm{X}\bm{X}^T] \preceq \bm{K}} \Big[ \alpha (\lambda_0-\lambda_1+\lambda_2)I(W;\bm{Y})+\bar\alpha (\lambda_0-\lambda_1+\lambda_2)I(W;\bm{Z})\\
    &\quad +(\lambda_1-\lambda_2)I(U,W;\bm{Y})+\lambda_2\big(I(U;\bm{Y}|W)+ I(V;\bm{Z}|W)-I(V;U|W)\big) \Big]\\
    =& SR_{IB,1}^{\lambda_0,\lambda_1,\lambda_2,\alpha}(T).
\end{align*}
 This shows that $SR_{IB,1}^{\lambda_0,\lambda_1,\lambda_2,\alpha}(T)\geq SR_{OB,1m}^{\lambda_0,\lambda_1,\lambda_2,\alpha}(T)$ for all $\alpha\in[0,1]$, i.e., condition \eqref{eq:con1} holds. Condition \eqref{eq:con2} follows analogously by swapping the roles of $U$ and $V$ in the above argument. Together, these imply that MIMO Gaussian broadcast channels belong to the primary class $\mathcal{P}^{\lambda_0,\lambda_1,\lambda_2}$.

\subsection{Detailed Derivation for Outer Bound $\mathcal{O}_1$}\label{o1derivation}
To derive $\mathcal{O}_1$, we choose $T_{G,K|X,Y,Z}$ as $G = (Q,\tilde G),K=(Q,\tilde K)$, where $\tilde G = \emptyset$ and $\tilde K = Z_a$ when $Q = a$; while $\tilde G = Y_b$ and $\tilde K = \emptyset$  when $Q = b$.

The constraint \eqref{eqnSUM1a} becomes:
\begin{align*}
    R_0\leq &\min\{I(W^\dagger;G)+I(W;Y|G),~I(W^\dagger;Z|K)+I(W;K)\}\\
    =&\min\{I(W^\dagger;Q,\tilde G)+I(W;Y|Q,\tilde G),~I(W^\dagger;Z|Q,\tilde K)+I(W;Q,\tilde K)\}\\
    =&\min\{I(W^\dagger;Q)+I(W^\dagger;\tilde G|Q)+I(W;Y|Q,\tilde G),~I(W;Q)+I(W^\dagger;Z|Q,\tilde K)+I(W;\tilde K|Q)\}\\
    =&\min\{I(W^\dagger;Q)+\overline q I(W^\dagger;Y_b|Q=b)+q I(W;Y_a|Q=a),~I(W;Q)+\overline q I(W^\dagger;Z_b|Q=b)+q I(W;Z_a|Q=a)\}\\
    \leq&\min\{\overline q I(W^\dagger;Y_b|Q=b)+q I(W;Y_a|Q=a),~\overline q I(W^\dagger;Z_b|Q=b)+q I(W;Z_a|Q=a)\}+H_2(q).
\end{align*}
The constraint \eqref{eqnSUM2a} becomes:
\begin{align*}
    R_0 + R_1 & \leq I(U^\dagger,W^\dagger; G) + I(U,W; Y|G)\\
    &=I(U^\dagger,W^\dagger; Q,\tilde G) + I(U,W; Y|Q,\tilde G)\\
    &=I(U^\dagger,W^\dagger; Q)+I(U^\dagger,W^\dagger; \tilde G|Q) + I(U,W; Y|Q,\tilde G)\\
    &=I(U^\dagger,W^\dagger; Q)+\overline q I(U^\dagger,W^\dagger; Y_b|Q=b) + q I(U,W; Y_a|Q=a)\\
    &\leq H( Q)+\overline q I(U^\dagger,W^\dagger; Y_b|Q=b) + q I(U,W; Y_a|Q=a).
\end{align*}
The constraint \eqref{eqnSUM5a} becomes:
\begin{align*}
    R_0 + R_2 & \leq 
I(V,W; K) + I(V^\dagger,W^\dagger; Z|K)\\
    &=I(V,W; Q,\tilde K) + I(V^\dagger,W^\dagger; Z|Q, \tilde K)\\
    &=I(V,W; Q)+I(V,W;\tilde K|Q) + I(V^\dagger,W^\dagger; Z|Q, \tilde K)\\
    &=I(V,W; Q)+q I(V,W;Z_a|Q=a) + \overline q I(V^\dagger,W^\dagger; Z_b|Q=b)\\
    &\leq H(Q)+q I(V,W;Z_a|Q=a) + \overline q I(V^\dagger,W^\dagger; Z_b|Q=b).
\end{align*}
The constraint \eqref{eqnSUM6a} becomes:
\begin{align*}
    R_0 + R_1 + R_2 & \leq \min \{ I(W^\dagger, K; G) + I(W; Y|G),~ I(W^\dagger; Z|K) + I(W, G; K) \} \\
&\quad+ I(U;Y|W, G) + I(X;K|U, W, G) \\
& \quad + \min \big\{ I(U^\dagger;G|W^\dagger, K) + I(X; Z|U^\dagger, W^\dagger, K),~ I(V^\dagger;Z|W^\dagger, K) + I(X; G|V^\dagger, W^\dagger, K) \big\}\\
    &=\min \{ I(W^\dagger, Q,\tilde K; Q,\tilde G) + I(W; Y|Q,\tilde G),~ I(W^\dagger; Z|Q,\tilde K) + I(W, Q,\tilde G; Q,\tilde K) \} \\
&\quad+ I(U;Y|W, Q,\tilde G) + I(X;Q,\tilde K|U, W, Q,\tilde G) \\
& \quad + \min \big\{ I(U^\dagger;Q,\tilde G|W^\dagger, Q,\tilde K) + I(X; Z|U^\dagger, W^\dagger, Q,\tilde K),~ I(V^\dagger;Z|W^\dagger, Q,\tilde K) + I(X; Q,\tilde G|V^\dagger, W^\dagger, Q,\tilde K) \big\}\\
&=H(Q)+\min \{ I(W^\dagger,\tilde K; \tilde G|Q) + I(W; Y|Q,\tilde G),~ I(W^\dagger; Z|Q,\tilde K) + I(W, \tilde G; \tilde K|Q) \} \\
&\quad+ I(U;Y|W, Q,\tilde G) + I(X;\tilde K|U, W, Q,\tilde G) \\
& \quad + \min \big\{ I(U^\dagger;\tilde G|W^\dagger, Q,\tilde K) + I(X; Z|U^\dagger, W^\dagger, Q,\tilde K),~ I(V^\dagger;Z|W^\dagger, Q,\tilde K) + I(X; \tilde G|V^\dagger, W^\dagger, Q,\tilde K) \big\}\\
&=H_2(q)+\min \{ \overline q I(W^\dagger; Y_b|Q=b) + q I(W; Y_a|Q=a),~ \overline q I(W^\dagger; Z_b|Q=b) + q I(W; Z_a|Q=a) \} \\
&\quad+ q (I(U;Y_a|W, Q=a) + I(X_a;Z_a|U, W, Q=a)) \\
& \quad +\overline q \min \big\{ I(U^\dagger;Y_b|W^\dagger, Q=b) + I(X_b; Z_b|U^\dagger, W^\dagger, Q=b),~ I(V^\dagger;Z_b|W^\dagger, Q=b) + I(X_b; Y_b|V^\dagger, W^\dagger, Q=b) \big\}.
\end{align*}
The constraint \eqref{eqnSUM7a} becomes:
\begin{align*}
    R_0 + R_1 + R_2 & \leq \min \{ I(W^\dagger, K; G) + I(W; Y|G),~ I(W^\dagger; Z|K) + I(W, G; K) \} \\
&\quad + I(V^\dagger;Z|W^\dagger, K) + I(X;G|V^\dagger, W^\dagger, K) \\&\quad+ \min \big\{ I(U;Y|W, G) + I(X; K|U, W, G),~I(V;K|W, G) + I(X; Y|V, W, G) \big\}\\
    &=\min \{ I(W^\dagger, Q,\tilde K; Q,\tilde G) + I(W; Y|Q,\tilde G),~ I(W^\dagger; Z|Q,\tilde K) + I(W, Q,\tilde G;Q,\tilde K) \} \\
&\quad + I(V^\dagger;Z|W^\dagger, Q,\tilde K) + I(X;Q,\tilde G|V^\dagger, W^\dagger, Q,\tilde K) \\&\quad+ \min \big\{ I(U;Y|W, Q,\tilde G) + I(X; Q,\tilde K|U, W, Q,\tilde G),~I(V;Q,\tilde K|W,Q,\tilde G) + I(X; Y|V, W,Q,\tilde G) \big\}\\
&=H(Q)+\min \{ I(W^\dagger,\tilde K;\tilde G|Q) + I(W; Y|Q,\tilde G),~ I(W^\dagger; Z|Q,\tilde K) + I(W,\tilde G;\tilde K|Q) \} \\
&\quad + I(V^\dagger;Z|W^\dagger, Q,\tilde K) + I(X;\tilde G|V^\dagger, W^\dagger, Q,\tilde K) \\&\quad+ \min \big\{ I(U;Y|W, Q,\tilde G) + I(X;\tilde K|U, W, Q,\tilde G),~I(V;\tilde K|W, Q,\tilde G) + I(X; Y|V, W,Q,\tilde G) \big\}\\
&=H_2(q)+\min \{ \overline q I(W^\dagger;Y_b|Q=b) + q I(W; Y_a|Q=a),~ \overline q I(W^\dagger; Z_b|Q=b) +q I(W;Z_a|Q=a) \} \\
&\quad +\overline q( I(V^\dagger;Z_b|W^\dagger, Q=b) + I(X_b;Y_b|V^\dagger, W^\dagger, Q=b)) \\
&\quad+ q\min \big\{ I(U;Y_a|W, Q=a) + I(X_a;Z_a|U, W, Q=a),~I(V;Z_a|W, Q=a) + I(X_a; Y_a|V, W,Q=a) \big\}.
\end{align*}
The rate region defined by the above five constraints is equivalent to $\mathcal{O}_1$ by considering the distributions 
$$(\tilde{X}_a,\tilde{Y}_a, \tilde{Z}_a, U_a,V_a,W_a)\sim p(x_a,y_a,z_a,u,v,w|Q=a)$$
and
$$(\tilde{X}_b,\tilde{Y}_b, \tilde{Z}_b, U^\dagger_b,V^\dagger_b,W^\dagger_b)\sim p(x_b,y_b,z_b,u^\dagger,v^\dagger,w^\dagger|Q=b)$$
in the definition of $\mathcal{O}_1$.
\end{document}